\title{A Graph-Theoretical Perspective on Law Design for Multiagent Systems}
\author {
    Qi Shi,
    Pavel Naumov
}
\begin{document}

\maketitle

\begin{abstract}
A law in a multiagent system is a set of constraints imposed on agents' behaviours to avoid undesirable outcomes. The paper considers two types of laws: useful laws that, if followed, completely eliminate the undesirable outcomes and gap-free laws that guarantee that at least one agent can be held responsible each time an undesirable outcome occurs. 
In both cases, we study the problem of finding a law that achieves the desired result by imposing the minimum restrictions.

We prove that, for both types of laws, the minimisation problem is NP-hard even in the simple case of one-shot concurrent interactions. We also show that the approximation algorithm for the vertex cover problem in hypergraphs could be used to efficiently approximate the minimum laws in both cases.

\end{abstract}



\section{Introduction}\label{sec:introduction}

Suppose that three factories $a$, $b$, and $c$ need to dump the same type of pollutant into a river. Each factory must dump once every three days. The assimilative capacity of the river endures at most two factories dumping per day. Otherwise, the fish in the river would be killed.
To avoid the death of the fish, the local government would like to set a law that regulates the dumping activity.
Building on the well-known legal maxim ``everything which is not forbidden is allowed'', we assume that \textit{a law serves solely to identify the actions from which agents must abstain}.
This is in line with the liberal rule-of-law perspective: a key virtue of the rule of law is the protection of individual freedom \cite{hayek1944road,raz1979rule}.

In a simple form, a dumping law can assign each factory a fixed dumping day within a recurring three-day cycle by banning it from dumping on the other two days.
For instance, the law specifies the set
$
    L_0=\{d_a^1,d_a^2,d_b^2,d_b^3,d_c^1,d_c^3\}
$
of \textit{banned} dumping actions, where $d_x^i$ represents the action that factory $x$ dumping on the $i^\text{th}$ day in each three-day cycle.
Under this law, factory $a$ dumps on the third day ($d_a^3$), factory $b$ dumps on the first day ($d_b^1$), and factory $c$ dumps on the second day ($d_c^2$) of each three-day cycle.
In other words, only one factory dumps on each day, and it is guaranteed that the fish will not be killed.
In this case, we say that law $L_0$ is \textit{useful} in terms of prohibiting the death of the fish.
In general, we say that a law is {\bf\em useful} if \textit{the prohibited outcomes shall never appear when every agent obeys the law}. 
The term ``useful'' is adopted from \cite{shoham1995social}, a pioneering work on law design in multiagent systems.

It is easily observable that the law $L_0$ above unnecessarily constrains the dumping behaviour of the factories.
Note that, to avoid the death of the fish, it suffices to ensure that not all three factories dump on the same day.
In other words, for any given day, the law only needs to prevent one factory from dumping.
This means that the set $L_1=\{d_a^1,d_b^2,d_c^3\}$ is also a useful law.
However, law $L_1$ allows each factory one more day to dump in each three-day cycle, providing more flexibility toward the factories' production activities.

Observe that laws $L_0$ and $L_1$ are both useful, while $L_1$ sets fewer constraints than $L_0$ (\textit{i.e.} $L_1\subsetneq L_0$). In this case, we say that $L_1$ is a {\bf\em useful reduction} of law $L_0$.
Also, notice that any further reduction of law $L_1$ is no longer useful. 
For example, the reduced law $L_2=\{d_a^1,d_b^2\}$ of law $L_1$ allows all factories to dump on the third day and kill the fish.
In other words, law $L_1$ satisfies a minimality property regarding usefulness.
In general, we say that a law is {\bf\em minimal-useful} if \textit{it is useful but cannot be further reduced while keeping usefulness}.
Considering the minimality of law captures the idea of setting minimal constraints on society, which is in line with the opinion that ``the minimal state is the most extensive state that can be justified. Any state more extensive violates people's rights ...'' \cite{nozick1974anarchy}.

Essentially, the usefulness of a law captures the \textit{ability of the law to prevent} the undesirable outcomes.
This is a typical focus in the literature on law design in multiagent systems.
While reliable, the usefulness requirement excludes the possibility of \textit{coordination} among agents as a means of prevention.
In our example, even without a law, the three factories can still negotiate a dumping plan to avoid the death of the fish.
On one hand, a negotiated plan could be more adaptive to the production of the factories than a law and thus bring higher efficiency.
On the other hand, a law should be stable \cite{jefferson1787unstability,raz1979rule}. 
In contrast, coordination achieved by negotiation is flexible and thus can better accommodate changes in production.
Although appealing, coordination is not always realisable in multiagent settings. 
In our case, the three factories may not be able to reach an agreement on the dumping plan due to the lack of communication or the conflict of interest.
In general, coordination is even harder to achieve in complex multiagent systems, particularly when different types of agents exist (\textit{e.g.} a traffic system including autonomous vehicles and human drivers, or a ranch including sheep, sheepdogs, herders and wolves).

In this paper, we relax the usefulness requirement and consider laws that can simultaneously \textit{accommodate coordination and the failure of coordination}. 
Informally, under such laws, prohibited outcomes may appear when the agents \textit{all obey the law but do not coordinate}. 
Nevertheless, there is always \textit{at least one {\bf\em principal agent} who has a {\bf\em safe action} that is lawful, and enables the prevention of prohibited outcomes without the need for coordination}.
As an example, law $L_2$ satisfies the above property.
Recall that law $L_2$ only bans factory $a$ from dumping on the first day and factory $b$ from dumping on the second day.
Consequently, even if all factories comply with law $L_2$, without coordination, the death of the fish may still happen when they simultaneously dump on the third day.
However, factory $c$ has a safe action under law $L_2$ (\textit{i.e.} dumping on either the first or the second day) that can solely prevent the death of the fish as long as factories $a$ and $b$ obey law $L_2$.
By this means, law $L_2$ leaves factory $c$ a chance to dump on the third day while keeping the fish alive through negotiation with other factories.
Meanwhile, if the negotiation fails, factory $c$ can still prevent the death of the fish on its own.
In other words, factory $c$ is a principal agent under law $L_2$.

Observe that, under such laws, if a prohibited outcome finally appears, then either there is an agent whose action breaks the law, or the principal agent fails to prevent it.
In the former case, we say that the agent who breaks the law bears {\bf\em legal responsibility}.
In the latter case, we say that the principal agent who could have prevented the prohibited outcomes with a safe action but fails to do so bears {\bf\em counterfactual responsibility}.
In our example, if the fish is killed because all three factories dump on the first day, then factory $a$ breaks law $L_2$ and thus $a$ is legally responsible; if it happens on the second day, then factory $b$ breaks law $L_2$ and thus $b$ is legally responsible; if it happens on the third day, then factory $c$, the principal agent under law $L_2$, fails to utilise her safe action to prevent it, and thus $c$ is counterfactually responsible.
In a word, under those laws, \textit{if a prohibited outcome happens, then there is at least one agent either legally or counterfactually responsible}.
That is, a responsible agent can always be identified for any prohibited outcome.
In this sense, we refer to such laws {\bf\em responsibility gap-free} (\textit{abbr.} {\bf\em gap-free}).

To further clarify our terminology, \textit{counterfactual responsibility} is usually regarded as a form of moral responsibility \cite{robb2023moral}. It captures the {\em principle of alternative possibilities}~\cite{frankfurt1969alternate}: \textit{a person is morally responsible for what she has done only if she could have done otherwise}.
In the recent literature on responsibility in multiagent systems, the component ``could have done otherwise'' is commonly interpreted as an agent’s strategic ability to guarantee prevention irrespective of the behaviour of other agents \cite{naumov2019blameworthiness,yazdanpanah2019strategic,baier2021game,shi2024responsibility}.
Note that the former discussion about the principal agent’s safe action to prevent prohibited outcomes aligns with such an interpretation.
In this sense, our use of the term \textit{counterfactual responsibility} in this paper (\textit{i.e.} principal agents who fail to prevent are counterfactually responsible) is consistent with its treatment in the existing literature.
As for \textit{responsibility gap}, also called \textit{responsibility void}, it is one of the important topics discussed in the ethics literature, especially in the context of artificial intelligence \cite{matthias2004responsibility,braham2011responsibility,duijf2018responsibility,braham2018voids,burton2020mind,gunkel2020mind,langer2021we,goetze2022mind}.
Informally, responsibility gap captures the situation where an undesired outcome occurs but no agent can be held responsible, which is usually regarded as ``unwanted'' \cite{hiller2022axiomatic}.
Correspondingly, the term \textit{gap-freeness} in this paper precisely captures the \textit{absence of responsibility gaps} as discussed in the literature and serves as a desired property of multiagent systems.

Back to the point, by the former analysis, law $L_2$ is gap-free. It sets fewer constraints on the agents' actions than the minimal-useful law $L_1$ and allows for potential coordination.
However, law $L_2$ is not minimal yet in terms of gap-freeness.
Let us consider the law $L_3=\{d_a^1\}$, a reduction of law $L_2$.
Since law $L_3$ bans factory $a$ from dumping on the first day, as long as factory $b$ dumps on that day (\textit{i.e.} safe action), the fish would survive regardless of factory $c$'s choice. The same applies to factory $c$.
In other words, both factories $b$ and $c$ are principal agents under law $L_3$.
In addition, under law $L_3$, if the fish is killed because all factories dump on the first day, then factory $a$ is legally responsible; if it happens on the second or the third day, then factories $b$ and $c$ are both counterfactually responsible.
This implies the gap-freeness of law $L_3$. On top of it, we say that law $L_3$ is a {\bf\em gap-free reduction} of law $L_2$ because $L_3\subsetneq L_2$. 
Moreover, observe that the law $L_4=\varnothing$ is not gap-free because the death of the fish may happen, but no agent has a safe action that can individually prevent it under law $L_4$.
This shows that any further reduction of law $L_3$ is not gap-free, and thus law $L_3$ satisfies a minimality property in terms of gap-freeness.
In general, we say that a law is {\bf\em minimal-gap-free} if \textit{it is gap-free but cannot be further reduced while keeping gap-freeness}.

\paragraph{Contribution}
In this paper, we investigate the \textit{design of useful laws and gap-free laws in multiagent systems}.
In particular, we model multiagent systems as one-shot concurrent games and interpret the law design problem in a graph-theoretical perspective.
Specifically, we first formalise the concepts of usefulness and gap-freeness in Section~\ref{sec:formalisation}.
Then, in Section~\ref{sec:useful law}, we establish the equivalence between the useful law design problem and the vertex cover problem in hypergraphs (\textit{a.k.a.} hitting set problem) by providing polynomial-time reductions between them.
After that, in Section~\ref{sec:gap-free law}, we show that the gap-free law design problem is at least as hard as the useful law design problem, and further present a method for solving the gap-free law design problem by reducing it to the vertex cover problem in hypergraphs.
Note that the vertex cover problem is one of the most important problems in complexity theory, listed as one of the 21 NP-complete problems by \citet{karp1972reducibility}.
Extensive research has focused on its hardness and approximation \cite{chvatal1979greedy,bar1985local,slavik1996tight,feige1998threshold}.

\paragraph{Novelty}
By reducing the law design problems to the vertex cover problem, we make it possible to \textit{tackle the computational intractability in the law design problems using approximation techniques}.
This is distinct from the literature on law design (\textit{i.e.} norm synthesis) in multiagent systems.
In fact, there has been a rich amount of studies about \textit{normative system}, where laws/norms are used to regulate the multiagent systems.
They mainly use first-order logic \cite{shoham1995social,fitoussi2000choosing} or modal logic \cite{van2007social,aagotnes2009temporal} to describe the concerned properties of a system and use deontic logic to capture laws/norms \cite{alechina2018norm}.
In this way, they are capable of modelling more complex systems than one-shot concurrent games.
Meanwhile, they study both offline design (\textit{i.e.} design-time norm synthesis) \cite{fitoussi2000choosing,aagotnes2012conservative} and online design (\textit{i.e.} run-time norm synthesis) \cite{morales2011using,morales2013automated,morales2014minimality,riad2022run}, considering both static and dynamic norms that may evolve \cite{alechina2022automatic,riad2022run}.
Depending on the difference in object and formalisation, the computational complexity of their problems ranged from NP-complete \cite{shoham1995social} to beyond EXPTIME \cite{perelli2019enforcing,galimullin2024synthesizing}. 
Although few of them try to address the complexity by heuristics \cite{christelis2009automated} or by degenerating their problems into optimisation problems \cite{aagotnes2010optimal,wu2022bayesian}, \textit{none have tackled the intractability by considering inapproximability in addition to approximation, as done in this paper}. 
As a by-product, our attempt at gap-free law design offers an applicable way to \textit{address the responsibility-gap concern} in the literature.

\section{Formalisation}\label{sec:formalisation}

In this section, we first define a (one-shot current) game to model multiagent systems. 
Then, we formalise the concepts of law, usefulness, and gap-freeness as discussed in the previous section.
After that, we formally define hypergraphs that are used later to solve the law design problems.

In the rest of this paper, by $|S|$ we denote the size of a set $S$; by $\prod F$ and $\bigcup F$ we denote, respectively, the Cartesian product and the union of all sets in a family $F$.
For an indexed set $\alpha=\{\alpha_i\}_{i\in I}$, by $\mathcal{S}(\alpha)=\{\alpha_i\mid i\in I\}$ we denote the support set of $\alpha$ that forgets the order and multiplicity.

\begin{definition}\label{df:game}
A \textbf{game} is a tuple $(\mathcal{A},\Delta,\mathbb{P})$ such that
\begin{enumerate}
\item $\mathcal{A}$ is a nonempty finite set of agents;\label{dfitem:game agents}
\item $\Delta=\{\Delta_a\}_{a\in\mathcal{A}}$ is a family of sets of \textbf{actions}, where $\Delta_a$ is a finite set of actions available to agent $a$;\label{dfitem:game actions}
\item $\mathbb{P}\subseteq\prod\Delta$ is the \textbf{prohibition}.\label{dfitem:game prohibitions}
\end{enumerate}
\end{definition}

In a game $(\mathcal{A},\Delta,\mathbb{P})$, each profile $\delta\in\prod\Delta$ represents an \textit{outcome}.
Instead of utilities of agents\footnote{
We do not yet consider the incentives of agents in the law design problem, and thus get rid of the utility functions in our model.
}, we consider a set $\mathbb{P}$ of prohibited outcomes.
The purpose of designing laws is to avoid the game ending in the prohibited outcomes.
For instance, in the factory example, all factories dumping on the first day represents an outcome where the fish is killed, and the set of prohibited outcomes consists of the three cases where all agents dump on the same day during each three-day cycle.
A law is established to avoid the death of the fish, that is, to prevent any of the prohibited outcomes.

Technically, item~\ref{dfitem:game actions} of Definition~\ref{df:game} allows the action set $\Delta_a$ to be empty for any agent $a\in\mathcal{A}$.
This is made for mathematical convenience.
Moreover, a profile $\delta\in\prod\Delta$ is essentially a set indexed by every agent $a\in\mathcal{A}$ (\textit{i.e.} $\delta=\{\delta_a\}_{a\in\mathcal{A}}$).
Correspondingly,
\begin{equation}\label{eq:profile support set}
    \mathcal{S}(\delta)=\{\delta_a\mid a\in\mathcal{A}\}
\end{equation}
is the set of actions taken by the agents under a profile $\delta$.
Also, the action sets of different agents are not necessarily identical or disjoint.
This allows greater flexibility in modelling multiagent systems that simultaneously include multiple agents of the same type who are likely to share the same action space, and agents of different types whose available actions usually differ.
More importantly, the laws that set bans on actions, as discussed in Section~\ref{sec:introduction} and formally defined below, will be \textit{agent-independent}.
In other words, it is impossible to forbid one agent from using an action while permitting another agent to use the same action.
In this sense, \textit{fairness} is embedded in our formalisation.

\begin{definition}\label{df:law in game}
A law in a game $(\mathcal{A},\Delta,\mathbb{P})$ is an arbitrary set $L\subseteq\bigcup\Delta$ of actions.
\end{definition}

For two laws $L,L'$ in the same game, we say that $L$ is a {\bf\em reduction} of $L'$ if $L\subseteq L'$.
Note that, as stated in Section~\ref{sec:introduction}, agents can break a law.
However, in a hypothetical situation where all agents obey the law, the agents indeed play a ``subgame'' where only lawful actions are available.
We call such a ``subgame'' \textit{law-imposed game} and formalise it below.

\begin{definition}\label{df:law-imposed game}
For a law $L$ in a game $(\mathcal{A},\Delta,\mathbb{P})$, the \textbf{law-imposed} game is the game $(\mathcal{A},\Delta^L,\mathbb{P}^L)$ where
\begin{enumerate}
\item $\Delta^L=\{\Delta_a^L\}_{a\in\mathcal{A}}$ such that $\Delta_a^L=\Delta_a\setminus L$ is the set of \textbf{lawful action}s of agent $a$;\label{dfitem:law-imposed game 1}
\item $\mathbb{P}^L=\mathbb{P}\cap\prod\Delta^L$ is the \textbf{law-imposed prohibition}.\label{dfitem:law-imposed game 2}
\end{enumerate}
\end{definition}

Informally, for a given game and a given law, in the law-imposed game, the actions are the \textit{lawful actions}, the profiles are the {\em lawful profiles}, and the prohibition consists of the lawful profiles prohibited in the original game.

\subsection{Usefulness of Law}\label{sec:formalisation-useful}

Let us now formalise usefulness. 
Recall that, as discussed in Section~\ref{sec:introduction}, a law is useful if the prohibited outcomes never appear when every agent obeys the law.
That means none of the lawful profiles are prohibited, which is formalised below.

\begin{definition}\label{df:useful law}
A law $L$ in a game $(\mathcal{A},\Delta,\mathbb{P})$ is called \textbf{useful} if $\mathbb{P}^L=\varnothing$ in the law-imposed game.
\end{definition}

The next lemma characterises when a law is useful: every prohibited profile consists of at least one banned action by the law.
See its formal proof in Appendix~\ref{app_sec:proof of Lemma useful law}.

\begin{lemma}\label{lm:useful law}
A law $L$ in a game $(\mathcal{A},\Delta,\mathbb{P})$ is useful if and only if $L\cap\mathcal{S}(\delta)\neq\varnothing$ for each profile $\delta\in\mathbb{P}$.
\end{lemma}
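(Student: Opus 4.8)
The plan is to prove the biconditional by unfolding the definitions of usefulness and the law-imposed game, reducing everything to a statement about whether a prohibited profile survives as a lawful profile. Recall that $L$ is useful exactly when $\mathbb{P}^L=\varnothing$, and by Definition~\ref{df:law-imposed game} we have $\mathbb{P}^L=\mathbb{P}\cap\prod\Delta^L$ where $\Delta^L_a=\Delta_a\setminus L$. So the proof hinges on the elementary observation that, for a profile $\delta\in\prod\Delta$, membership $\delta\in\prod\Delta^L$ is equivalent to $\delta_a\notin L$ for every agent $a\in\mathcal{A}$, which in turn is equivalent to $\mathcal{S}(\delta)\cap L=\varnothing$ by the definition of the support set in equation~\eqref{eq:profile support set}.

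For the forward direction, I would argue by contraposition: suppose there is some $\delta\in\mathbb{P}$ with $L\cap\mathcal{S}(\delta)=\varnothing$. Then $\delta_a\notin L$ for each $a$, so $\delta_a\in\Delta_a\setminus L=\Delta^L_a$, giving $\delta\in\prod\Delta^L$. Combined with $\delta\in\mathbb{P}$, this yields $\delta\in\mathbb{P}\cap\prod\Delta^L=\mathbb{P}^L$, so $\mathbb{P}^L\neq\varnothing$ and $L$ is not useful. For the backward direction, again by contraposition: suppose $L$ is not useful, so $\mathbb{P}^L\neq\varnothing$ and pick $\delta\in\mathbb{P}^L=\mathbb{P}\cap\prod\Delta^L$. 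Then $\delta\in\mathbb{P}$, and from $\delta\in\prod\Delta^L$ we get $\delta_a\in\Delta_a\setminus L$ for each $a$, hence $\mathcal{S}(\delta)\cap L=\varnothing$, contradicting the right-hand side of the claimed equivalence. Alternatively, one can phrase the whole thing as a direct chain of iff's rather than two contrapositives, which may read more cleanly.

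I do not anticipate a genuine obstacle here; this lemma is essentially a definitional bookkeeping exercise. The only point requiring a little care is the quantifier handling around the Cartesian product when some $\Delta_a$ (or $\Delta^L_a$) is empty: if some $\Delta^L_a=\varnothing$ then $\prod\Delta^L=\varnothing$ and $\mathbb{P}^L=\varnothing$ automatically, so $L$ is useful; one should check this is consistent with the right-hand side, i.e.\ that in this case every $\delta\in\mathbb{P}\subseteq\prod\Delta$ must use the banned action that exhausts $\Delta_a$, so indeed $L\cap\mathcal{S}(\delta)\neq\varnothing$. I would make sure the argument above does not implicitly assume all action sets are nonempty, since Definition~\ref{df:game} explicitly permits empty ones. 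Beyond that caveat, the proof is just a matter of carefully rewriting set-membership conditions, and I would present it compactly as the equivalences $L$ useful $\iff \mathbb{P}^L=\varnothing \iff \mathbb{P}\cap\prod\Delta^L=\varnothing \iff$ no $\delta\in\mathbb{P}$ satisfies $\delta_a\notin L$ for all $a$ $\iff$ every $\delta\in\mathbb{P}$ has $L\cap\mathcal{S}(\delta)\neq\varnothing$.
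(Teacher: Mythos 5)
Your proposal is correct and follows essentially the same route as the paper's proof: both directions are argued by contraposition, reducing $\mathbb{P}^L=\varnothing$ to the observation that $\delta\in\prod\Delta^L$ iff $\delta_a\notin L$ for all $a$ iff $L\cap\mathcal{S}(\delta)=\varnothing$. The extra care you take about empty action sets is harmless but not needed, since the contrapositive arguments never assume nonemptiness.
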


Next, we consider the minimality of a useful law. Our discussion about it in Section~\ref{sec:introduction} can be formally captured below.

\begin{definition}\label{df:minimal useful law}
A law $L$ in a game is \textbf{minimal-useful} if $L$ is useful and no law $L'\subsetneq L$ is useful in the same game.
\end{definition}

Motivated by the pursuit of minimum constraints on society, if a law is useful but not minimal, then we would reduce it to leave as much freedom (\textit{i.e.} lawful actions) as possible. 
This is formally captured as the minimality of useful reduction in the definition below.

\begin{definition}\label{df:minimum useful reduction}
For a useful law $L$ in a game, 
\begin{enumerate}
    \item a \textbf{useful reduction} of $L$ is a useful law $L'\subseteq L$;\label{dfitem:useful reduction}
    \item a useful reduction $L'$ of $L$ is called \textbf{minimum} if there is no useful reduction $L''$ of $L$ such that $|L''|< |L'|$.\label{dfitem:minimum useful reduction}
\end{enumerate}
\end{definition}

Note that the approach of reducing an existing law, rather than crafting a new one from scratch, aligns with the view that laws should be stable \cite{jefferson1787unstability,raz1979rule}, as it ensures that \textit{actions previously permitted remain permitted}.
This is indeed a type of \textit{Pareto optimisation}.
Also, it is not hard to deduce from Lemma~\ref{lm:useful law} that any further reduction of a non-useful law is also non-useful, which is why we consider only the reduction of useful laws here.
Moreover, observe that the law which bans all actions (\textit{i.e.} $L=\bigcup\Delta$) is useful by Lemma~\ref{lm:useful law}, and thus crafting a new useful law from scratch is equivalent to getting a useful reduction of the law $L=\bigcup\Delta$. 
In this sense, \textit{minimising an existing useful law encompasses the task of minimum useful law design}.

Notably, the law $L=\bigcup\Delta$ may not be reducible while preserving usefulness.
For example, consider a matching-pennies game where two agents have the same action space $\{\mathit{head},\mathit{tail}\}$ and the outcomes $(\mathit{head},\mathit{head})$ and $(\mathit{tail},\mathit{tail})$ are prohibited.
In this case, the set $\{\mathit{head},\mathit{tail}\}$ is the only useful law by Lemma~\ref{lm:useful law} and thus minimal-useful.
In other words, some games may admit no useful law that permits any lawful actions.
However, such situations are uncommon in real-world scenarios, where every agent usually has access to a \textit{default action} that can avoid undesirable outcomes.
As a result, a law that permits only default actions will be useful, and any useful reduction of such a law continues to allow those default actions.
In the worst-case scenario where no suitable default action exists, a useful initial law can still be constructed by distinguishing between different agents' actions (\textit{i.e.} by making their action sets disjoint), albeit at the cost of symmetry.
Then, a law that restricts each agent to a single action leading to a fixed non-prohibited outcome (\textit{e.g.} banning one agent from choosing $\mathit{head}$ and the other from choosing $\mathit{tail}$ in the matching-pennies game) is always useful.
This approach mirrors how traffic lights prevent collisions at intersections: when east-west traffic is permitted to proceed, north-south traffic is banned, and vice versa.

\subsection{Gap-Freeness of Law}\label{sec:formalisation-gap free}

In this subsection, we formalise gap-freeness.
As discussed in Section~\ref{sec:introduction}, a key feature of gap-freeness is the presence of a principal agent who can individually prevent the prohibited outcomes.
In our one-shot game setting, such an \textit{ability to prevent} is achieved with a ``safe'' action that guarantees non-prohibited outcomes.
We formally capture this as follows.

\begin{definition}\label{df:safe action}
For a game $(\mathcal{A},\Delta,\mathbb{P})$ and an agent $a\in\mathcal{A}$, an action $d\in\bigcup\Delta$ is called a \textbf{safe action of agent $a$} if $d\in\Delta_a$ and 
$\delta_a\neq d$ for each profile $\delta\in\mathbb{P}$.
\end{definition}

Informally, action $d$ is available to agent $a$, but she does not choose $d$ in any prohibited outcome.
In other words, every potential outcome when agent $a$ chooses action $d$ is not prohibited.
As a result, agent $a$ can prevent the prohibited outcomes by choosing action $d$.
Also note that a safe action $d$ of agent $a$ is not necessarily safe for another agent $b$, even if $d\in\Delta_a\cap\Delta_b$.
This arises from the asymmetry across agents in the prohibition as specified in item~\ref{dfitem:game prohibitions} of Definition~\ref{df:game}.

However, as noted in Section~\ref{sec:introduction}, with a law in place, a principal agent needs only a lawful action that prevents the prohibited outcomes when others act lawfully. 
That is essentially a \textit{safe action in the law-imposed game}.
Next, we capture the legal responsibility of an agent breaking a law and the counterfactual responsibility of a principal agent failing to prevent.
In particular, we say that an agent is responsible if a prohibited outcome happens and she is responsible either legally or counterfactually, as defined below. 

\begin{definition}\label{df:responsibility}
In a prohibited profile $\delta\in \mathbb{P}$ of a game $(\mathcal{A},\Delta,\mathbb{P})$, an agent $a\in\mathcal{A}$ is \textbf{responsible} under law $L$ if
\begin{enumerate}
    \item (legally) $\delta_a\in L$, \textbf{or} \label{dfitem:legal responsibility}
    \item (counterfactually) $\mathcal{S}(\delta)\cap L=\varnothing$ and a safe action of agent $a$ exists in the law-imposed game $(\mathcal{A},\Delta^L,\mathbb{P}^L)$. \label{dfitem:counterfactual responsibility}
\end{enumerate}
\end{definition}

Informally, in a prohibited outcome $\delta$, agent $a$ is legally responsible if her action breaks the law (\textit{i.e.} $\delta_a\in L$);
if no agent breaks the law (\textit{i.e.} $\mathcal{S}(\delta)\cap L=\varnothing$) but agent $a$ is a principal agent under the law (\textit{i.e.} has a safe action in the law-imposed game), then $a$ is counterfactually responsible.

The next lemma characterises when an action is a safe action of an agent under a law: the law should make the action lawful and make each prohibited outcome where the agent takes the action unlawful.
See Appendix~\ref{app_sec:proof of Lemma safe action in law imposed game} for its proof.

\begin{lemma}\label{lm:safe action in law imposed game}
An action $d\in\bigcup\Delta$ is a safe action of an agent $a\in\mathcal{A}$ in the law-imposed game $(\mathcal{A},\Delta^L,\mathbb{P}^L)$ if and only if $d\in\Delta_a^L$ and $L\cap\mathcal{S}(\delta)\neq\varnothing$ for each $\delta\in\mathbb{P}$ such that $\delta_a=d$.
\end{lemma}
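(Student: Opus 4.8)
The plan is to unfold both directions directly from the definitions of safe action (Definition~\ref{df:safe action}) and the law-imposed game (Definition~\ref{df:law-imposed game}), taking care to translate the prohibition $\mathbb{P}^L$ of the law-imposed game back into conditions on the original prohibition $\mathbb{P}$ and the law $L$. The key observation to establish up front is a small bridging fact: for a profile $\delta\in\mathbb{P}$, we have $\delta\in\mathbb{P}^L$ if and only if $\delta\in\prod\Delta^L$, which (since $\Delta^L_a=\Delta_a\setminus L$ and $\delta_a\in\Delta_a$ already) is equivalent to $\delta_a\notin L$ for all $a$, i.e. $L\cap\mathcal{S}(\delta)=\varnothing$. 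Equivalently, $\delta\in\mathbb{P}\setminus\mathbb{P}^L$ iff $L\cap\mathcal{S}(\delta)\neq\varnothing$. This is exactly the clause appearing on the right-hand side of the lemma, so the whole argument hinges on packaging this equivalence cleanly.

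For the forward direction, suppose $d$ is a safe action of $a$ in $(\mathcal{A},\Delta^L,\mathbb{P}^L)$. By Definition~\ref{df:safe action} applied to this game, $d\in\Delta^L_a$ and $\delta_a\neq d$ for every $\delta\in\mathbb{P}^L$. The first conjunct is already one of the two things to prove. For the second, take any $\delta\in\mathbb{P}$ with $\delta_a=d$; I must show $L\cap\mathcal{S}(\delta)\neq\varnothing$. If instead $L\cap\mathcal{S}(\delta)=\varnothing$, then by the bridging fact $\delta\in\mathbb{P}^L$, so safety forces $\delta_a\neq d$, contradicting $\delta_a=d$. Hence $L\cap\mathcal{S}(\delta)\neq\varnothing$, as required.

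For the converse, assume $d\in\Delta^L_a$ and that $L\cap\mathcal{S}(\delta)\neq\varnothing$ for every $\delta\in\mathbb{P}$ with $\delta_a=d$. I need to check the two conditions of Definition~\ref{df:safe action} in the law-imposed game: membership $d\in\Delta^L_a$ holds by hypothesis, and for the safety condition I take an arbitrary $\delta\in\mathbb{P}^L$ and show $\delta_a\neq d$. Since $\delta\in\mathbb{P}^L\subseteq\mathbb{P}$ and, by the bridging fact, $L\cap\mathcal{S}(\delta)=\varnothing$, the contrapositive of the hypothesis gives $\delta_a\neq d$. This completes both directions.

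I do not expect a genuine obstacle here; the statement is essentially a bookkeeping reformulation. The one place to be careful is the bridging fact itself — specifically making explicit that $\delta_a\in\Delta_a$ is automatic for $\delta\in\prod\Delta$, so that $\delta_a\in\Delta^L_a=\Delta_a\setminus L$ reduces purely to $\delta_a\notin L$, and that $\mathcal{S}(\delta)=\{\delta_a\mid a\in\mathcal{A}\}$ by~\eqref{eq:profile support set} so ``$\delta_a\notin L$ for all $a$'' is literally ``$L\cap\mathcal{S}(\delta)=\varnothing$''. Everything else is a direct chase through the definitions, and the proof is short enough that the appendix version will just spell out these two paragraphs with the bridging fact stated as an inline equivalence.
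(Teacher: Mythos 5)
Your proposal is correct and follows essentially the same route as the paper's proof: both reduce the claim to the equivalence between ``$\delta_a\neq d$ for all $\delta\in\mathbb{P}^L$'' and ``$L\cap\mathcal{S}(\delta)\neq\varnothing$ for all $\delta\in\mathbb{P}$ with $\delta_a=d$'' by unfolding Definitions~\ref{df:safe action} and~\ref{df:law-imposed game} together with statement~\eqref{eq:profile support set}. Your explicit ``bridging fact'' ($\delta\in\mathbb{P}^L$ iff $L\cap\mathcal{S}(\delta)=\varnothing$ for $\delta\in\mathbb{P}$) is exactly what the paper establishes inline in each direction of its argument.
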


The next definition formalises the gap-freeness property.

\begin{definition}\label{df:responsibility gap free}
A law $L$ in a game $(\mathcal{A},\Delta,\mathbb{P})$ is \textbf{gap-free} if there is at least one responsible agent in each profile $\delta\in \mathbb{P}$.
\end{definition}

Observe that, by Lemma~\ref{lm:useful law} and statement~\eqref{eq:profile support set}, if a law is useful, then there is at least one legally responsible agent in each prohibited profile.
Thus, \textit{a useful law is also gap-free}.
Conversely, if a law is not useful, then there is at least one prohibited profile where no agent is legally responsible. 
To make such a law gap-free, there should be a principal agent who bears counterfactual responsibility in those lawful but prohibited profiles.
This intuition is formally captured by the next lemma. See Appendix~\ref{app_sec:proof of Lemma gap free law} for its formal proof.

\begin{lemma}\label{lm:gap free law}
A law $L$ in a game $(\mathcal{A},\Delta,\mathbb{P})$ is gap-free if and only if $L$ is useful or there is an agent $a\in\mathcal{A}$ and a safe action of agent $a$ in the law-imposed game $(\mathcal{A},\Delta^L,\mathbb{P}^L)$.
\end{lemma}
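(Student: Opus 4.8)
The plan is to prove Lemma~\ref{lm:gap free law} directly from the definitions, splitting the biconditional into its two implications and unpacking Definition~\ref{df:responsibility gap free} together with Definition~\ref{df:responsibility}. The key observation to set up first is the dichotomy on a prohibited profile $\delta\in\mathbb{P}$: either $\mathcal{S}(\delta)\cap L\neq\varnothing$, in which case some agent $a$ has $\delta_a\in L$ (using \eqref{eq:profile support set}, which says $\mathcal{S}(\delta)=\{\delta_a\mid a\in\mathcal{A}\}$) and hence is legally responsible, or $\mathcal{S}(\delta)\cap L=\varnothing$, in which case legal responsibility is vacuously unavailable and gap-freeness at $\delta$ must come from counterfactual responsibility, i.e.\ the existence of an agent with a safe action in the law-imposed game $(\mathcal{A},\Delta^L,\mathbb{P}^L)$. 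A crucial subtlety to record here is that the counterfactual clause (item~\ref{dfitem:counterfactual responsibility} of Definition~\ref{df:responsibility}) requires only the \emph{existence} of a safe action for agent $a$ in the law-imposed game — it does not demand that $\delta_a$ itself be that safe action — so once some agent has \emph{any} safe action in the law-imposed game, that agent is counterfactually responsible in \emph{every} unbroken prohibited profile simultaneously.

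For the ($\Leftarrow$) direction I would argue by cases on the disjunction in the statement. If $L$ is useful, then by Lemma~\ref{lm:useful law} every $\delta\in\mathbb{P}$ satisfies $L\cap\mathcal{S}(\delta)\neq\varnothing$, so by \eqref{eq:profile support set} there is an agent $a$ with $\delta_a\in L$, making $a$ legally responsible in $\delta$; hence every prohibited profile has a responsible agent and $L$ is gap-free. If instead there is an agent $a$ and a safe action $d$ of $a$ in the law-imposed game, then take any $\delta\in\mathbb{P}$: if $\mathcal{S}(\delta)\cap L\neq\varnothing$ we again get a legally responsible agent as above; if $\mathcal{S}(\delta)\cap L=\varnothing$, then agent $a$ satisfies both conditions of item~\ref{dfitem:counterfactual responsibility} (the intersection is empty by assumption, and a safe action of $a$ in the law-imposed game exists by hypothesis), so $a$ is counterfactually responsible in $\delta$. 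Either way every $\delta\in\mathbb{P}$ has a responsible agent, so $L$ is gap-free.

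For the ($\Rightarrow$) direction, suppose $L$ is gap-free and $L$ is not useful; I must produce an agent with a safe action in the law-imposed game. Since $L$ is not useful, Lemma~\ref{lm:useful law} gives some profile $\delta^\ast\in\mathbb{P}$ with $L\cap\mathcal{S}(\delta^\ast)=\varnothing$. By gap-freeness there is a responsible agent $a$ in $\delta^\ast$; but $\delta^\ast_a\in\mathcal{S}(\delta^\ast)$ by \eqref{eq:profile support set} and $L\cap\mathcal{S}(\delta^\ast)=\varnothing$, so $\delta^\ast_a\notin L$, which rules out legal responsibility (item~\ref{dfitem:legal responsibility}). Therefore $a$ must be counterfactually responsible, and the second conjunct of item~\ref{dfitem:counterfactual responsibility} is exactly the statement that a safe action of $a$ exists in the law-imposed game $(\mathcal{A},\Delta^L,\mathbb{P}^L)$ — which is what we needed.

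The routine parts here are the two case splits; the only place requiring care is the interplay between the two disjuncts of Definition~\ref{df:responsibility} and the observation that ``$\mathcal{S}(\delta)\cap L=\varnothing$'' is precisely the negation of the condition that triggers legal responsibility for \emph{some} agent, so that in an unbroken prohibited profile the only route to a responsible agent is the counterfactual one. I anticipate the main (mild) obstacle is bookkeeping: making sure that in the ($\Leftarrow$) direction the \emph{same} safe action witness works across all prohibited profiles with $\mathcal{S}(\delta)\cap L=\varnothing$, and that in the ($\Rightarrow$) direction I correctly extract the existential witness from the particular profile $\delta^\ast$ guaranteed by the failure of usefulness via Lemma~\ref{lm:useful law}. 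No appeal to Lemma~\ref{lm:safe action in law imposed game} is needed, since Definition~\ref{df:responsibility} already phrases counterfactual responsibility in terms of a safe action in the law-imposed game; that lemma would only be relevant if one wanted a characterisation purely in terms of $L$, $\Delta$, and $\mathbb{P}$.
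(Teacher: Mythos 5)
Your proof is correct and follows essentially the same approach as the paper's: both directions reduce to unpacking Definitions~\ref{df:responsibility} and~\ref{df:responsibility gap free} via Lemma~\ref{lm:useful law} and the dichotomy on $\mathcal{S}(\delta)\cap L$. The only cosmetic difference is that the paper proves the ($\Leftarrow$) direction by contraposition while you argue it directly; the underlying content is identical.
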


Now, we formalise the minimality of gap-freeness and gap-free reduction in a manner analogous to Section~\ref{sec:formalisation-useful}.

\begin{definition}\label{df:minimal gap free}
A law $L$ in a game is \textbf{minimal-gap-free} if $L$ is gap-free and no law $L'\subsetneq L$ is gap-free in the same game.
\end{definition}

\begin{definition}\label{df:minimum gap free reduction}
For a gap-free law $L$ in a game,
\begin{enumerate}
    \item a \textbf{gap-free reduction} of $L$ is a gap-free law $L'\subseteq L$;\label{dfitem:gap free reduction}
    \item a gap-free reduction $L'$ of $L$ is called \textbf{minimum} if there is no gap-free reduction $L''$ of $L$ such that $|L''|<|L'|$.\label{dfitem:minimum gap free reduction}
\end{enumerate}
\end{definition}

Recall that the law $\bigcup\Delta$ is useful and thus gap-free by Lemma~\ref{lm:gap free law}.
Hence, designing a gap-free law is equivalent to finding a gap-free reduction of the law $\bigcup\Delta$. Accordingly,
\textit{minimising an existing gap-free law encompasses the task of minimum gap-free law design}.

\subsection{Hypergraphs with Fixed Rank}\label{sec:formalisation-hypergraph}

Now, let us introduce the hypergraphs that are used later to resolve the law design problems.
Unlike standard graphs where each edge connects exactly two vertices, a hypergraph allows each edge (\textit{a.k.a.} hyperedge) to connect any positive number of vertices.
In particular, we consider hypergraphs where each edge contains \textit{at most} $k$ vertices for any fixed parameter $k\geq 1$, as formalised in the next definition.

\begin{definition}\label{df:k-rank hypergraph}
For any integer $k\geq 1$, a \textbf{rank-$k$ hypergraph} (\textit{abbr.} \textbf{$k$-graph}) is a tuple $(V,E)$ such that $V$ is a finite set of vertices and $E$ is a set of (hyper)edges where $e\subseteq V$ and $1\leq|e|\leq k$ for each edge $e\in E$.
\end{definition}

Note that an edge is a set of vertices.
A vertex is said to \textit{cover} an edge if the edge includes the vertex.
A {\bf\em vertex cover} of a $k$-graph is a set of vertices that \textit{intersects} with every edge in the $k$-graph, as formalised below.

\begin{definition}\label{df:vertex cover}
For a $k$-graph $(V,E)$,
\begin{enumerate}
    \item a set $C$ is called a \textbf{vertex cover} if $C\subseteq V$ and $C\cap e\neq\varnothing$ for each edge $e\in E$;\label{dfitem:vertex cover cover}
    \item a vertex cover $C$ is called \textbf{minimal} if there is no vertex cover $C'\subsetneq C$;\label{dfitem:vertex cover minimal}
    \item a vertex cover $C$ is called \textbf{minimum} if there is no vertex cover $C'$ such that $|C'|<|C|$.\label{dfitem:minimum vertex cover}
\end{enumerate}
\end{definition}
We consider the following problems about vertex cover, which are collectively referred to as $\mathbf{VC}$ problems:
\begin{itemize}
    \item $\mathbf{IsVC}$: to verify if a set is a vertex cover of a $k$-graph.
    \item $\mathbf{IsMiniVC}$: to verify if a set is a minimal vertex cover of a $k$-graph.
    \item $\mathbf{MinVC}$: to find a minimum vertex cover of a $k$-graph.
\end{itemize}
These problems are extensively explored in the literature.
In particular, $\mathbf{IsVC}$ and $\mathbf{IsMiniVC}$ can both be solved efficiently (\textit{i.e.} in polynomial time).
In contrast, $\mathbf{MinVC}$ is NP-hard \cite{garey1979computers}, which means no efficient algorithm is believed to exist for this problem.
As a compromise, {\em approximation algorithms} were developed to efficiently find \textit{good enough, though not necessarily optimal}, solutions for those hard problems.
An algorithm approximates $\mathbf{MinVC}$ within \textit{factor} $t$ if, for any $k$-graph, the size of the vertex cover it finds is at most $t$ times the size of the minimum vertex cover.
It is shown that $\mathbf{MinVC}$ can be \textit{effectively approximated within factor $k$}, using greedy algorithms \cite{bar1981linear,hall1986fast} or linear programming relaxation \cite{hochbaum1982approximation}. 
However, it is hard to do better than this \cite{holmerin2002improved,dinur2005new}. 
In what follows, we will rely on the following algorithms and theorem.

\begin{assumption}
    While solving the law design problems, the following efficient $\mathbf{VC}$ algorithms serve as {\bf\em gadgets}:
    \begin{itemize}
    \item $\mathtt{IsVC}$: an algorithm for $\mathbf{IsVC}$.
    \item $\mathtt{IsMiniVC}$: an algorithm for $\mathbf{IsMiniVC}$.
    \item $\mathtt{AppMinVC}$: a $k$-approximation algorithm for $\mathbf{MinVC}$.
\end{itemize}
\end{assumption}

\begin{theorem}[\citeauthor{khot2008vertex}, \citeyear{khot2008vertex}]\label{th:vertex cover hard to approximate}
    $\mathbf{MinVC}$ is NP-hard to approximate within factor $k-\epsilon$ for any $\epsilon\!>\!0$ when $k\!\geq\! 2$.\footnote{\citet{khot2008vertex} consider a subset of $k$-graphs where each edge has exactly $k$ vertices. The theorem holds under the Unique Game Conjecture \cite{khot2002power}.}
\end{theorem}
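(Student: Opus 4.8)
The plan is to derive this hardness from the Unique Games Conjecture \cite{khot2002power}, via a ``long-code'' reduction from the Unique Games problem; this is why the statement must be conditional, as the footnote indicates. I would start from an instance $\mathcal{U}$ of Unique Games --- variable set $X$, label set $[R]$, bijective constraints --- for which UGC makes it NP-hard to distinguish a $(1-\eta)$-satisfiable instance from one that is at most $\gamma$-satisfiable, with $\eta,\gamma$ chosen much smaller than $\epsilon$. From $\mathcal{U}$ I would build a $k$-uniform hypergraph $H$ on vertex set $X\times\{0,1\}^R$: each variable $v\in X$ owns a block $B_v$ of $2^R$ vertices, viewed as the truth table of a function $f_v\colon\{0,1\}^R\to\{0,1\}$ that is ``supposed'' to be the dictator $x\mapsto 1-x_{\sigma(v)}$ for the intended label $\sigma(v)$. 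The hyperedges encode a dictatorship test: sample a constraint of $\mathcal{U}$ together with a correlated $k$-tuple of cube points (one per relevant block, with the constraint's permutation applied), drawn from a roughly $(1/k)$-biased product measure and conditioned so that in each coordinate at least one of the $k$ sampled bits equals $1$.

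This design yields \textbf{completeness}: a genuine dictator $x\mapsto 1-x_i$ is an independent set of fractional size $\approx 1-1/k$, because the edge condition forces every hyperedge to contain a point whose $i$-th bit is $1$, hence a point outside the set. So a $(1-\eta)$-satisfying labelling of $\mathcal{U}$, glued block-by-block, induces an independent set of $H$ of fractional size $\ge 1-1/k-O(\eta)$, i.e.\ a vertex cover (its complement) of fractional size $\le 1/k+O(\eta)$.

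The crux --- and the step I expect to be the main obstacle --- is \textbf{soundness}: if $H$ has a vertex cover of fractional size below $1-\epsilon$, equivalently an independent set $I$ with $|I|/|V(H)|>\epsilon$, then $\mathcal{U}$ must be more than $\gamma$-satisfiable, a contradiction. This is an influence-decoding, Fourier-analytic argument. Averaging over blocks finds many $v$ for which $f_v=\mathbf{1}_I|_{B_v}$ has non-negligible mass; being an independent set for this $k$-ary test prevents $f_v$ from having all its low-degree influences small, so it has a coordinate of large low-degree influence; extracting a short list of such coordinates from each heavy block and pushing them through the Unique Games constraints yields a randomized labelling of $\mathcal{U}$ that in expectation satisfies more than a $\gamma$-fraction of constraints. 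Turning this into numbers --- fixing the bias and noise parameters, truncating Fourier degree, and establishing the influence bound for the $k$-uniform test (for $k=2$ this reduces to graph vertex cover with its Kneser-type gadget; for general $k$ it is $Ek$-Vertex-Cover) --- is the technical heart, where the delicate estimates live.

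It then remains to assemble the reduction: the UGC-hard gap for $\mathcal{U}$ transfers through $H$ to an NP-hard gap between vertex-cover sizes $1/k+O(\eta)$ and $1-\epsilon$ (as fractions of $|V(H)|$), whose ratio exceeds $k-\epsilon$ once $\eta$ is small enough; hence $\mathbf{MinVC}$ admits no polynomial-time factor-$(k-\epsilon)$ approximation unless UGC fails. Since $H$ is $k$-uniform, every edge has exactly $k$ vertices, matching the footnote, and such an instance is in particular a $k$-graph in the sense of Definition~\ref{df:k-rank hypergraph}, so the hardness carries over; one could additionally remark that, without UGC, the same template with a Frankl--R\"odl intersecting-family gadget still gives a fixed constant factor bounded away from $1$, though not the tight $k-\epsilon$.
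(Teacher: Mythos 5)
This theorem is not proved in the paper at all: it is imported verbatim from \citet{khot2008vertex}, with the footnote recording that the result is conditional on the Unique Games Conjecture and concerns $k$-uniform instances. So there is no internal argument to compare against; the only meaningful question is whether your reconstruction of the cited proof is sound. Architecturally it is: the long-code encoding of a Unique Games labelling, the $k$-ary dictatorship test over a biased cube with the ``at least one designated bit per coordinate'' conditioning, completeness via dictator functions giving independent sets of fractional measure $\approx 1-1/k$, and soundness via low-degree influence decoding are exactly the components of the Khot--Regev argument for $Ek$-Vertex-Cover, and your closing remarks (that $k$-uniform instances are a special case of Definition~\ref{df:k-rank hypergraph}, and that Frankl--R\"odl-type gadgets give weaker unconditional factors) are accurate.

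The genuine gap is that the step you yourself identify as ``the technical heart'' is not carried out, and it cannot be waved through: the entire content of the $k-\epsilon$ bound (as opposed to some unspecified constant $>1$) lives in the soundness lemma, namely that any independent set of fractional measure $\epsilon$ in a single block forces a coordinate of non-negligible low-degree influence under the chosen biased, correlated distribution. For $k=2$ this requires a Friedgut/Bourgain-style structure theorem for monotone-like families in the $p$-biased cube; for general $k$ it requires the corresponding multi-function invariance/influence statement, and the choice of bias, noise rate, and degree truncation must be made compatible with the Unique Games gap parameters $\eta,\gamma$. Without stating and proving (or at least precisely citing) that lemma and verifying the parameter arithmetic that turns the gap $\bigl(1/k+O(\eta)\bigr)$ versus $(1-\epsilon)$ into a ratio exceeding $k-\epsilon$, what you have is a correct proof plan rather than a proof. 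Given that the paper treats this as a black-box citation, the honest fix is either to do the same --- cite \citet{khot2008vertex} for the soundness analysis --- or to commit to reproducing that Fourier-analytic argument in full.
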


\section{Useful Law Design}\label{sec:useful law}

To continue the discussion in Section~\ref{sec:formalisation-useful}, we consider the next three problems about useful law design, which are collectively referred to as $\mathbf{UL}$ problems:
\begin{itemize}
    \item $\mathbf{IsUL}$: to verify if a set is a useful law in a game.
    \item $\mathbf{IsMiniUL}$: to verify if a set is a minimal-useful law in a game.
    \item $\mathbf{MinUR}$: to find a minimum useful reduction of a useful law in a game.
\end{itemize}
In particular, we establish an equivalence between the $\mathbf{VC}$ problems and the $\mathbf{UL}$ problems by providing two-way polynomial-time reductions.
On top of this, we illustrate a way to solve the $\mathbf{UL}$ problems using the $\mathbf{VC}$ algorithms.

\subsection{Reducing Vertex Cover to Useful Law}\label{sec:useful law-hardness}

In this subsection, we show that any instance of a $\mathbf{VC}$ problem can be reduced to an instance of a $\mathbf{UL}$ problem in polynomial time.
This demonstrates that the $\mathbf{UL}$ problems are \textit{at least as hard as} the $\mathbf{VC}$ problems.
The polynomial-time reduction is formally captured below.

\begin{definition}\label{df:graph to game}
For any $k$-graph $(V,E)$, let $\mathcal{G}_{(k,V,E)}$ be the game $(\mathcal{A}_{(k,V,E)},\Delta_{(k,V,E)},\mathbb{P}_{(k,V,E)})$ such that 
\begin{enumerate}
    \item $\mathcal{A}_{(k,V,E)}=[k]$ where $[k]=\{i\in\mathbb{N}\mid 1\leq i\leq k\}$;\label{dfitem:graph to game agents}
    \item $\Delta_{(k,V,E)}=\{\Delta_i\}_{i\in[k]}$ where $\Delta_i=V$ for each $i\in[k]$;\label{dfitem:graph to game actions}
    \item $\mathbb{P}_{(k,V,E)}=\{\delta^e=\{\delta^e_{i}\}_{i\in[k]}\mid e\in E\}$ where $\delta^e_i$ is the $((i\ \mathrm{mod}\ |e|)\!+\!1)^{\text{th}}$ item in any predefined order of set $e$.\label{dfitem:graph to game prohibition}
\end{enumerate}
\end{definition}

Note that, in the above definition, $k\geq 1$ and set $V$ is finite by Definition~\ref{df:k-rank hypergraph}. Thus, $\mathcal{G}_{(k,V,E)}$ is well-defined by Definition~\ref{df:game}.
Informally, $\mathcal{G}_{(k,V,E)}$ is a game of $k$ agents with the same action space $V$. 
Each vertex in set $V$ is an action. 
Each edge $e\in E$ corresponds to a prohibited profile $\delta^e$ where every vertex in edge $e$ is taken as an action $\delta^e_i$ of some agent $i$.
Then, by Lemma~\ref{lm:useful law} and item~\ref{dfitem:vertex cover cover} of Definition~\ref{df:vertex cover}, it is easy to verify the next theorem.
See Appendix~\ref{app_sec:proof of Theorem vertex cover=useful} for its proof.

\begin{theorem}\label{th:vertex cover=useful}
A set $C$ is a vertex cover of a $k$-graph $(V,E)$ if and only if $C$ is a useful law in the game $\mathcal{G}_{(k,V,E)}$.
\end{theorem}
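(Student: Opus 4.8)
The plan is to collapse the biconditional to a single combinatorial identity about the constructed game, namely that the support set $\mathcal{S}(\delta^e)$ of each prohibited profile equals the corresponding edge $e$. Granted this identity, both directions follow by unwinding the definitions of ``useful law'' and ``vertex cover''.

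First I would normalise the two conditions. In $\mathcal{G}_{(k,V,E)}$ every agent has action set $V$, so $\bigcup\Delta_{(k,V,E)}=V$; hence ``$C$ is a law'' (Definition~\ref{df:law in game}) and ``$C\subseteq V$'' (the first clause of being a vertex cover, item~\ref{dfitem:vertex cover cover} of Definition~\ref{df:vertex cover}) coincide. By Lemma~\ref{lm:useful law}, a law $C$ in $\mathcal{G}_{(k,V,E)}$ is useful exactly when $C\cap\mathcal{S}(\delta)\neq\varnothing$ for every $\delta\in\mathbb{P}_{(k,V,E)}$, and by item~\ref{dfitem:graph to game prohibition} of Definition~\ref{df:graph to game} those profiles are precisely $\{\delta^e\mid e\in E\}$. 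So once $\mathcal{S}(\delta^e)=e$ is established for all $e\in E$, the statement ``$C$ is a useful law'' reads ``$C\subseteq V$ and $C\cap e\neq\varnothing$ for all $e\in E$'', which is exactly ``$C$ is a vertex cover''.

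The heart of the proof is therefore the identity $\mathcal{S}(\delta^e)=e$, the only step that uses the specifics of the construction. The inclusion $\mathcal{S}(\delta^e)\subseteq e$ is immediate: by definition each coordinate $\delta^e_i$ is the $((i\bmod|e|)+1)$-th element of $e$ in the fixed enumeration, hence lies in $e$. For $e\subseteq\mathcal{S}(\delta^e)$ I would show the index map $i\mapsto(i\bmod m)+1$ with $m=|e|$ is onto $\{1,\dots,m\}$. Since $1\leq m\leq k$ by Definition~\ref{df:k-rank hypergraph}, the values $i=1,\dots,m$ all lie in $[k]$, and they produce $(1\bmod m)+1,\dots,((m-1)\bmod m)+1,(m\bmod m)+1=2,3,\dots,m,1$, i.e.\ all of $\{1,\dots,m\}$. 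Hence every element of $e$ appears as some $\delta^e_i$, giving $e\subseteq\mathcal{S}(\delta^e)$ and, with the reverse inclusion, equality. The boundary case $m=1$ is harmless, since then $\delta^e_i$ is always the single element of $e$.

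I expect the only real (and mild) obstacle to be this modular-arithmetic bookkeeping --- checking that $(i\bmod m)+1$ stays in range $\{1,\dots,m\}$ and that all of $e$ is still hit even though $[k]$ may strictly contain $\{1,\dots,m\}$. Once that is settled, the theorem is pure definition-chasing, and the same identity $\mathcal{S}(\delta^e)=e$ can presumably be reused in the converse reduction from useful-law instances back to vertex-cover instances.
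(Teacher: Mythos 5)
Your proposal is correct and follows essentially the same route as the paper: the paper also isolates the identity $\mathcal{S}(\delta^e)=e$ as a separate lemma (proved by the same modular-arithmetic surjectivity check that $i\mapsto(i\bmod|e|)+1$ hits all of $\{1,\dots,|e|\}$ as $i$ ranges over $1,\dots,|e|$) and then derives the theorem by unwinding Lemma~\ref{lm:useful law} and Definition~\ref{df:vertex cover} exactly as you describe. The only cosmetic difference is that the paper phrases the forward direction by contradiction, which changes nothing of substance.
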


Note that, due to the consistency of minimality in item~\ref{dfitem:vertex cover minimal} of Definition~\ref{df:vertex cover} and Definition~\ref{df:minimal useful law}, Theorem~\ref{th:vertex cover=useful} indeed implies reductions from $\mathbf{IsVC}$ and $\mathbf{IsMiniVC}$ to $\mathbf{IsUL}$ and $\mathbf{IsMiniUL}$, respectively.
On the other hand, the set $V=\bigcup\Delta_{(k,V,E)}$ is a useful law in the game $\mathcal{G}_{(k,V,E)}$ by Definition~\ref{df:graph to game} and Lemma~\ref{lm:useful law}.
Then, every useful law in the game $\mathcal{G}_{(k,V,E)}$ is a useful reduction of law $V$ by item~\ref{dfitem:useful reduction} of Definition~\ref{df:minimum useful reduction}.
Thus, the next corollary follows from Theorem~\ref{th:vertex cover=useful}.

\begin{corollary}\label{cr:vertex cover=useful reduction}
A set $C$ is a vertex cover of a $k$-graph $(V,E)$ if and only if $C$ is a useful reduction of law $V$ in game $\mathcal{G}_{(k,V,E)}$.
\end{corollary}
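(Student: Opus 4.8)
The plan is to derive Corollary~\ref{cr:vertex cover=useful reduction} directly from Theorem~\ref{th:vertex cover=useful} together with the observation that, in the game $\mathcal{G}_{(k,V,E)}$, the full action set $V$ is itself a useful law, so that every useful law in this game is automatically a useful \emph{reduction} of $V$. Concretely, I would proceed as follows.

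First I would establish that $V=\bigcup\Delta_{(k,V,E)}$ is a useful law in $\mathcal{G}_{(k,V,E)}$. By item~\ref{dfitem:graph to game actions} of Definition~\ref{df:graph to game}, every agent's action set equals $V$, so $\bigcup\Delta_{(k,V,E)}=V$. For any prohibited profile $\delta^e\in\mathbb{P}_{(k,V,E)}$ we have $\mathcal{S}(\delta^e)\subseteq V$, and since $E$ being a set of edges forces each edge $e$ to be nonempty, $\mathcal{S}(\delta^e)$ is nonempty; hence $V\cap\mathcal{S}(\delta^e)=\mathcal{S}(\delta^e)\neq\varnothing$. By Lemma~\ref{lm:useful law}, $V$ is useful. (If one wants to be careful about the degenerate case $E=\varnothing$, the condition in Lemma~\ref{lm:useful law} is vacuous and $V$ is still useful.)

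Next I would combine this with the definition of useful reduction. By item~\ref{dfitem:useful reduction} of Definition~\ref{df:minimum useful reduction}, a useful reduction of $V$ is precisely a useful law $C$ with $C\subseteq V$. Now take any set $C$. If $C$ is a vertex cover of $(V,E)$, then by item~\ref{dfitem:vertex cover cover} of Definition~\ref{df:vertex cover} we have $C\subseteq V$, and by Theorem~\ref{th:vertex cover=useful} $C$ is a useful law in $\mathcal{G}_{(k,V,E)}$; the two facts together say $C$ is a useful reduction of $V$. Conversely, if $C$ is a useful reduction of $V$, then in particular $C$ is a useful law in $\mathcal{G}_{(k,V,E)}$, so by Theorem~\ref{th:vertex cover=useful} $C$ is a vertex cover of $(V,E)$. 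This closes the equivalence.

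There is essentially no hard step here: the corollary is a bookkeeping consequence of the already-proved Theorem~\ref{th:vertex cover=useful} plus the trivial fact that $V$ is useful. The only point that deserves a line of care is confirming $\bigcup\Delta_{(k,V,E)}=V$ (so that ``$C\subseteq V$'' and ``$C\subseteq\bigcup\Delta$'' coincide, matching Definition~\ref{df:law in game} and Definition~\ref{df:minimum useful reduction}), and noting that edges are nonempty so that Lemma~\ref{lm:useful law}'s hypothesis is met for $V$. Everything else is substitution into definitions, so I would keep the write-up to a short paragraph invoking Theorem~\ref{th:vertex cover=useful}, Lemma~\ref{lm:useful law}, and item~\ref{dfitem:useful reduction} of Definition~\ref{df:minimum useful reduction}.
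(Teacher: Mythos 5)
Your proposal is correct and follows exactly the paper's route: observe that $V=\bigcup\Delta_{(k,V,E)}$ is useful (by Lemma~\ref{lm:useful law}), so a useful reduction of $V$ is precisely a useful law contained in $V$, and then apply Theorem~\ref{th:vertex cover=useful} in both directions. The extra care you take about $\bigcup\Delta_{(k,V,E)}=V$ and the nonemptiness of edges is consistent with, and slightly more explicit than, the paper's own one-line justification.
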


Note that, due to the consistency of minimality in item~\ref{dfitem:minimum vertex cover} of Definition~\ref{df:vertex cover} and item~\ref{dfitem:minimum useful reduction} of Definition~\ref{df:minimum useful reduction}, Corollary~\ref{cr:vertex cover=useful reduction} implies a reduction from $\mathbf{MinVC}$ to $\mathbf{MinUR}$.
This, together with Theorem~\ref{th:vertex cover hard to approximate}, further implies an \textit{inapproximability result} of the $\mathbf{MinUR}$ problem as stated in the next theorem. 
See Appendix~\ref{app_sec:proof of Theorem MiUR hard to approximate} for its proof.

\begin{theorem}\label{th:MinUR hard to approximate}
    $\mathbf{MinUR}$ in a game $(\mathcal{A},\Delta,\mathbb{P})$ is NP-hard to approximate within factor $|\mathcal{A}|-\epsilon$ for any $\epsilon\!>\!0$ when $|\mathcal{A}|\!\geq\! 2$.
\end{theorem}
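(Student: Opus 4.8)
The plan is to transfer the inapproximability of $\mathbf{MinVC}$ (Theorem~\ref{th:vertex cover hard to approximate}) to $\mathbf{MinUR}$ via the reduction $(V,E)\mapsto\mathcal{G}_{(k,V,E)}$ already established in Corollary~\ref{cr:vertex cover=useful reduction}. First I would observe that Corollary~\ref{cr:vertex cover=useful reduction} gives a size-preserving bijection between vertex covers of a $k$-graph $(V,E)$ and useful reductions of the law $V$ in $\mathcal{G}_{(k,V,E)}$; in particular it maps minimum vertex covers to minimum useful reductions and preserves the cardinality of \emph{every} solution. Hence a hypothetical polynomial-time algorithm approximating $\mathbf{MinUR}$ within factor $t$ yields, by composing it with this reduction, a polynomial-time algorithm approximating $\mathbf{MinVC}$ within the same factor $t$ on $k$-graphs. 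Note that the reduction is computable in polynomial time by Definition~\ref{df:graph to game}: the agent set is $[k]$, each action set is a copy of $V$, and the prohibition has exactly $|E|$ profiles, each of size $k$, so the whole game has size polynomial in $|V|+|E|+k$.

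Next I would match up the approximation factors. In the game $\mathcal{G}_{(k,V,E)}$ the agent set is $[k]$, so $|\mathcal{A}|=k$. Therefore an algorithm approximating $\mathbf{MinUR}$ within factor $|\mathcal{A}|-\epsilon$ would, on instances of the form $\mathcal{G}_{(k,V,E)}$, approximate $\mathbf{MinVC}$ on $k$-graphs within factor $k-\epsilon$. By Theorem~\ref{th:vertex cover hard to approximate}, the latter is NP-hard for every $\epsilon>0$ whenever $k\geq 2$; taking $k\geq 2$ corresponds exactly to $|\mathcal{A}|\geq 2$. This contradicts $\mathrm{P}=\mathrm{NP}$ (modulo the Unique Games Conjecture on which Theorem~\ref{th:vertex cover hard to approximate} rests), giving the claimed NP-hardness of approximating $\mathbf{MinUR}$ within factor $|\mathcal{A}|-\epsilon$.

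One small technical wrinkle worth addressing is that Theorem~\ref{th:vertex cover hard to approximate} is stated (per its footnote) for the subclass of $k$-graphs in which every edge has exactly $k$ vertices, whereas Definition~\ref{df:k-rank hypergraph} permits edges of size between $1$ and $k$. This is not an obstacle: the hard instances produced by that theorem already lie in the more general class, so the reduction applies to them verbatim, and the resulting games $\mathcal{G}_{(k,V,E)}$ are legitimate games with $|\mathcal{A}|=k$ agents. I expect the main (and only real) obstacle to be purely expository — namely, being careful that "approximating within factor $t$" is preserved under the reduction because the reduction is a cardinality-preserving bijection on the solution sets, not merely on optimal solutions, so that the ratio achieved on the image instance equals the ratio achieved on the preimage. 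Once that is spelled out, the argument is a routine composition, and the proof can simply be deferred to Appendix~\ref{app_sec:proof of Theorem MiUR hard to approximate} as the excerpt indicates.
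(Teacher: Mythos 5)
Your proposal is correct and follows essentially the same route as the paper's proof in Appendix~\ref{app_sec:proof of Theorem MiUR hard to approximate}: assume a polynomial-time $(|\mathcal{A}|-\epsilon)$-approximation for $\mathbf{MinUR}$, compose it with the reduction $(V,E)\mapsto\mathcal{G}_{(k,V,E)}$ of Definition~\ref{df:graph to game}, and invoke Corollary~\ref{cr:vertex cover=useful reduction} (which is the identity on solution sets, hence cardinality-preserving) to obtain a $(k-\epsilon)$-approximation for $\mathbf{MinVC}$, contradicting Theorem~\ref{th:vertex cover hard to approximate}. Your extra remarks on polynomial-time computability of the reduction and on the footnote's restriction to exact-$k$ edges are sound and only make the argument more careful than the paper's version.
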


\subsection{Reducing Useful Law to Vertex Cover}\label{sec:useful law-solution}

In this subsection, we show that any instance of a $\mathbf{UL}$ problem can be polynomially reduced to an instance of a $\mathbf{VC}$ problem.
This demonstrates that the $\mathbf{UL}$ problems are \textit{no harder than} the $\mathbf{VC}$ problems and suggests how $\mathbf{VC}$ algorithms can be used to solve the $\mathbf{UL}$ problems.

Technically, for any game $(\mathcal{A},\Delta,\mathbb{P})$, consider the $|\mathcal{A}|$-graph $(\bigcup\Delta,\mathcal{S}(\mathbb{P}))$ where $\mathcal{S}(\mathbb{P})=\{\mathcal{S}(\delta)\mid \delta\in\mathbb{P}\}$.\footnote{
$(\bigcup\Delta,\mathcal{S}(\mathbb{P}))$ is an $|\mathcal{A}|$-graph by Lemma~\ref{lm:game to graph} in Appendix~\ref{app_sec:game to |A|-graph}.
}
It is not hard to get the next theorem by Lemma~\ref{lm:useful law} and item~\ref{dfitem:vertex cover cover} of Definition~\ref{df:vertex cover}. See Appendix~\ref{app_sec:proof of Theorem useful=vertex cover} for its formal proof.

\begin{theorem}\label{th:useful=vertex cover}
A set $L$ is a useful law in a game $(\mathcal{A},\Delta,\mathbb{P})$ if and only if $L$ is a vertex cover of the graph $(\bigcup\Delta,\mathcal{S}(\mathbb{P}))$.
\end{theorem}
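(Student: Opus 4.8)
The plan is to prove the biconditional by directly unpacking both definitions and showing they coincide term-by-term, leaning on Lemma~\ref{lm:useful law} as the bridge. First I would recall Lemma~\ref{lm:useful law}: a law $L$ in $(\mathcal{A},\Delta,\mathbb{P})$ is useful if and only if $L\cap\mathcal{S}(\delta)\neq\varnothing$ for every $\delta\in\mathbb{P}$. On the vertex cover side, by item~\ref{dfitem:vertex cover cover} of Definition~\ref{df:vertex cover}, a set $L$ is a vertex cover of the $|\mathcal{A}|$-graph $(\bigcup\Delta,\mathcal{S}(\mathbb{P}))$ if and only if $L\subseteq\bigcup\Delta$ and $L\cap e\neq\varnothing$ for every edge $e\in\mathcal{S}(\mathbb{P})$. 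So the two conditions to be matched are ``$L\cap\mathcal{S}(\delta)\neq\varnothing$ for all $\delta\in\mathbb{P}$'' versus ``$L\cap e\neq\varnothing$ for all $e\in\mathcal{S}(\mathbb{P})$''.

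The second key step is to observe that, by the definition $\mathcal{S}(\mathbb{P})=\{\mathcal{S}(\delta)\mid\delta\in\mathbb{P}\}$, the edge set of the graph is exactly the collection of support sets of prohibited profiles. Hence the universally quantified intersection condition over $e\in\mathcal{S}(\mathbb{P})$ is logically equivalent to the same condition ranging over $\mathcal{S}(\delta)$ for $\delta\in\mathbb{P}$ — quantifying over $\mathbb{P}$ rather than over $\mathcal{S}(\mathbb{P})$ can only introduce repeated edges, which does not change whether every edge is hit. This makes the ``hitting'' halves of the two characterisations identical.

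The remaining step is to reconcile the side conditions: usefulness requires $L$ to be a law, i.e. $L\subseteq\bigcup\Delta$ by Definition~\ref{df:law in game}, and being a vertex cover requires $L\subseteq V=\bigcup\Delta$ by Definition~\ref{df:vertex cover}. These coincide, so both directions go through: if $L$ is a useful law, Lemma~\ref{lm:useful law} gives the hitting condition and Definition~\ref{df:law in game} gives $L\subseteq\bigcup\Delta$, so $L$ is a vertex cover; conversely, a vertex cover $L$ satisfies $L\subseteq\bigcup\Delta$ and the hitting condition, hence is a law and, by Lemma~\ref{lm:useful law}, useful. I should also note in passing that $(\bigcup\Delta,\mathcal{S}(\mathbb{P}))$ is genuinely an $|\mathcal{A}|$-graph — each support set $\mathcal{S}(\delta)$ has between $1$ and $|\mathcal{A}|$ elements since $\mathcal{A}$ is nonempty and finite — which the excerpt already defers to Lemma~\ref{lm:game to graph}, so I can simply cite it.

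I do not anticipate a genuine obstacle here; the statement is essentially a translation lemma. The only point requiring a little care is the quantifier bookkeeping in the second step (ranging over $\mathbb{P}$ versus over the de-duplicated set $\mathcal{S}(\mathbb{P})$) and making sure the law-membership condition $L\subseteq\bigcup\Delta$ is not silently dropped on either side; handling an empty $\mathbb{P}$ (where both sides hold vacuously for any $L\subseteq\bigcup\Delta$) is immediate.
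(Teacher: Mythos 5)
Your proposal is correct and follows essentially the same route as the paper's proof: both directions are handled by matching the characterisation of usefulness from Lemma~\ref{lm:useful law} against item~\ref{dfitem:vertex cover cover} of Definition~\ref{df:vertex cover}, using $\mathcal{S}(\mathbb{P})=\{\mathcal{S}(\delta)\mid\delta\in\mathbb{P}\}$ to identify the edge-hitting condition with the profile-intersection condition and Definition~\ref{df:law in game} to reconcile $L\subseteq\bigcup\Delta$. Your extra remarks on quantifier bookkeeping and the empty-$\mathbb{P}$ case are harmless elaborations of the same argument.
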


Opposite to Theorem~\ref{th:vertex cover=useful}, Theorem~\ref{th:useful=vertex cover} implies reductions from $\mathbf{IsUL}$ and $\mathbf{IsMiniUL}$ to $\mathbf{IsVC}$ and $\mathbf{IsMiniVC}$, respectively.
The corresponding algorithms $\mathtt{IsUL}$ and $\mathtt{IsMiniUL}$ call $\mathtt{IsVC}$ and $\mathtt{IsMiniVC}$, respectively, as shown in Algorithm~\ref{alg:UL design} of Appendix~\ref{app_sec:algorithms for the UL problems}.

Next, we consider the $\mathbf{MinUR}$ problem.
Note that, by Theorem~\ref{th:useful=vertex cover}, reducing a useful law corresponds to finding a smaller vertex cover within an existing one.
Meanwhile, the smaller vertex cover intersects every edge at the vertices in the original cover.
Given this observation, the smaller vertex cover can be regarded as a vertex cover in a subgraph induced by the original cover, which is formalised as follows.

\begin{definition}\label{df:vertex set reduced graph}
For a vertex cover $C$ of a $k$-graph $(V,E)$, the induced subgraph is the $k$-graph $(C,E^C)$ such that $E^C=\{C\cap e\mid e\in E\}$.
\end{definition}

The subgraph $(C,E^C)$ is well-defined because $C\cap e\neq\varnothing$ and $|C\cap e|\leq |e|\leq k$ for each edge $e\in E$ as $C$ is a vertex cover in the $k$-graph $(V,E)$. Moreover, observe that a vertex cover of the subgraph $(C,E^C)$ is also a vertex cover of the graph $(V,E)$.
Then, Theorem~\ref{th:useful=vertex cover} implies the next theorem. See Appendix~\ref{app_sec:proof of Theorem useful reduction=vertex cover} for its formal proof.

\begin{theorem}\label{th:useful reduction=vertex cover}
For a useful law $L$ in a game $(\mathcal{A},\Delta,\mathbb{P})$, law $L'$ is a useful reduction of $L$ if and only if $L'$ is a vertex cover of the induced subgraph $(L,\mathcal{S}(\mathbb{P})^L)$.
\end{theorem}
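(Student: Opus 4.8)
The plan is to obtain the theorem from the correspondence of Theorem~\ref{th:useful=vertex cover} together with an elementary fact about induced subgraphs. First I would note that, since $L$ is useful, Theorem~\ref{th:useful=vertex cover} makes $L$ a vertex cover of the $|\mathcal{A}|$-graph $(\bigcup\Delta,\mathcal{S}(\mathbb{P}))$, so that the induced subgraph $(L,\mathcal{S}(\mathbb{P})^L)$ is well-defined as noted after Definition~\ref{df:vertex set reduced graph}. Unfolding Definition~\ref{df:vertex set reduced graph} with $V=\bigcup\Delta$, $E=\mathcal{S}(\mathbb{P})$, and $C=L$, its edge set is $\mathcal{S}(\mathbb{P})^L=\{L\cap\mathcal{S}(\delta)\mid\delta\in\mathbb{P}\}$.

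Next I would rewrite both sides of the claimed equivalence. By item~\ref{dfitem:useful reduction} of Definition~\ref{df:minimum useful reduction}, $L'$ is a useful reduction of $L$ precisely when $L'\subseteq L$ and $L'$ is a useful law; and by Theorem~\ref{th:useful=vertex cover}, $L'$ is a useful law precisely when $L'$ is a vertex cover of $(\bigcup\Delta,\mathcal{S}(\mathbb{P}))$. So it suffices to prove the following purely graph-theoretical statement for an arbitrary vertex cover $C$ of an arbitrary $k$-graph $(V,E)$: a set $C'$ is a vertex cover of $(V,E)$ with $C'\subseteq C$ if and only if $C'$ is a vertex cover of $(C,E^C)$. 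For the ``only if'' part, $C'\subseteq C$ shows $C'$ lies in the vertex set of $(C,E^C)$, and for each edge $C\cap e\in E^C$ the identity $C'\cap(C\cap e)=C'\cap e$ (which holds because $C'\subseteq C$) together with $C'\cap e\neq\varnothing$ shows $C'$ covers it. For the ``if'' part, a vertex cover $C'$ of $(C,E^C)$ satisfies $C'\subseteq C\subseteq V$ by item~\ref{dfitem:vertex cover cover} of Definition~\ref{df:vertex cover}; and for each $e\in E$ we have $C\cap e\in E^C$, hence $\varnothing\neq C'\cap(C\cap e)\subseteq C'\cap e$, so $C'$ covers every edge of $(V,E)$. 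Instantiating this with $V=\bigcup\Delta$, $E=\mathcal{S}(\mathbb{P})$, $C=L$, and $C'=L'$ gives the theorem.

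I do not expect a genuine obstacle: the core is the bookkeeping identity $C'\cap C\cap e=C'\cap e$ for $C'\subseteq C$, together with the observation that ``$C'$ belongs to the vertex set of the induced subgraph'' already encodes the constraint $L'\subseteq L$, so no separate argument for the subset relation is needed. The only point deserving an explicit line is that $(L,\mathcal{S}(\mathbb{P})^L)$ is a legitimate $|\mathcal{A}|$-graph, i.e.\ that every $L\cap\mathcal{S}(\delta)$ is nonempty (immediate from Lemma~\ref{lm:useful law} applied to the useful law $L$) and has at most $|\mathcal{A}|$ elements; this is exactly what is asserted just before the theorem, and can be cited rather than re-proved.
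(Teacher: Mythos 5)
Your proof is correct, but it takes a different (and more modular) route than the paper's own proof of this theorem. The paper proves Theorem~\ref{th:useful reduction=vertex cover} directly by contraposition: it assumes $L'$ is not a vertex cover of $(L,\mathcal{S}(\mathbb{P})^L)$ (resp.\ not a useful reduction) and splits into the cases $L'\nsubseteq L$ versus a failed covering/usefulness condition, invoking Lemma~\ref{lm:useful law} and the identity $L'\cap L\cap\mathcal{S}(\delta)=L'\cap\mathcal{S}(\delta)$ for $L'\subseteq L$. You instead translate both sides through Theorem~\ref{th:useful=vertex cover} and reduce everything to a purely graph-theoretic fact: for a vertex cover $C$ of $(V,E)$, a set $C'$ is a vertex cover of $(V,E)$ contained in $C$ iff it is a vertex cover of the induced subgraph $(C,E^C)$. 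The core computation ($C'\cap C\cap e=C'\cap e$ when $C'\subseteq C$) is identical in both arguments, so the mathematical content is the same; what your version buys is reusability, and in fact the paper itself adopts exactly your factorisation later, as Lemma~\ref{lm:reduced vertex cover in reduced graph} combined with Theorem~\ref{th:useful=vertex cover} (see Lemma~\ref{lm:reduce gap-free reduction to vertex cover in reduced graph 1}), when handling the gap-free analogue. Your closing remark about well-definedness of the induced subgraph is also handled the same way in the paper (via the observation after Definition~\ref{df:vertex set reduced graph} and Lemma~\ref{lm:game and useful law to reduced graph}). No gaps.
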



In contrast with Corollary~\ref{cr:vertex cover=useful reduction},
Theorem~\ref{th:useful reduction=vertex cover} implies a reduction from $\mathbf{MinUR}$ to $\mathbf{MinVC}$ and an algorithm $\mathtt{AppMinUR}$ that calls $\mathtt{AppMinVC}$ (see Algorithm~\ref{alg:UL design} of Appendix~\ref{app_sec:algorithms for the UL problems}).
Note that the graph $(L,\mathcal{S}(\mathbb{P})^L)$ in Theorem~\ref{th:useful reduction=vertex cover} is an $|\mathcal{A}|$-graph, on which $\mathtt{AppMinVC}$ is an  $|\mathcal{A}|$-approximation.
This makes $\mathtt{AppMinUR}$ an $|\mathcal{A}|$-approximation of $\mathbf{MinUR}$, where $|\mathcal{A}|$ is the number of agents in the input game (see Appendix~\ref{app_sec:MinUR approximation factor} for a discussion). It means $\mathtt{AppMinUR}$ achieves a nearly optimal approximation factor by Theorem~\ref{th:MinUR hard to approximate}.

\section{Gap-Free Law Design}\label{sec:gap-free law}

Following Section~\ref{sec:formalisation-gap free}, in this section, we consider the next three problems about gap-free law design, which are collectively referred to as $\mathbf{GFL}$ problems:
\begin{itemize}
    \item $\mathbf{IsGFL}$: to verify if a set is a gap-free law in a game.
    \item $\mathbf{IsMiniGFL}$: to verify if a set is a minimal-gap-free law in a game.
    \item $\mathbf{MinGFR}$: to find a minimum gap-free reduction of a gap-free law in a game.
\end{itemize}

We first show the hardness of the $\mathbf{GFL}$ problems by reductions from the $\mathbf{UL}$ problems. Then, we provide a way to solve the $\mathbf{GFL}$ problems again using the $\mathbf{VC}$ algorithms.

\subsection{Reducing Useful Law to Gap-Free Law}\label{sec:gap-free law-hardness}

In this subsection, we show that any instance of a $\mathbf{UL}$ problem can be polynomially reduced to an instance of a $\mathbf{GFL}$ problem. 
Specifically, for any game $(\mathcal{A},\Delta,\mathbb{P})$, we construct a game $(\bar{\mathcal{A}},\bar{\Delta},\bar{\mathbb{P}})$ as illstrated in the next definition.

\begin{definition}\label{df:useful game to gap free game}
For a game $(\mathcal{A},\Delta,\mathbb{P})$, an agent $\gamma\notin\mathcal{A}$, and two distinct actions $p,n\notin\bigcup\Delta$, let the game $(\bar{\mathcal{A}},\bar{\Delta},\bar{\mathbb{P}})$ be:
\begin{enumerate}
    \item $\bar{\mathcal{A}}=\mathcal{A}\cup\{\gamma\}$;\label{dfitem:useful game to gap free game agent}
    \item $\bar{\Delta}=\{\bar{\Delta}_a\}_{a\in\bar{\mathcal{A}}}$ where 
    $
        \bar{\Delta}_a=\begin{cases}
            \Delta_a\cup\{n\}, &\!\text{if } a\in\mathcal{A};\\
            \{p, n\}, &\!\text{if } a=\gamma;
        \end{cases}
    $\label{dfitem:useful game to gap free game action}
    \item $\bar{\mathbb{P}}=\bar{\mathbb{P}}_1\cup\bar{\mathbb{P}}_2\cup\bar{\mathbb{P}}_3$ where\begin{itemize}
        \item $\bar{\mathbb{P}}_1=
        \{\bar{\delta}\mid \bar{\delta}_{\gamma}=p, (\exists\delta\in\mathbb{P},\forall a\in\mathcal{A},\bar{\delta}_a=\delta_a)\}$;
        \item $\bar{\mathbb{P}}_2=\{\bar{\delta}\mid\exists a\in\mathcal{A}\,(\bar{\delta}_a\in\Delta_a, \forall b\in\bar{\mathcal{A}}\setminus\{a\},\bar{\delta}_b=n)\}$;
        \item $\bar{\mathbb{P}}_3=\{\bar{\delta}\mid \forall a\in\bar{\mathcal{A}}, \bar{\delta}_a=n\}$.
    \end{itemize} \label{dfitem:useful game to gap free game prohibition}
\end{enumerate}
\end{definition}

Observe that the new agent $\gamma$ takes action $n$ in each profile in sets $\bar{\mathbb{P}}_2,\bar{\mathbb{P}}_3$.
Meanwhile, by Lemma~\ref{lm:useful law} and the definition of $\bar{\mathbb{P}}_1$, a useful law $L$ in the game $(\mathcal{A},\Delta,\mathbb{P})$ intersects with each profile in set $\bar{\mathbb{P}}_1$. 
Then, law $L$ makes $p$ a safe action of agent $\gamma$ in the game $(\bar{\mathcal{A}},\bar{\Delta}^L,\bar{\mathbb{P}}^L)$ by Lemma~\ref{lm:safe action in law imposed game}.
Thus, law $L$ is a gap-free law in the game $(\bar{\mathcal{A}},\bar{\Delta},\bar{\mathbb{P}})$ by Lemma~\ref{lm:gap free law}.
Theorem~\ref{th:reduce useful to gap-free} below formalises the above observation and establishes its converse as well.
See Appendix~\ref{app_sec:proof of Theorem reduce useful to gap-free} for its formal proof.

\begin{theorem}\label{th:reduce useful to gap-free}
A set $L\subseteq\bigcup\Delta$ is a useful law in a game $(\mathcal{A},\Delta,\mathbb{P})$ if and only if $L$ is a gap-free law in the game $(\bar{\mathcal{A}},\bar{\Delta},\bar{\mathbb{P}})$.
\end{theorem}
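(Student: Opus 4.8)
The plan is to route both directions of the equivalence through a single intermediate condition: that $p$ is a safe action of agent $\gamma$ in the law-imposed game $(\bar{\mathcal{A}},\bar{\Delta}^L,\bar{\mathbb{P}}^L)$. First I would record three bookkeeping facts about $\bar{\mathbb{P}}$. (i) The all-$n$ profile $\bar{\delta}^n$ lies in $\bar{\mathbb{P}}_3\subseteq\bar{\mathbb{P}}$, and since $n\notin\bigcup\Delta\supseteq L$ its support $\{n\}$ is disjoint from $L$; by Lemma~\ref{lm:safe action in law imposed game} this already shows that $n$ is not a safe action of \emph{any} agent in the law-imposed game, and, since $\bar{\delta}^n\in\bar{\mathbb{P}}^L$, that $L$ is never useful in $(\bar{\mathcal{A}},\bar{\Delta},\bar{\mathbb{P}})$. (ii) For each $a\in\mathcal{A}$ and each $d\in\Delta_a\setminus L$, the profile in which $a$ plays $d$ and every other agent of $\bar{\mathcal{A}}$ (in particular $\gamma$) plays $n$ belongs to $\bar{\mathbb{P}}_2\subseteq\bar{\mathbb{P}}$, has support $\{d,n\}$ disjoint from $L$, and so witnesses via Lemma~\ref{lm:safe action in law imposed game} that $d$ is not a safe action of $a$; as the lawful actions of $a$ in the law-imposed game are exactly $(\Delta_a\setminus L)\cup\{n\}$, fact~(i) and this observation together imply that no agent of $\mathcal{A}$ has a safe action there. (iii) A profile $\bar{\delta}\in\bar{\mathbb{P}}$ satisfies $\bar{\delta}_\gamma=p$ precisely when $\bar{\delta}\in\bar{\mathbb{P}}_1$ (in $\bar{\mathbb{P}}_2$ and $\bar{\mathbb{P}}_3$ agent $\gamma$ plays $n$), and restriction to $\mathcal{A}$ is a bijection between $\bar{\mathbb{P}}_1$ and $\mathbb{P}$ with $\mathcal{S}(\bar{\delta})=\mathcal{S}(\delta)\cup\{p\}$ for corresponding $\bar{\delta},\delta$.

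Using these facts, I would establish the key equivalence: $L$ is useful in $(\mathcal{A},\Delta,\mathbb{P})$ if and only if $p$ is a safe action of $\gamma$ in the law-imposed game. Since $p\in\bar{\Delta}_\gamma$ and $p\notin\bigcup\Delta\supseteq L$, we have $p\in\bar{\Delta}_\gamma^L$ automatically, so by Lemma~\ref{lm:safe action in law imposed game} being a safe action of $\gamma$ amounts to $L\cap\mathcal{S}(\bar{\delta})\neq\varnothing$ for every $\bar{\delta}\in\bar{\mathbb{P}}$ with $\bar{\delta}_\gamma=p$; by fact~(iii) together with $p\notin L$ this is equivalent to $L\cap\mathcal{S}(\delta)\neq\varnothing$ for every $\delta\in\mathbb{P}$, i.e.\ to usefulness of $L$ by Lemma~\ref{lm:useful law}.

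Next I would close the loop by showing that $L$ is gap-free in $(\bar{\mathcal{A}},\bar{\Delta},\bar{\mathbb{P}})$ if and only if $p$ is a safe action of $\gamma$ there. The ``if'' direction is immediate from Lemma~\ref{lm:gap free law}. For ``only if'', suppose $L$ is gap-free; by Lemma~\ref{lm:gap free law} either $L$ is useful in the new game, which is impossible by fact~(i), or some agent $a\in\bar{\mathcal{A}}$ has a safe action in the law-imposed game. By fact~(ii), $a\notin\mathcal{A}$, so $a=\gamma$; and by fact~(i) that safe action is not $n$, hence it must be $p$, the only other action in $\bar{\Delta}_\gamma$. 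Chaining this equivalence with the one from the previous paragraph yields the theorem.

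The hard part is the converse direction treated in the previous paragraph, and it is exactly what motivates the auxiliary prohibited profiles in Definition~\ref{df:useful game to gap free game}: one must check that $\bar{\mathbb{P}}_2$ and $\bar{\mathbb{P}}_3$ genuinely eliminate every ``spurious'' source of gap-freeness — namely any safe action of an original agent, and the action $n$ as a safe action of $\gamma$ — so that the sole route to gap-freeness in the constructed game is through usefulness in the original game. The forward direction and the facts~(i)--(iii) are routine applications of Lemmas~\ref{lm:useful law}, \ref{lm:safe action in law imposed game}, and~\ref{lm:gap free law}.
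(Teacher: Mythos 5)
Your proposal is correct and follows essentially the same route as the paper's proof: both arguments use $\bar{\mathbb{P}}_1$ to make the safety of $p$ for $\gamma$ equivalent to usefulness of $L$ in the original game, $\bar{\mathbb{P}}_2$ to rule out safe actions for the original agents, and $\bar{\mathbb{P}}_3$ to rule out $n$ as a safe action and usefulness of $L$ in the constructed game, all via Lemmas~\ref{lm:useful law}, \ref{lm:safe action in law imposed game}, and~\ref{lm:gap free law}. The only difference is presentational — you chain two explicit equivalences through the intermediate condition ``$p$ is a safe action of $\gamma$,'' whereas the paper proves the forward direction directly and the converse by contrapositive — but the substance is identical.
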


Note that, due to the consistency of minimality in Definition~\ref{df:minimal useful law} and Definition~\ref{df:minimal gap free}, Theorem~\ref{th:reduce useful to gap-free} indeed implies reductions from $\mathbf{IsUL}$ and $\mathbf{IsMiniUL}$ to $\mathbf{IsGFL}$ and $\mathbf{IsMiniGFL}$, respectively.
Moreover, when considering a reduction $L'$ of a useful law $L$ in the game $(\mathcal{A},\Delta,\mathbb{P})$, it is guaranteed that $L'\subseteq L\subseteq\bigcup\Delta$.
Then, by item~\ref{dfitem:useful reduction} of  Definition~\ref{df:minimum useful reduction} and item~\ref{dfitem:gap free reduction} of Definition~\ref{df:minimum gap free reduction}, Theorem~\ref{th:reduce useful to gap-free} further implies the next corollary.

\begin{corollary}\label{cr:useful reduction to gap-free reduction}
For a useful law $L$ in a game $(\mathcal{A},\Delta,\mathbb{P})$, a set $L'$ is a useful reduction of $L$ in the game $(\mathcal{A},\Delta,\mathbb{P})$ if and only if $L'$ is a gap-free reduction of $L$ in the game $(\bar{\mathcal{A}},\bar{\Delta},\bar{\mathbb{P}})$.
\end{corollary}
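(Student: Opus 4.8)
The plan is to deduce the corollary directly from Theorem~\ref{th:reduce useful to gap-free} by restricting that theorem's equivalence to subsets of the given useful law $L$. The one structural fact I would record first is that every useful law is, by Definition~\ref{df:law in game}, a subset of $\bigcup\Delta$; hence any $L'\subseteq L$ automatically satisfies $L'\subseteq\bigcup\Delta$, which is precisely the hypothesis under which Theorem~\ref{th:reduce useful to gap-free} is stated. I would also note at the outset that, since $L$ is useful in $(\mathcal{A},\Delta,\mathbb{P})$, Theorem~\ref{th:reduce useful to gap-free} gives that $L$ is gap-free in $(\bar{\mathcal{A}},\bar{\Delta},\bar{\mathbb{P}})$; this is what makes the phrase ``gap-free reduction of $L$'' in the game $(\bar{\mathcal{A}},\bar{\Delta},\bar{\mathbb{P}})$ well-posed, cf.\ the preamble of Definition~\ref{df:minimum gap free reduction}.

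For the forward direction, suppose $L'$ is a useful reduction of $L$ in $(\mathcal{A},\Delta,\mathbb{P})$. By item~\ref{dfitem:useful reduction} of Definition~\ref{df:minimum useful reduction} this means $L'\subseteq L$ and $L'$ is useful in $(\mathcal{A},\Delta,\mathbb{P})$. Then $L'\subseteq L\subseteq\bigcup\Delta$, so the ``only if'' direction of Theorem~\ref{th:reduce useful to gap-free}, applied with $L'$ in place of $L$, yields that $L'$ is gap-free in $(\bar{\mathcal{A}},\bar{\Delta},\bar{\mathbb{P}})$; together with $L'\subseteq L$ and the fact that $L$ is a gap-free law in that game, this is exactly the statement that $L'$ is a gap-free reduction of $L$ in $(\bar{\mathcal{A}},\bar{\Delta},\bar{\mathbb{P}})$, by item~\ref{dfitem:gap free reduction} of Definition~\ref{df:minimum gap free reduction}.

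The converse is symmetric: if $L'$ is a gap-free reduction of $L$ in $(\bar{\mathcal{A}},\bar{\Delta},\bar{\mathbb{P}})$, then $L'\subseteq L$ and $L'$ is gap-free in $(\bar{\mathcal{A}},\bar{\Delta},\bar{\mathbb{P}})$; since again $L'\subseteq L\subseteq\bigcup\Delta$, the ``if'' direction of Theorem~\ref{th:reduce useful to gap-free} gives that $L'$ is useful in $(\mathcal{A},\Delta,\mathbb{P})$, and with $L'\subseteq L$ this makes $L'$ a useful reduction of $L$ in $(\mathcal{A},\Delta,\mathbb{P})$ by item~\ref{dfitem:useful reduction} of Definition~\ref{df:minimum useful reduction}.

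There is no genuine obstacle here; the only point requiring a moment's care is the bookkeeping about the ambient action set — specifically, confirming that $L'\subseteq\bigcup\Delta$ (so that Theorem~\ref{th:reduce useful to gap-free} applies verbatim to $L'$ and not merely to $L$), and confirming that $L$ is a bona fide gap-free law in $(\bar{\mathcal{A}},\bar{\Delta},\bar{\mathbb{P}})$ (so that ``reduction of $L$'' is meaningful on that side). Everything else is a direct unfolding of Definitions~\ref{df:minimum useful reduction} and~\ref{df:minimum gap free reduction} against Theorem~\ref{th:reduce useful to gap-free}.
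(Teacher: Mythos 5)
Your proposal is correct and follows essentially the same route the paper takes: the paper likewise derives the corollary by observing that any reduction $L'$ of the useful law $L$ satisfies $L'\subseteq L\subseteq\bigcup\Delta$, so Theorem~\ref{th:reduce useful to gap-free} applies verbatim to $L'$, and the rest is unfolding item~\ref{dfitem:useful reduction} of Definition~\ref{df:minimum useful reduction} and item~\ref{dfitem:gap free reduction} of Definition~\ref{df:minimum gap free reduction}. Your extra remark that $L$ itself is gap-free in $(\bar{\mathcal{A}},\bar{\Delta},\bar{\mathbb{P}})$, so that ``gap-free reduction of $L$'' is well-posed, is a sensible piece of bookkeeping that the paper leaves implicit.
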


Given the consistency of minimality in item~\ref{dfitem:minimum useful reduction} of  Definition~\ref{df:minimum useful reduction} and item~\ref{dfitem:minimum gap free reduction} of Definition~\ref{df:minimum gap free reduction}, Corollary~\ref{cr:useful reduction to gap-free reduction} implies a reduction from $\mathbf{MinUR}$ to $\mathbf{MinGFR}$.
This, together with Theorem~\ref{th:MinUR hard to approximate} and the extra agent $\gamma$ in Definition~\ref{df:useful game to gap free game}, further implies an \textit{inapproximability result} of  $\mathbf{MinGFR}$ as stated in the next theorem. See Appendix~\ref{app_sec:proof of Theorem MinGFR hard to approximate} for its proof.

\begin{theorem}\label{th:MinGFR hard to approximate}
    $\mathbf{MinGFR}$ in a game $(\mathcal{A},\Delta,\mathbb{P})$ is NP-hard to approximate within factor $|\mathcal{A}|-1-\epsilon$ for any $\epsilon\!>\!0$ when $|\mathcal{A}|\!\geq\! 3$.
\end{theorem}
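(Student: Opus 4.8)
The plan is to convert the reduction underlying Corollary~\ref{cr:useful reduction to gap-free reduction} into an approximation‑preserving reduction from $\mathbf{MinUR}$ to $\mathbf{MinGFR}$, and then quote the inapproximability of $\mathbf{MinUR}$ from Theorem~\ref{th:MinUR hard to approximate}. The one piece of bookkeeping that actually matters is that the construction of Definition~\ref{df:useful game to gap free game} adds exactly one fresh agent $\gamma$, so a game on $|\mathcal{A}|$ agents becomes a game on $|\mathcal{A}|+1$ agents; the ``$-1$'' in the claimed factor is precisely this extra agent.

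First I would take an arbitrary instance of $\mathbf{MinUR}$, namely a useful law $L$ in a game $(\mathcal{A},\Delta,\mathbb{P})$, and build the game $(\bar{\mathcal{A}},\bar{\Delta},\bar{\mathbb{P}})$ of Definition~\ref{df:useful game to gap free game} (choosing the new symbols $\gamma,p,n$ outside the existing agent and action sets, which is trivially arrangeable). This construction is clearly polynomial‑time: it adds one agent, one new action per agent, and a prohibition $\bar{\mathbb{P}}=\bar{\mathbb{P}}_1\cup\bar{\mathbb{P}}_2\cup\bar{\mathbb{P}}_3$ of size $|\mathbb{P}|+O\bigl(\sum_{a\in\mathcal{A}}|\Delta_a|\bigr)+1$. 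By Theorem~\ref{th:reduce useful to gap-free}, the useful law $L$ is a gap‑free law in $(\bar{\mathcal{A}},\bar{\Delta},\bar{\mathbb{P}})$, so $L$ together with that game is a legitimate instance of $\mathbf{MinGFR}$.

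The heart of the argument is Corollary~\ref{cr:useful reduction to gap-free reduction}: the useful reductions of $L$ in $(\mathcal{A},\Delta,\mathbb{P})$ are exactly the gap‑free reductions of $L$ in $(\bar{\mathcal{A}},\bar{\Delta},\bar{\mathbb{P}})$, and these are literally the same sets of actions, hence of the same cardinality. Therefore the minimum useful reduction of $L$ and the minimum gap‑free reduction of $L$ have equal size, and any gap‑free reduction of $L$ output by a $\mathbf{MinGFR}$ solver on the constructed instance is itself a useful reduction of $L$ in the original game. Consequently, a polynomial‑time factor‑$t$ approximation for $\mathbf{MinGFR}$ yields, with no loss, a polynomial‑time factor‑$t$ approximation for $\mathbf{MinUR}$.

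Finally I would match the factors. Assume for contradiction that for some $\epsilon>0$ there is a polynomial‑time algorithm approximating $\mathbf{MinGFR}$ within factor $|\mathcal{A}|-1-\epsilon$ on all games with $|\mathcal{A}|\geq 3$ agents. Composing it with the reduction above, and using that the constructed game satisfies $|\bar{\mathcal{A}}|=|\mathcal{A}|+1$, hence $|\bar{\mathcal{A}}|-1=|\mathcal{A}|$, we obtain a polynomial‑time algorithm approximating $\mathbf{MinUR}$ within factor $|\mathcal{A}|-\epsilon$ on games with $|\mathcal{A}|\geq 2$ agents, contradicting Theorem~\ref{th:MinUR hard to approximate}. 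The only subtlety — and the closest thing to an obstacle — is that the inapproximability of Theorem~\ref{th:MinUR hard to approximate} is really parametrised by a fixed agent count $k\geq 2$ (inherited from the fixed rank in Theorem~\ref{th:vertex cover hard to approximate}), so the composition must be carried out for each such $k$ and then re‑indexed by $k+1\geq 3$; beyond this indexing there is no genuine mathematical difficulty.
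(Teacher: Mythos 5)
Your proposal is correct and follows essentially the same route as the paper's own proof: build the game $(\bar{\mathcal{A}},\bar{\Delta},\bar{\mathbb{P}})$ of Definition~\ref{df:useful game to gap free game}, use Corollary~\ref{cr:useful reduction to gap-free reduction} to identify useful reductions of $L$ with gap-free reductions of $L$ (preserving cardinality), and observe that $|\bar{\mathcal{A}}|-1=|\mathcal{A}|$ turns a $(|\bar{\mathcal{A}}|-1-\epsilon)$-approximation of $\mathbf{MinGFR}$ into a $(|\mathcal{A}|-\epsilon)$-approximation of $\mathbf{MinUR}$, contradicting Theorem~\ref{th:MinUR hard to approximate}. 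Your closing remark about the fixed-parameter indexing is a fair point of care, but it does not change the argument.
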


\subsection{Reducing Gap-Free Law to Vertex Cover}

In this subsection, we reduce the $\mathbf{GFL}$ problems to the $\mathbf{VC}$ problems.
By this means, we show a way to address the $\mathbf{GFL}$ problems using the $\mathbf{VC}$ algorithms.

Recall Lemma~\ref{lm:gap free law} that the gap-freeness of a law corresponds to the usefulness of the law and the existence of a safe action in the law-imposed game.
Section~\ref{sec:useful law-solution} demonstrated how the $\mathbf{UL}$ problems can be addressed using the $\mathbf{VC}$ algorithms.
Now, we discuss how a ``safe action in a law-imposed game'' is captured in the $\mathbf{VC}$ context.
Note that not every action can become a safe action in a law-imposed game. If it can, we say the action is {\bf\em safable}.

\begin{definition}\label{df:safable action}
For a game $(\mathcal{A},\Delta,\mathbb{P})$, an action $d\in\bigcup\Delta$ is \textbf{safable} if there is a law $L$ and an agent $a$ such that $d$ is a safe action of agent $a$ in the law-imposed game $(\mathcal{A},\Delta^L,\mathbb{P}^L)$.    
\end{definition}

The next lemma characterises when an action is safable: every agent taking it does not lead to a prohibited outcome. See Appendix~\ref{app_sec:proof of Lemma safable action} for its formal proof.

\begin{lemma}\label{lm:safable action}
For a game $(\mathcal{A},\Delta,\mathbb{P})$, an action $d\in\bigcup\Delta$ is safable if and only if $\mathcal{S}(\delta)\neq\{d\}$ for each profile $\delta\in\mathbb{P}$.
\end{lemma}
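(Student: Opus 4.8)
The plan is to prove both directions via the characterisation of safe actions in a law-imposed game given by Lemma~\ref{lm:safe action in law imposed game}, combined with the ability to choose the law $L$ freely in Definition~\ref{df:safable action}. Recall that an action $d$ is safable exactly when \emph{some} law $L$ and \emph{some} agent $a$ make $d$ a safe action of $a$ in $(\mathcal{A},\Delta^L,\mathbb{P}^L)$, and by Lemma~\ref{lm:safe action in law imposed game} this means $d\in\Delta_a^L$ (so $d\in\Delta_a$ and $d\notin L$) together with $L\cap\mathcal{S}(\delta)\neq\varnothing$ for each $\delta\in\mathbb{P}$ with $\delta_a=d$.

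For the ``only if'' direction, suppose $d$ is safable, witnessed by some $L$ and $a$. Take any $\delta\in\mathbb{P}$ with $\mathcal{S}(\delta)=\{d\}$; I want to derive a contradiction. From $\mathcal{S}(\delta)=\{d\}$ we get $\delta_b=d$ for every agent $b\in\mathcal{A}$, in particular $\delta_a=d$, so Lemma~\ref{lm:safe action in law imposed game} forces $L\cap\mathcal{S}(\delta)\neq\varnothing$, i.e. $d\in L$. But safability also requires $d\in\Delta_a^L=\Delta_a\setminus L$, so $d\notin L$ --- contradiction. Hence no such $\delta$ exists, i.e. $\mathcal{S}(\delta)\neq\{d\}$ for all $\delta\in\mathbb{P}$. (If no $\delta\in\mathbb{P}$ has $\delta_a=d$ at all, the argument is vacuous but the conclusion still holds by the same reasoning applied to any prohibited profile with support $\{d\}$, which cannot exist.)

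For the ``if'' direction, assume $\mathcal{S}(\delta)\neq\{d\}$ for every $\delta\in\mathbb{P}$. I need to exhibit a concrete law $L$ and agent $a$ making $d$ safe. Since $d\in\bigcup\Delta$, pick any agent $a$ with $d\in\Delta_a$. Now define $L=\bigcup\Delta\setminus\{d\}$, the law banning everything except $d$. Then $d\in\Delta_a\setminus L=\Delta_a^L$. For any $\delta\in\mathbb{P}$ with $\delta_a=d$: since $\mathcal{S}(\delta)\neq\{d\}$, there is some agent $b$ with $\delta_b\neq d$, so $\delta_b\in\bigcup\Delta\setminus\{d\}=L$, giving $L\cap\mathcal{S}(\delta)\neq\varnothing$. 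By Lemma~\ref{lm:safe action in law imposed game}, $d$ is a safe action of $a$ in $(\mathcal{A},\Delta^L,\mathbb{P}^L)$, so $d$ is safable.

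I do not expect a serious obstacle here; the only mild subtlety is handling the edge case where $d$ never appears as $\delta_a=d$ for the chosen $a$ in any prohibited profile, but the universal quantifier in Lemma~\ref{lm:safe action in law imposed game} is then vacuously satisfied and the argument goes through unchanged. The key insight is simply that the ``ban everything but $d$'' law is the most permissive-for-$d$ choice, so it is the natural witness for safability, and the hypothesis $\mathcal{S}(\delta)\neq\{d\}$ is exactly what is needed to kill every prohibited profile under that law.
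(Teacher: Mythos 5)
Your proof is correct and follows essentially the same route as the paper: both directions go through Lemma~\ref{lm:safe action in law imposed game}, with the forward direction deriving the contradiction $d\in L$ versus $d\in\Delta_a\setminus L$ from a prohibited profile of support $\{d\}$. The only (harmless) difference is the witness law in the converse: you take the maximal choice $L=\bigcup\Delta\setminus\{d\}$, while the paper takes the smaller $L=\bigcup_{\delta\in\mathbb{P},\,\delta_a=d}(\mathcal{S}(\delta)\setminus\{d\})$; both satisfy the hypotheses of Lemma~\ref{lm:safe action in law imposed game} equally well.
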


Next, let us recall Lemma~\ref{lm:safe action in law imposed game}, which characterises when an action $d$ is a safe action of agent $a$ in a law-imposed game $(\mathcal{A},\Delta^L,\mathbb{P}^L)$.
Note that the second half of Lemma~\ref{lm:safe action in law imposed game} implies that $L\subseteq\bigcup\Delta\setminus\{d\}$ and $L\cap(\mathcal{S}(\delta)\setminus\{d\})\neq\varnothing$ for each profile $\delta\in\mathbb{P}$ such that $\delta_a=d$.
This, by item~\ref{dfitem:vertex cover cover} of Definition~\ref{df:vertex cover}, implies that $L$ is a vertex cover in the graph defined below.

\begin{definition}\label{df:safe action verify graph}
For a game $(\mathcal{A},\Delta,\mathbb{P})$, an agent $a\in\mathcal{A}$, and a safable action $d\in\Delta_a$, let the $|\mathcal{A}|$-graph $\mathcal{H}^{a,d}_{(\mathcal{A},\Delta,\mathbb{P})}$ be the pair $(\mathcal{V}^{a,d}_{(\mathcal{A},\Delta,\mathbb{P})},\mathcal{E}^{a,d}_{(\mathcal{A},\Delta,\mathbb{P})})$ where
\begin{enumerate}
    \item $\mathcal{V}^{a,d}_{(\mathcal{A},\Delta,\mathbb{P})}=(\bigcup\Delta)\setminus\{d\}$;\label{dfitem:safe action verify graph vertex}
    \item $\mathcal{E}^{a,d}_{(\mathcal{A},\Delta,\mathbb{P})}=\{\mathcal{S}(\delta)\setminus\{d\}\mid \delta\in\mathbb{P}, \delta_a=d\}$.\label{dfitem:safe action verify graph edge}
\end{enumerate}
\end{definition}

Note that the $|\mathcal{A}|$-graph $\mathcal{H}^{a,d}_{(\mathcal{A},\Delta,\mathbb{P})}$ is well-defined because $1\!\leq\!|\mathcal{S}(\delta)\setminus\{d\}|\!<\!|\mathcal{S}(\delta)|\!\leq\! |\mathcal{A}|$ by Lemma~\ref{lm:safable action} and statement~\eqref{eq:profile support set}.
Following the above observation, the next lemma bridges ``a safe action in a law-imposed game'' with the $\mathbf{VC}$ problems. See Appendix~\ref{app_sec:proof of Lemma safe action=vertex cover in reduced graph} for its formal proof.

\begin{lemma}\label{lm:safe action=vertex cover in reduced graph}
For a law $L$ in a game $(\mathcal{A},\Delta,\mathbb{P})$ and an agent $a\in\mathcal{A}$, an action $d\in\bigcup\Delta$ is a safe action of agent $a$ in the law-imposed game $(\mathcal{A},\Delta^L,\mathbb{P}^L)$ if and only if $d\in\Delta_a^L$, $d$ is safable in the game $(\mathcal{A},\Delta,\mathbb{P})$, and $L$ is a vertex cover in the graph $\mathcal{H}^{a,d}_{(\mathcal{A},\Delta,\mathbb{P})}$.
\end{lemma}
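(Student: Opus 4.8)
The plan is to prove Lemma~\ref{lm:safe action=vertex cover in reduced graph} by unwinding the characterisation already provided by Lemma~\ref{lm:safe action in law imposed game} and matching it term-by-term with the definition of the graph $\mathcal{H}^{a,d}_{(\mathcal{A},\Delta,\mathbb{P})}$ (Definition~\ref{df:safe action verify graph}) together with Lemma~\ref{lm:safable action}. Both directions are essentially bookkeeping; the only genuine content is noticing that ``$L\cap\mathcal{S}(\delta)\neq\varnothing$ for each $\delta\in\mathbb{P}$ with $\delta_a=d$'' can be rewritten, using $d\notin L$, as ``$L\cap(\mathcal{S}(\delta)\setminus\{d\})\neq\varnothing$ for each such $\delta$'', which is exactly the vertex-cover condition for the edge set $\mathcal{E}^{a,d}_{(\mathcal{A},\Delta,\mathbb{P})}$.

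First I would handle the forward direction. Assume $d$ is a safe action of agent $a$ in the law-imposed game $(\mathcal{A},\Delta^L,\mathbb{P}^L)$. By Lemma~\ref{lm:safe action in law imposed game}, $d\in\Delta_a^L$ and $L\cap\mathcal{S}(\delta)\neq\varnothing$ for each $\delta\in\mathbb{P}$ with $\delta_a=d$. The first conjunct is already one of the three required conclusions. For safability, note that for any $\delta\in\mathbb{P}$ with $\mathcal{S}(\delta)=\{d\}$ we would need $L\cap\{d\}\neq\varnothing$, i.e. $d\in L$, contradicting $d\in\Delta_a^L=\Delta_a\setminus L$; hence $\mathcal{S}(\delta)\neq\{d\}$ for every $\delta\in\mathbb{P}$, so $d$ is safable by Lemma~\ref{lm:safable action}. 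Finally, since $d\notin L$, we have $L\subseteq(\bigcup\Delta)\setminus\{d\}=\mathcal{V}^{a,d}_{(\mathcal{A},\Delta,\mathbb{P})}$, and from $d\notin L$ and $L\cap\mathcal{S}(\delta)\neq\varnothing$ we get $L\cap(\mathcal{S}(\delta)\setminus\{d\})\neq\varnothing$ for each $\delta\in\mathbb{P}$ with $\delta_a=d$; by item~\ref{dfitem:safe action verify graph edge} of Definition~\ref{df:safe action verify graph} these are precisely the edges of $\mathcal{H}^{a,d}_{(\mathcal{A},\Delta,\mathbb{P})}$, so $L$ is a vertex cover there by item~\ref{dfitem:vertex cover cover} of Definition~\ref{df:vertex cover}.

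For the converse, assume $d\in\Delta_a^L$, $d$ is safable, and $L$ is a vertex cover of $\mathcal{H}^{a,d}_{(\mathcal{A},\Delta,\mathbb{P})}$. I want to verify the right-hand side of Lemma~\ref{lm:safe action in law imposed game}, namely $d\in\Delta_a^L$ (given) and $L\cap\mathcal{S}(\delta)\neq\varnothing$ for each $\delta\in\mathbb{P}$ with $\delta_a=d$. Fix such a $\delta$. By Lemma~\ref{lm:safable action}, safability of $d$ gives $\mathcal{S}(\delta)\neq\{d\}$, and since $\delta_a=d$ we have $d\in\mathcal{S}(\delta)$, so $\mathcal{S}(\delta)\setminus\{d\}\neq\varnothing$; thus $\mathcal{S}(\delta)\setminus\{d\}$ is a genuine edge of $\mathcal{H}^{a,d}_{(\mathcal{A},\Delta,\mathbb{P})}$. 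Because $L$ covers it, $L\cap(\mathcal{S}(\delta)\setminus\{d\})\neq\varnothing$, which trivially implies $L\cap\mathcal{S}(\delta)\neq\varnothing$. Applying Lemma~\ref{lm:safe action in law imposed game} in the other direction concludes that $d$ is a safe action of agent $a$ in $(\mathcal{A},\Delta^L,\mathbb{P}^L)$.

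The main obstacle, if any, is purely a matter of careful equivalence-tracking: one must be sure that the extra hypotheses ``$d$ safable'' and ``$d\in\Delta_a^L$'' in the converse are exactly strong enough to recover that every relevant $\mathcal{S}(\delta)\setminus\{d\}$ is a nonempty set (so that the vertex-cover hypothesis has something to act on) and that $d\notin L$ is not silently needed elsewhere. Since $d\in\Delta_a^L$ already encodes $d\notin L$, and safability plus $\delta_a=d$ already encodes $\mathcal{S}(\delta)\setminus\{d\}\neq\varnothing$, no further case analysis is required; the edge set $\mathcal{E}^{a,d}_{(\mathcal{A},\Delta,\mathbb{P})}$ is well-defined precisely for this reason, as remarked after Definition~\ref{df:safe action verify graph}. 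I therefore expect the proof to be short, with the footnoted well-definedness remark doing the only nontrivial work.
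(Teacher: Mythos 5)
Your proof is correct and follows essentially the same route as the paper's: both directions reduce to Lemma~\ref{lm:safe action in law imposed game} and translate $L\cap\mathcal{S}(\delta)\neq\varnothing$ into the vertex-cover condition on $\mathcal{S}(\delta)\setminus\{d\}$ using $d\notin L$. The only cosmetic difference is that the paper obtains safability in the forward direction directly from Definition~\ref{df:safable action} (with $L$ itself as the witnessing law), whereas you derive it via Lemma~\ref{lm:safable action}; both are valid.
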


Observe that Lemma~\ref{lm:safe action=vertex cover in reduced graph}, Theorem~\ref{th:useful=vertex cover}, and Lemma~\ref{lm:gap free law} imply Theorem~\ref{th:reduce gap-free to multiple vertex cover} below,
which further implies a Cook reduction from $\mathbf{IsGFL}$ to $\mathbf{IsVC}$ and a corresponding algorithm $\mathtt{IsGFL}$ that \textit{iteratively} calls $\mathtt{IsVC}$ for polynomial times, as illustrated in Algorithm~\ref{alg:GFL design} of Appendix~\ref{app_sec:algorithms for GFL problems}.

\begin{theorem}\label{th:reduce gap-free to multiple vertex cover}
A law $L$ in a game $(\mathcal{A},\Delta,\mathbb{P})$ is gap-free if and only if at least one of the following statements is true:
\begin{enumerate}
    \item $L$ is a vertex cover in the graph $(\bigcup\Delta,\mathcal{S}(\mathbb{P}))$;\label{thitem:reduce gap-free to multiple vertex cover 1}
    \item there is an agent $a\in\mathcal{A}$ and a safable action $d\in\Delta_a^L$ such that $L$ is a vertex cover in the graph $\mathcal{H}^{a,d}_{(\mathcal{A},\Delta,\mathbb{P})}$.\label{thitem:reduce gap-free to multiple vertex cover 2}
\end{enumerate}
\end{theorem}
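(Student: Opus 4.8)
The plan is to obtain the claim as a direct translation of the gap-freeness characterisation in Lemma~\ref{lm:gap free law} into vertex-cover language, using Theorem~\ref{th:useful=vertex cover} for the ``useful'' disjunct and Lemma~\ref{lm:safe action=vertex cover in reduced graph} for the ``safe action'' disjunct. Concretely, Lemma~\ref{lm:gap free law} says that $L$ is gap-free if and only if $L$ is useful, or there is an agent $a\in\mathcal{A}$ together with an action $d$ that is a safe action of $a$ in the law-imposed game $(\mathcal{A},\Delta^L,\mathbb{P}^L)$. So it suffices to show that the first disjunct is equivalent to statement~\ref{thitem:reduce gap-free to multiple vertex cover 1} and that the second disjunct is equivalent to statement~\ref{thitem:reduce gap-free to multiple vertex cover 2}; combining these two equivalences with Lemma~\ref{lm:gap free law} then yields the theorem.

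For the first disjunct, Theorem~\ref{th:useful=vertex cover} states verbatim that $L$ is a useful law in $(\mathcal{A},\Delta,\mathbb{P})$ if and only if $L$ is a vertex cover of the graph $(\bigcup\Delta,\mathcal{S}(\mathbb{P}))$, which is precisely statement~\ref{thitem:reduce gap-free to multiple vertex cover 1}. For the second disjunct, I would quantify Lemma~\ref{lm:safe action=vertex cover in reduced graph} existentially over $a$ and $d$: that lemma asserts that $d$ is a safe action of $a$ in $(\mathcal{A},\Delta^L,\mathbb{P}^L)$ if and only if $d\in\Delta_a^L$, $d$ is safable in $(\mathcal{A},\Delta,\mathbb{P})$, and $L$ is a vertex cover of $\mathcal{H}^{a,d}_{(\mathcal{A},\Delta,\mathbb{P})}$. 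Hence ``there exist an agent $a$ and a safe action $d$ of $a$ in the law-imposed game'' is equivalent to ``there exist an agent $a$ and a safable action $d\in\Delta_a^L$ such that $L$ is a vertex cover of $\mathcal{H}^{a,d}_{(\mathcal{A},\Delta,\mathbb{P})}$'', which is exactly statement~\ref{thitem:reduce gap-free to multiple vertex cover 2}.

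The only point that needs care — and the closest thing to an obstacle — is the well-definedness of the graph $\mathcal{H}^{a,d}_{(\mathcal{A},\Delta,\mathbb{P})}$, since Definition~\ref{df:safe action verify graph} is stated only for a \emph{safable} action $d\in\Delta_a$. In the direction from gap-freeness to statement~\ref{thitem:reduce gap-free to multiple vertex cover 2}, one first observes that a safe action in the law-imposed game is in particular safable by Definition~\ref{df:safable action}, so $\mathcal{H}^{a,d}_{(\mathcal{A},\Delta,\mathbb{P})}$ is defined and the invocation of Lemma~\ref{lm:safe action=vertex cover in reduced graph} is legitimate; in the reverse direction, statement~\ref{thitem:reduce gap-free to multiple vertex cover 2} supplies ``$d$ safable'' by hypothesis, so again the graph exists and the lemma applies. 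With this caveat dispatched, the argument is a mechanical composition of previously established results and involves no new computation.
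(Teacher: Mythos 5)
Your proof is correct and matches the paper exactly: the paper itself derives this theorem as a direct consequence of Lemma~\ref{lm:gap free law}, Theorem~\ref{th:useful=vertex cover}, and Lemma~\ref{lm:safe action=vertex cover in reduced graph}, with no further argument. Your additional care about the well-definedness of $\mathcal{H}^{a,d}_{(\mathcal{A},\Delta,\mathbb{P})}$ (a safe action in the law-imposed game being safable by Definition~\ref{df:safable action}) is a correct and welcome refinement of a point the paper leaves implicit.
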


Note that, by Definition~\ref{df:minimal gap free} and item~\ref{dfitem:gap free reduction} of Definition~\ref{df:minimum gap free reduction}, a gap-free law is not minimal if there is a ``strict'' gap-free reduction. 
By Lemma~\ref{lm:gap free law}, such a reduction retains the gap-freeness in three possible approaches: maintaining usefulness, maintaining a safe action under the original law, or introducing a new safe action.
Moreover, we can use an induced subgraph in Definition~\ref{df:vertex set reduced graph} to capture a reduction of an existing law.
Following the above hints, we can get the next two theorems. See Appendices~\ref{app_sec:proof Theorem reduce minimal gap-free to multiple minimal vertex cover} and \ref{app_sec:proof of Theorem reduce gap-free reduction to vertex cover in reduced graph} for their proofs.

\begin{theorem}\label{th:reduce minimal gap-free to multiple minimal vertex cover}
A gap-free law $L$ in a game $(\mathcal{A},\Delta,\mathbb{P})$ is minimal if and only if all of the following statements are true:
\begin{enumerate}
    \item if $L$ is a vertex cover of the graph $(\bigcup\Delta,\mathcal{S}(\mathbb{P}))$, then $L$ is a minimal vertex cover in this graph;\label{thitem:reduce minimal gap-free to multiple minimal vertex cover 1}
    \item for each agent $a\in\mathcal{A}$ and each safable action $d\in\Delta_a^L$, if $L$ is a vertex cover in the graph $\mathcal{H}^{a,d}_{(\mathcal{A},\Delta,\mathbb{P})}$, then $L$ is a minimal vertex cover in this graph;\label{thitem:reduce minimal gap-free to multiple minimal vertex cover 2}
    \item for each agent $a\in\mathcal{A}$ and each safable action $d\!\in\!\Delta_a\!\cap\! L$, the set $L\setminus\{d\}$ is not a vertex cover in graph $\mathcal{H}^{a,d}_{(\mathcal{A},\Delta,\mathbb{P})}$.\label{thitem:reduce minimal gap-free to multiple minimal vertex cover 3}
\end{enumerate}
\end{theorem}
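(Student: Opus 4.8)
The plan is to unfold the definition of minimality for a gap-free law and translate each part through the dictionary already established: Lemma~\ref{lm:gap free law} (gap-freeness = usefulness or a safe action exists), Theorem~\ref{th:useful=vertex cover} (usefulness = vertex cover of $(\bigcup\Delta,\mathcal{S}(\mathbb{P}))$), Lemma~\ref{lm:safe action=vertex cover in reduced graph} (a safe action $d$ of $a$ under $L$ = $L$ is a vertex cover of $\mathcal{H}^{a,d}_{(\mathcal{A},\Delta,\mathbb{P})}$ with $d\in\Delta_a^L$ safable), and Definition~\ref{df:vertex set reduced graph} together with Theorem~\ref{th:useful reduction=vertex cover} (reductions inside a cover = covers of the induced subgraph). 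By Definition~\ref{df:minimal gap free}, $L$ is minimal-gap-free iff $L$ is gap-free and no $L'\subsetneq L$ is gap-free. I would fix an arbitrary $L'\subsetneq L$, write $L'=L\setminus X$ for nonempty $X$, and characterise when such an $L'$ is \emph{not} gap-free, then argue that ``every proper reduction fails to be gap-free'' is equivalent to the conjunction of items~1--3.

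The key steps, in order: First, handle item~1. If $L$ is a vertex cover of $G:=(\bigcup\Delta,\mathcal{S}(\mathbb{P}))$, i.e.\ $L$ is useful, then any proper reduction $L'\subsetneq L$ that is still a vertex cover of $G$ is useful, hence gap-free, contradicting minimality; so $L$ must be a minimal vertex cover of $G$. Conversely, if $L$ is not a vertex cover of $G$, item~1 is vacuous. Second, handle item~2. Suppose $a\in\mathcal{A}$ and a safable $d\in\Delta_a^L$ are such that $L$ is a vertex cover of $\mathcal{H}:=\mathcal{H}^{a,d}_{(\mathcal{A},\Delta,\mathbb{P})}$ — equivalently, by Lemma~\ref{lm:safe action=vertex cover in reduced graph}, $d$ is a safe action of $a$ under $L$. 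Since the vertex set of $\mathcal{H}$ is $(\bigcup\Delta)\setminus\{d\}$ and $d\notin L$ (as $d\in\Delta_a^L$), any $L'$ with $L'\subsetneq L$ that is still a vertex cover of $\mathcal{H}$ satisfies $d\in\Delta_a^{L'}$ and keeps $d$ safe under $L'$, hence is gap-free — contradicting minimality; so $L$ must be a minimal vertex cover of $\mathcal{H}$. Third, item~3 is the genuinely new ingredient: it says that \emph{deleting a single banned safable action} $d\in\Delta_a\cap L$ cannot create a new safe action. The reduction $L\setminus\{d\}$ makes $d$ lawful for $a$ (so $d\in\Delta_a^{L\setminus\{d\}}$), and by Lemma~\ref{lm:safe action=vertex cover in reduced graph} that $d$ is safe under $L\setminus\{d\}$ iff $L\setminus\{d\}$ is a vertex cover of $\mathcal{H}^{a,d}_{(\mathcal{A},\Delta,\mathbb{P})}$; if it were, $L\setminus\{d\}$ would be gap-free, contradicting minimality. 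For the converse direction I would assume all of 1--3 and show any $L'=L\setminus X$ with $X\neq\varnothing$ is not gap-free: by Lemma~\ref{lm:gap free law} it suffices to show $L'$ is neither useful nor admits a safe action. Not useful: either $L$ itself is not a vertex cover of $G$ (then neither is the smaller $L'$), or $L$ is a minimal vertex cover of $G$ by item~1, so $L'\subsetneq L$ is not a vertex cover of $G$. No safe action under $L'$: fix any agent $a$ and candidate action $d\in\Delta_a^{L'}$; necessarily $d$ is safable (else it can never be safe, cf.\ Lemma~\ref{lm:safable action}), and we split on whether $d\in L$. If $d\notin L$, then $d\in\Delta_a^L$ too, and I use minimality-via-item~2: if $L$ covers $\mathcal{H}^{a,d}$ then it does so minimally, so $L'\subsetneq L$ does not cover it, hence $d$ is not safe under $L'$; if $L$ doesn't cover $\mathcal{H}^{a,d}$ then the smaller $L'$ doesn't either. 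If $d\in L$, then since $d\notin L'=L\setminus X$ we get $d\in X$, so $L'\subseteq L\setminus\{d\}$; item~3 says $L\setminus\{d\}$ does not cover $\mathcal{H}^{a,d}$, hence the smaller $L'$ does not either, so again $d$ is not safe under $L'$. Thus no agent has a safe action under $L'$, and $L'$ is not gap-free.

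The main obstacle I anticipate is the careful bookkeeping in the ``$d\in L$'' branch of item~3: one must be precise that removing \emph{more} than just $d$ only shrinks the candidate cover further (monotonicity of the cover property under taking subsets — a subset of a non-cover can still fail, but here we need ``$L\setminus\{d\}$ not a cover $\Rightarrow$ any $L'\subseteq L\setminus\{d\}$ not a cover'', which is immediate), and that $\Delta_a^{L'}$ genuinely contains $d$ when $d\in X\subseteq$ the removed set, so that $d$ is a legitimate candidate safe action for the law-imposed game under $L'$. A secondary subtlety is that items~2 and~3 partition the two ways a candidate safe action $d$ can arise (lawful-under-$L$-already vs.\ newly-lawful-after-deletion), and one must check these cases are jointly exhaustive over all $d\in\Delta_a^{L'}$ — which holds because $\Delta_a^{L'}=\Delta_a\setminus L'$ and $L'\subsetneq L$, so $d\in\Delta_a^{L'}$ means either $d\notin L$ (case~2) or $d\in L\setminus L'$ (case~3). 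Everything else is routine application of the already-established equivalences.
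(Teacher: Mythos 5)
Your proposal is correct, and the forward direction (minimality implies items~1--3) matches the paper's argument essentially step for step: each failed item yields a proper gap-free reduction via Theorem~\ref{th:reduce gap-free to multiple vertex cover} or Lemma~\ref{lm:safe action=vertex cover in reduced graph}. Where you diverge is the backward direction. The paper first proves an auxiliary result (Lemma~\ref{lm:minimal gap free}): a gap-free law is minimal iff no \emph{single-element} removal $L\setminus\{d\}$ is gap-free; that lemma absorbs the real work of showing that an arbitrary gap-free $L'\subsetneq L$ can always be ``inflated'' to a gap-free single-removal law. The paper then argues the backward direction by contrapositive, applying Theorem~\ref{th:reduce gap-free to multiple vertex cover} to $L\setminus\{d\}$ and splitting into the same three cases. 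You instead argue directly: assume items~1--3, take an arbitrary $L'\subsetneq L$, and rule out usefulness and every candidate safe action $d\in\Delta_a^{L'}$ by splitting on $d\notin L$ (handled by item~2 plus upward-closure of the cover property, the paper's Lemma~\ref{lm:vertex cover monotonicity}) versus $d\in L\setminus L'$ (handled by item~3, since then $L'\subseteq L\setminus\{d\}$). Your case split is exhaustive and each branch is sound, so this works and avoids Lemma~\ref{lm:minimal gap free} entirely; what the paper's detour buys is a reusable single-removal characterisation of minimality, whereas your route is more self-contained and makes the monotonicity of vertex covers do the inflation work implicitly. Either way the theorem is established; there is no gap in your argument.
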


\begin{theorem}\label{th:reduce gap-free reduction to vertex cover in reduced graph}
For a gap-free law $L$ in a game $(\mathcal{A},\Delta,\mathbb{P})$, a law $L'$ is a gap-free reduction of $L$ if and only if at least one of the following statements is true:
\begin{enumerate}
    \item $L$ is a vertex cover of the graph $(\bigcup\Delta,\mathcal{S}(\mathbb{P}))$ and $L'$ is a vertex cover in the subgraph $(L,\mathcal{S}(\mathbb{P})^L)$;\label{thitem:reduce gap-free reduction to vertex cover in reduced graph 1}
    \item there is an agent $a\in\mathcal{A}$ and a safable action $d\in\Delta_a^L$ such that $L$ is a vertex cover in graph $\mathcal{H}^{a,d}_{(\mathcal{A},\Delta,\mathbb{P})}$ and $L'$ is a vertex cover in the subgraph $(L,(\mathcal{E}^{a,d}_{(\mathcal{A},\Delta,\mathbb{P})})^L)$;\label{thitem:reduce gap-free reduction to vertex cover in reduced graph 2}
    \item there is an $a\!\in\!\mathcal{A}$ and a safable action $d\!\in\!\Delta_a\!\cap\! L$ such that $L\!\setminus\!\{d\}$ is a vertex cover in graph $\mathcal{H}^{a,d}_{(\mathcal{A},\Delta,\mathbb{P})}$ and $L'$ is a vertex cover in the subgraph $(L\!\setminus\!\{d\},(\mathcal{E}^{a,d}_{(\mathcal{A},\Delta,\mathbb{P})})^{L\setminus\{d\}})$.\label{thitem:reduce gap-free reduction to vertex cover in reduced graph 3}
\end{enumerate}
\end{theorem}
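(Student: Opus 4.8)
The plan is to unfold ``gap-free reduction'' through the characterisations already available and then split into the three cases according to \emph{why} $L'$ is gap-free. First I would record the working characterisation obtained by combining Lemma~\ref{lm:gap free law}, Theorem~\ref{th:useful=vertex cover}, and Lemma~\ref{lm:safe action=vertex cover in reduced graph}: a law $L'$ is a gap-free reduction of $L$ if and only if $L'\subseteq L$ and either (i) $L'$ is a vertex cover of $(\bigcup\Delta,\mathcal{S}(\mathbb{P}))$, or (ii) there is an agent $a\in\mathcal{A}$ and a safable action $d\in\Delta_a^{L'}$ such that $L'$ is a vertex cover of $\mathcal{H}^{a,d}_{(\mathcal{A},\Delta,\mathbb{P})}$. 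I would also isolate one elementary observation about the induced subgraph of Definition~\ref{df:vertex set reduced graph}: if $C$ is a vertex cover of a $k$-graph $(V,E)$ and $C'\subseteq C$, then $C'$ is a vertex cover of $(V,E)$ if and only if $C'$ is a vertex cover of $(C,E^C)$. The ``only if'' direction here is exactly the remark made before Theorem~\ref{th:useful reduction=vertex cover}; the ``if'' direction holds because $C'\cap e=C'\cap(C\cap e)$ whenever $C'\subseteq C$, and each edge $C\cap e$ is nonempty since $C$ covers $(V,E)$.

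For the ``if'' direction of the theorem I would treat the three cases uniformly using this observation. In Case~1 it turns ``$L'$ is a vertex cover of $(L,\mathcal{S}(\mathbb{P})^L)$'' into ``$L'$ is a vertex cover of $(\bigcup\Delta,\mathcal{S}(\mathbb{P}))$'', so $L'$ is useful by Theorem~\ref{th:useful=vertex cover}, hence gap-free by Lemma~\ref{lm:gap free law}; and $L'\subseteq L$ because $L'$ lies in the vertex set $L$ of the subgraph. In Case~2 (with $d\notin L$) it turns ``$L'$ is a vertex cover of $(L,(\mathcal{E}^{a,d}_{(\mathcal{A},\Delta,\mathbb{P})})^L)$'' into ``$L'$ is a vertex cover of $\mathcal{H}^{a,d}_{(\mathcal{A},\Delta,\mathbb{P})}$''; moreover $L'\subseteq L$ together with $d\notin L$ gives $d\in\Delta_a\setminus L'=\Delta_a^{L'}$, so $d$ is a safe action of $a$ in the $L'$-imposed game by Lemma~\ref{lm:safe action=vertex cover in reduced graph} and $L'$ is gap-free by Lemma~\ref{lm:gap free law}. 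Case~3 (with $d\in L$) is the same argument except that the ``parent'' cover is $L\setminus\{d\}$ rather than $L$, and membership of $L'$ in the vertex set $L\setminus\{d\}$ simultaneously delivers $L'\subseteq L$ and $d\notin L'$.

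For the ``only if'' direction, starting from a gap-free reduction $L'$ of $L$, Lemma~\ref{lm:gap free law} gives two alternatives. If $L'$ is useful, then $L'$ is a vertex cover of $(\bigcup\Delta,\mathcal{S}(\mathbb{P}))$ by Theorem~\ref{th:useful=vertex cover}; since $L'\subseteq L$, the set $L$ meets every edge as well, so the induced subgraph $(L,\mathcal{S}(\mathbb{P})^L)$ is well defined and the observation shows $L'$ covers it — this is Case~1. Otherwise there are $a,d$ with $d$ a safe action of $a$ in the $L'$-imposed game, so by Lemma~\ref{lm:safe action=vertex cover in reduced graph} we have $d\in\Delta_a^{L'}$, $d$ safable, and $L'$ a vertex cover of $\mathcal{H}^{a,d}_{(\mathcal{A},\Delta,\mathbb{P})}$. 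I then split on whether $d\in L$: if $d\notin L$ then $d\in\Delta_a^L$, the set $L\supseteq L'$ covers $\mathcal{H}^{a,d}_{(\mathcal{A},\Delta,\mathbb{P})}$, and $L'$ covers $(L,(\mathcal{E}^{a,d}_{(\mathcal{A},\Delta,\mathbb{P})})^L)$ — Case~2; if $d\in L$ then $d\in\Delta_a\cap L$, and since $d\notin L'$ and $L'\subseteq L$ we get $L'\subseteq L\setminus\{d\}$, so $L\setminus\{d\}$ covers $\mathcal{H}^{a,d}_{(\mathcal{A},\Delta,\mathbb{P})}$ and $L'$ covers $(L\setminus\{d\},(\mathcal{E}^{a,d}_{(\mathcal{A},\Delta,\mathbb{P})})^{L\setminus\{d\}})$ — Case~3.

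The argument is essentially bookkeeping; I expect the only mild obstacle to be making the induced-subgraph observation explicit and applying it consistently, and in particular keeping track of the correct parent vertex cover ($L$ versus $L\setminus\{d\}$) so that each induced subgraph of Definition~\ref{df:vertex set reduced graph} is well defined, while using the fact that a vertex cover of a subgraph automatically lies in the subgraph's vertex set — which is what encodes the reduction condition $L'\subseteq L$ and, in Case~3, the condition $d\notin L'$.
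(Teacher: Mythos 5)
Your proposal is correct and follows essentially the same route as the paper: the paper also isolates the induced-subgraph transfer fact (its Lemma~\ref{lm:reduced vertex cover in reduced graph}), combines it with Theorem~\ref{th:useful=vertex cover} and Lemma~\ref{lm:safe action=vertex cover in reduced graph} into three auxiliary equivalences, and then performs exactly your case split in the only-if direction (useful, versus a safe action $d$ with $d\notin L$, versus $d\in L$). The only difference is organisational — you inline what the paper packages as three separate lemmas.
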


The above two theorems imply Cook reductions from $\mathbf{IsMiniGFL}$ and $\mathbf{MinGFR}$ to the $\mathbf{VC}$ problems. The corresponding algorithms $\mathtt{IsMiniGFL}$ and $\mathtt{AppMinGFR}$ call the $\mathbf{VC}$ algorithms polynomial times (see Algorithm~\ref{alg:GFL design} of Appendix~\ref{app_sec:algorithms for GFL problems}).
Note that all graphs in Theorem~\ref{th:reduce gap-free reduction to vertex cover in reduced graph} are $|\mathcal{A}|$-graphs, on which $\mathtt{AppMinVC}$ is an $|\mathcal{A}|$-approximation.
This makes $\mathtt{AppMinGFR}$ an $|\mathcal{A}|$-approximation of $\mathbf{MinGFR}$, where $|\mathcal{A}|$ is the number of agents in the input game (see Appendix~\ref{app_sec:MinGFR approximation factor} for a discussion).

\section{Conclusion}

For the usefulness and gap-freeness of law, we studied the corresponding law design problems by relating them to vertex cover problems in hypergraphs. 
We proved that the task of minimising a law while keeping its usefulness or gap-freeness is NP-hard even to approximate. 
We also proposed reductions from law design problems to vertex cover problems, which imply law-design algorithms that make polynomial-time calls to the vertex cover algorithms.

As an ending discussion, note that we don't consider the weight of actions in law minimisation. However, this can be done with no extra effort because the vertex cover algorithms in the literature apply to weighted vertices.

\bibliography{this,fromConfirmationReview}

\clearpage

\appendix

\begin{center}

{\bf \Large 
Technical Appendix
}


\end{center}

\section{Supplementary to Section~\ref{sec:formalisation}}\label{app_sec:formalisation}

\subsection{Proof of Lemma~\ref{lm:useful law}}\label{app_sec:proof of Lemma useful law}

\noindent\textbf{Lemma~\ref{lm:useful law}.} 
\textit{A law $L$ in a game $(\mathcal{A},\Delta,\mathbb{P})$ is useful if and only if $L\cap\mathcal{S}(\delta)\neq\varnothing$ for each profile $\delta\in\mathbb{P}$.}
\begin{proof}
By Definition~\ref{df:useful law}, it suffices to show that $\mathbb{P}^L=\varnothing$ in the law-imposed game if and only if $L\cap\mathcal{S}(\delta)\neq\varnothing$ for each profile $\delta\in\mathbb{P}$.

($\Rightarrow$)
Suppose there is a profile $\delta\in\mathbb{P}$ such that $L\cap\mathcal{S}(\delta)=\varnothing$.
Then, $\delta_a\notin L$ for each agent $a\in\mathcal{A}$ by statement~\eqref{eq:profile support set}.
Thus, $\delta\in\prod\Delta^L$ by item~\ref{dfitem:law-imposed game 1} of Definition~\ref{df:law-imposed game}.
Hence, $\delta\in\mathbb{P}^L$ by item~\ref{dfitem:law-imposed game 2} of Definition~\ref{df:law-imposed game} and the case assumption $\delta\in\mathbb{P}$.
Therefore, $\mathbb{P}^L\neq\varnothing$.

($\Leftarrow$)
Suppose $\mathbb{P}^L\neq\varnothing$. Then, there is a profile $\delta\in\mathbb{P}^L$. Thus, $\delta\in\prod\Delta^L$ by item~\ref{dfitem:law-imposed game 2} of Definition~\ref{df:law-imposed game}.
Hence, $\delta_a\notin L$ for each agent $a\in\mathcal{A}$ by item~\ref{dfitem:law-imposed game 1} of Definition~\ref{df:law-imposed game}.
Therefore, $L\cap\mathcal{S}(\delta)=\varnothing$ by statement~\eqref{eq:profile support set}.
\end{proof}

\subsection{Proof of Lemma~\ref{lm:safe action in law imposed game}}\label{app_sec:proof of Lemma safe action in law imposed game}

\noindent\textbf{Lemma~\ref{lm:safe action in law imposed game}. }
\textit{An action $d\in\bigcup\Delta$ is a safe action of an agent $a\in\mathcal{A}$ in the law-imposed game $(\mathcal{A},\Delta^L,\mathbb{P}^L)$ if and only if $d\in\Delta_a^L$ and $L\cap\mathcal{S}(\delta)\neq\varnothing$ for each $\delta\in\mathbb{P}$ such that $\delta_a=d$.}
\begin{proof}
By Definition~\ref{df:safe action} and Definition~\ref{df:law-imposed game}, action $d$ is a safe action of agent $a$ in the law-imposed game $(\mathcal{A},\Delta^L,\mathbb{P}^L)$ if and only if $d\in\Delta_a^L$ and $\delta_a\neq d$ for each profile $\delta\in\mathbb{P}^L$.
Then, to prove the statement of the lemma, it suffices to prove the equivalence between the next two statements:
\begin{inlinelist}[label=(\Roman*), ref=(\Roman*)]
    \item $\delta_a\neq d$ for each profile $\delta\in\mathbb{P}^L$;\label{listiem:6-30-1}
    \item $L\cap\mathcal{S}(\delta)\neq\varnothing$ for each profile $\delta\in\mathbb{P}$ such that $\delta_a=d$.\label{listiem:6-30-2}
\end{inlinelist}

\ref{listiem:6-30-1} $\Rightarrow$ \ref{listiem:6-30-2}: 
Suppose there exists a profile $\delta\in\mathbb{P}$ such that $\delta_a=d$ and $L\cap\mathcal{S}(\delta)=\varnothing$. 
Then, $\delta_b\notin L$ for each agent $b\in\!\mathcal{A}$ by statement~\eqref{eq:profile support set}.
Thus, $\delta$ is such that $\delta\in\mathbb{P}^L$ and $\delta_a=d$ by Definition~\ref{df:law-imposed game} and the assumptions $\delta\in\mathbb{P}$ and $\delta_a=d$.

\ref{listiem:6-30-2} $\Rightarrow$ \ref{listiem:6-30-1}: 
Suppose there is a profile $\delta\in\mathbb{P}^L$ such that $\delta_a=d$.
Then, $\delta\in\mathbb{P}$ and $\delta_b\notin L$ for each agent $b\in\mathcal{A}$ by Definition~\ref{df:law-imposed game}.
Thus, $\delta$ is such that $\delta\in\mathbb{P}$ and $L\cap\mathcal{S}(\delta)=\varnothing$ by statement~\eqref{eq:profile support set}.
\end{proof}

\subsection{Proof of Lemma~\ref{lm:gap free law}}\label{app_sec:proof of Lemma gap free law}

\noindent\textbf{Lemma~\ref{lm:gap free law}. }
\textit{A law $L$ in a game $(\mathcal{A},\Delta,\mathbb{P})$ is gap-free if and only if $L$ is useful or there is an agent $a\in\mathcal{A}$ and a safe action of agent $a$ in the law-imposed game $(\mathcal{A},\Delta^L,\mathbb{P}^L)$.}
\begin{proof}
($\Rightarrow$)
Suppose $L$ is gap-free but not useful.
Then, $\mathbb{P}^L\neq\varnothing$ by Definition~\ref{df:useful law}.
Consider an arbitrary profile $\delta\in\mathbb{P}^L$.
Then, $\delta\in\mathbb{P}$ and
\begin{equation}\label{eq:6-30-1}
    \text{$\delta_b\notin L$ for each agent $b\in\mathcal{A}$}
\end{equation}
by Definition~\ref{df:law-imposed game}.
Thus, there is an agent $a\in\mathcal{A}$ responsible for $\delta$ by Definition~\ref{df:responsibility gap free} and the assumption $L$ is gap-free. 
Then, there is a safe action $d$ of agent $a$ in the law-imposed game $(\mathcal{A},\Delta^L,\mathbb{P}^L)$ by Definition~\ref{df:responsibility} and statement~\eqref{eq:6-30-1}.

($\Leftarrow$)
Suppose $L$ is not gap-free. Then, by Definition~\ref{df:responsibility gap free},
\begin{equation}\label{eq:6-30-2}
    \text{there is a profile $\delta\in\mathbb{P}$ where no agent is responsible.}
\end{equation}
Consider an arbitrary agent $a\in\mathcal{A}$.
Then, $\delta_a\notin L$ by item~\ref{dfitem:legal responsibility} of Definition~\ref{df:responsibility} and statement~\eqref{eq:6-30-2}.
Thus, by statement~\eqref{eq:profile support set} and the arbitrariness of $a$,
\begin{equation}\textstyle\label{eq:6-30-3}
    \mathcal{S}(\delta)\cap L=\varnothing
\end{equation}
Hence, law $L$ is not useful by Lemma~\ref{lm:useful law} and that $\delta\in\mathbb{P}$ in statement~\eqref{eq:6-30-2}.
Meanwhile, by statements~\eqref{eq:6-30-2}, \eqref{eq:6-30-3}, and item~\ref{dfitem:counterfactual responsibility} of Definition~\ref{df:responsibility}, there is no safe action of (the arbitrary) agent $a$ in the law-imposed game $(\mathcal{A},\Delta^L,\mathbb{P}^L)$.
\end{proof}

\section{Supplementary to Section~\ref{sec:useful law}}\label{app_sec:useful law}

\subsection{Proof of Theorem~\ref{th:vertex cover=useful}}\label{app_sec:proof of Theorem vertex cover=useful}

Let us first illustrate Definition~\ref{df:graph to game} with an example as illustrated in Figure~\ref{fig:Reduction_VC2UL_graph}.
In this $3$-graph, there are five vertices $r,s,t,u,v$ and three edges: $e_1=\{s\}$, $e_2=\{t,u\}$, and $e_3=\{r,u,v\}$.
The induced game, by item~\ref{dfitem:graph to game agents} of Definition~\ref{df:graph to game}, is played within three agents: $1,2,3$.
Each agent, by item~\ref{dfitem:graph to game actions}  of Definition~\ref{df:graph to game}, has access to five actions: $r,s,t,u,v$.
Moreover, by item~\ref{dfitem:graph to game prohibition}  of Definition~\ref{df:graph to game}, there are three prohibited outcomes: $\delta^{e_1}$, $\delta^{e_2}$, $\delta^{e_3}$.
In particular, taking the predefined order in item~\ref{dfitem:graph to game prohibition} of  of Definition~\ref{df:graph to game} as the lexicographical order. Then, 
$
\delta^{e_1}_1=\delta^{e_1}_2=\delta^{e_1}_3=s
$;
$
\delta^{e_2}_1=\delta^{e_2}_3=u, \delta^{e_2}_2=t
$;
$
\delta^{e_3}_1=u, \delta^{e_3}_2=v, \delta^{e_3}_3=r
$.
The induced game is as illustrated in Figure~\ref{fig:Reduction_VC2UL_game}.

\begin{figure}[htb]
    \centering
    \scalebox{0.45}{\includegraphics{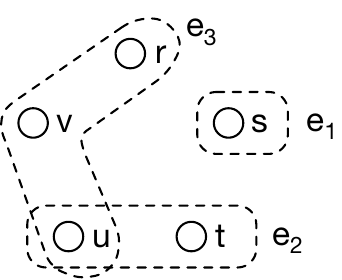}}
    \caption{A sample of $3$-graph.}
    \label{fig:Reduction_VC2UL_graph}
\end{figure}

\begin{figure*}[htb]
    \centering
    \scalebox{0.45}{\includegraphics{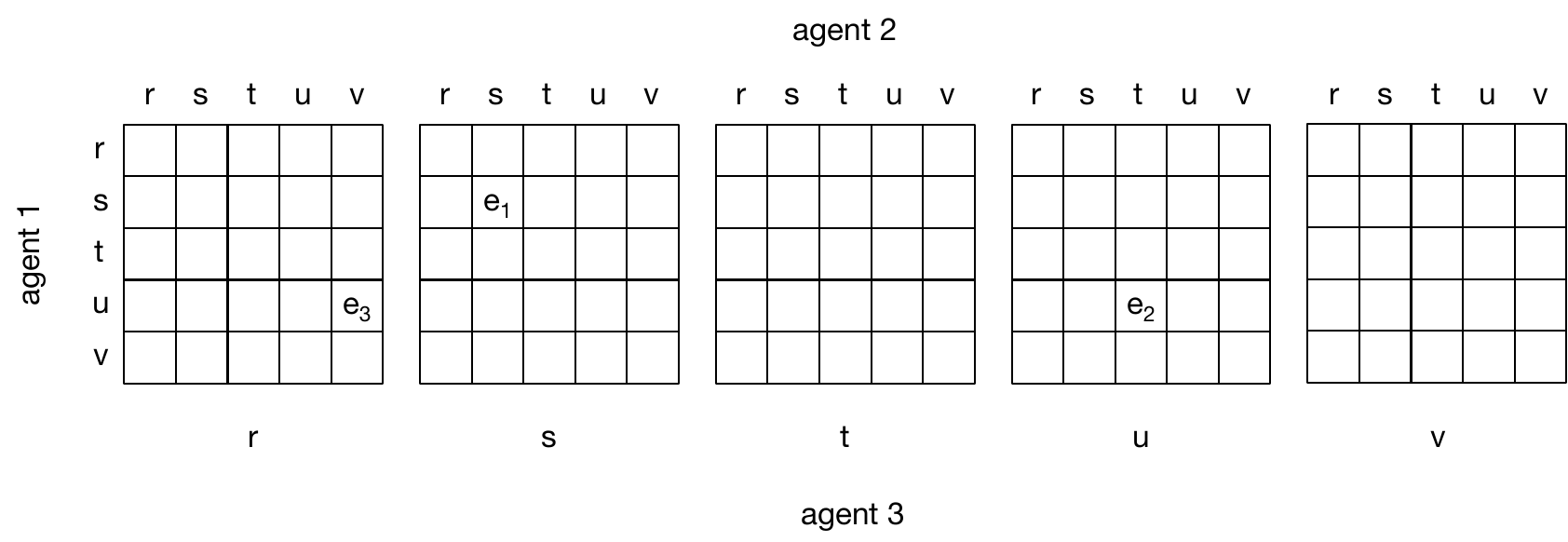}}
    \caption{The game induced from the $3$-graph in Figure~\ref{fig:Reduction_VC2UL_graph} by Definition~\ref{df:graph to game}.}
    \label{fig:Reduction_VC2UL_game}
\end{figure*}

Observe Figure~\ref{fig:Reduction_VC2UL_graph_vertexcover} that the set $\{s,u\}$ is a vertex cover of the graph in Figure~\ref{fig:Reduction_VC2UL_graph}.
If a law bans these two actions in the induced game in Figure~\ref{fig:Reduction_VC2UL_game}, then, as illustrated in Figure~\ref{fig:Reduction_VC2UL_game_usefullaw}, all outcomes corresponding to the greyed cells become unlawful, including all three prohibited outcomes.
In other words, all lawful profiles are non-prohibited. Therefore, the set $\{s,u\}$ is also a useful law in this game.

\begin{figure}[htb]
    \centering
    \scalebox{0.45}{\includegraphics{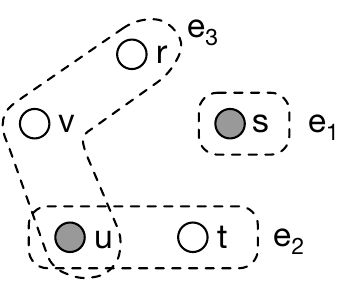}}
    \caption{Illustration of a vertex cover in the $3$-graph in Figure~\ref{fig:Reduction_VC2UL_graph}.}
    \label{fig:Reduction_VC2UL_graph_vertexcover}
\end{figure}

\begin{figure*}[htb]
    \centering
    \scalebox{0.45}{\includegraphics{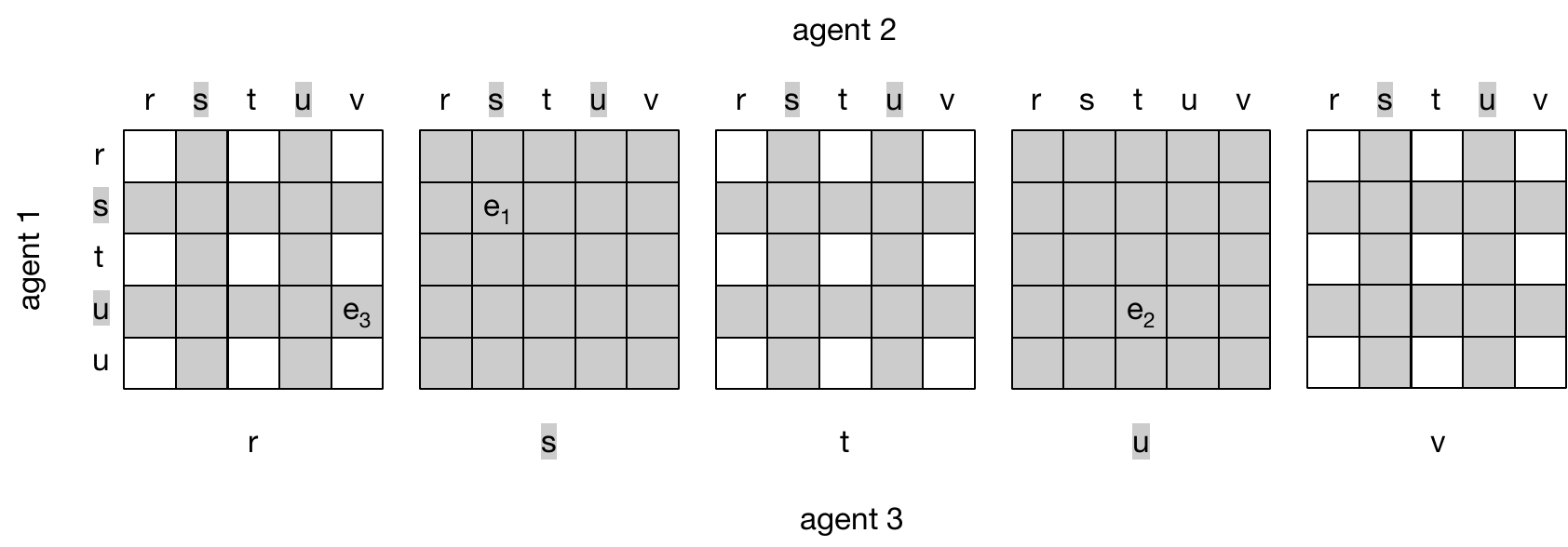}}
    \caption{Illustration of a useful law in the game in Figure~\ref{fig:Reduction_VC2UL_game}.}
    \label{fig:Reduction_VC2UL_game_usefullaw}
\end{figure*}

More generally, one can prove Theorem~\ref{th:vertex cover=useful} below, which bridges a vertex cover and a useful law.
To increase the readability of its proof, let us first prove the next lemma.

\begin{lemma}\label{lm:graph to game}
$\mathcal{S}(\delta^e)=e$ for each profile $\delta^e\in\mathbb{P}_{(k,V,E)}$.
\end{lemma}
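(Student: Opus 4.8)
The plan is to unpack the definition of the profile $\delta^e$ given in item~\ref{dfitem:graph to game prohibition} of Definition~\ref{df:graph to game} and show the two set inclusions $\mathcal{S}(\delta^e)\subseteq e$ and $e\subseteq\mathcal{S}(\delta^e)$. Recall that $\delta^e=\{\delta^e_i\}_{i\in[k]}$ where $\delta^e_i$ is the $((i\bmod|e|)+1)^{\text{th}}$ element of $e$ in some fixed enumeration, and by statement~\eqref{eq:profile support set} we have $\mathcal{S}(\delta^e)=\{\delta^e_i\mid i\in[k]\}$. Fix the predefined enumeration $e=\{v_1,\dots,v_{|e|}\}$, so that $\delta^e_i=v_{(i\bmod|e|)+1}$.

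For the inclusion $\mathcal{S}(\delta^e)\subseteq e$: any element of $\mathcal{S}(\delta^e)$ is of the form $\delta^e_i=v_{(i\bmod|e|)+1}$ for some $i\in[k]$, and since $0\le i\bmod|e|\le|e|-1$, the index $(i\bmod|e|)+1$ lies in $\{1,\dots,|e|\}$, so $\delta^e_i\in e$. For the reverse inclusion $e\subseteq\mathcal{S}(\delta^e)$: given any $v_j\in e$ with $1\le j\le|e|$, I need to exhibit an index $i\in[k]=\{1,\dots,k\}$ with $(i\bmod|e|)+1=j$, i.e. $i\bmod|e|=j-1$. Since $1\le j\le|e|\le k$ (the bound $|e|\le k$ holds because $(V,E)$ is a $k$-graph), taking $i=j-1$ works when $j\ge 2$, and taking $i=|e|$ works when $j=1$ (as $|e|\bmod|e|=0$); in both cases $i\in[k]$. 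Hence $v_j=\delta^e_i\in\mathcal{S}(\delta^e)$, completing the argument.

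I do not anticipate any real obstacle here — the only point requiring a moment's care is the index bookkeeping, in particular checking that the witness index $i$ produced for each $v_j\in e$ genuinely falls in the range $[k]$, which relies on the fact that every edge of a $k$-graph has size at most $k$ (Definition~\ref{df:k-rank hypergraph}). I would state the fixed enumeration of $e$ once at the outset to keep the notation clean, then dispatch the two inclusions in a couple of lines each.
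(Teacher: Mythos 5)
Your proof is correct and follows essentially the same route as the paper's: the forward inclusion by noting each $\delta^e_i$ indexes into $e$, and the reverse inclusion via the same two-case witness construction ($i=|e|$ for $j=1$ and $i=j-1$ for $j\ge 2$), with the range check $i\in[k]$ justified by $|e|\le k$ exactly as in the paper.
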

\begin{proof}
($\subseteq$)
Consider an arbitrary action $d\in\mathcal{S}(\delta^e)$.
Then, $d=\delta^e_i$ for some agent $i\in[k]$ by statement~\eqref{eq:profile support set} and item~\ref{dfitem:graph to game agents} of Definition~\ref{df:graph to game}.
Thus, $d\in e$ by item~\ref{dfitem:graph to game prohibition} of Definition~\ref{df:graph to game}.

($\supseteq$)
Consider an arbitrary vertex $v\in e$.
Suppose $v$ is the $j^{\text{th}}$ item in the predefined order of set $e$ in item~\ref{dfitem:graph to game prohibition} of Definition~\ref{df:graph to game}.
Then,
\begin{equation}\label{eq:7-23-1}
    1\leq j\leq |e|.
\end{equation}
Note that $(V,E)$ is a $k$-graph by Definition~\ref{df:graph to game}. 
Then, $|e|\leq k$ by Definition~\ref{df:k-rank hypergraph}. 
Thus, $1\leq j\leq |e|\leq k$ by statement~\eqref{eq:7-23-1}.
Hence, one of the next cases must be true:

Case 1: $j=1$. Then, $|e|\ \mathrm{mod}\ |e|+1=j$. Thus, $\delta^e_{|e|}=v$ by item~\ref{dfitem:graph to game prohibition} of Definition~\ref{df:graph to game}. Then, $v\!\in\!\mathcal{S}(\delta^e)$ by statement~\eqref{eq:profile support set}.

Case 2: $1< j\leq |e|$. Then, $(j-1) \ \mathrm{mod}\ |e|+1=j$. Thus, $\delta^e_{j-1}=v$ by item~\ref{dfitem:graph to game prohibition} of Definition~\ref{df:graph to game}. Then, $v\!\in\!\mathcal{S}(\delta^e)$ by statement~\eqref{eq:profile support set}.
\end{proof}

Lemma~\ref{lm:graph to game} above implies that $L\cap e\neq\varnothing$ is true if and only if $L\cap\mathcal{S}(\delta^e)\neq\varnothing$.
This intuitively shows that a vertex set $L$ covers an edge $e$ if and only if the prohibited profile $\delta^e$ is unlawful under law $L$.
Next, we prove Theorem~\ref{th:vertex cover=useful}.

\noindent\textbf{Theorem~\ref{th:vertex cover=useful}. }
\textit{A set $C$ is a vertex cover of a $k$-graph $(V,E)$ if and only if $C$ is a useful law in the game $\mathcal{G}_{(k,V,E)}$.}
\begin{proof}
Note that, by item~\ref{dfitem:graph to game actions} of Definition~\ref{df:graph to game},
\begin{equation}\textstyle\label{eq:7-23-2}
    \bigcup\Delta_{(k,V,E)}=V.
\end{equation}

($\Rightarrow$)
Suppose set $C$ is a vertex cover of the graph $(V,E)$. Then, $C\subseteq V$ by item~\ref{dfitem:vertex cover cover} of Definition~\ref{df:vertex cover}.
Thus, set $C$ is a law in the game $\mathcal{G}_{(k,V,E)}$ by Definition~\ref{df:law in game} and statement~\eqref{eq:7-23-2}.
Hence, it suffices to show that $C\cap\mathcal{S}(\delta^e)\neq\varnothing$ for each profile $\delta^e\in\mathbb{P}_{(k,V,E)}$ by Lemma~\ref{lm:useful law}.

Suppose the opposite. Then, a profile $\delta^e\in\mathbb{P}_{(k,V,E)}$ such that $C\cap\mathcal{S}(\delta^e)=\varnothing$ exists.
Thus, $C\cap e=\varnothing$ by Lemma~\ref{lm:graph to game}.
Hence, $C$ is not a vertex cover of the graph $(V,E)$ by item~\ref{dfitem:vertex cover cover} of Definition~\ref{df:vertex cover}, which contradicts the ($\Rightarrow$) part assumption.

($\Leftarrow$)
Suppose $C$ is a useful law in the game $\mathcal{G}_{(k,V,E)}$.
Then, by Definition~\ref{df:law in game} and statement~\eqref{eq:7-23-2},
\begin{equation}\label{eq:7-23-3}
    C\subseteq V
\end{equation}
and $C\cap\mathcal{S}(\delta^e)\neq\varnothing$ for each profile $\delta^e\in\mathbb{P}_{(k,V,E)}$ by Lemma~\ref{lm:useful law}.
Thus, $C\cap e\neq\varnothing$ for each edge $e\in E$ by item~\ref{dfitem:graph to game prohibition} of Definition~\ref{df:graph to game} and Lemma~\ref{lm:graph to game}.
Hence, $C$ is vertex cover of the graph $(V,E)$ by statement~\eqref{eq:7-23-3} and item~\ref{dfitem:vertex cover cover} of Definition~\ref{df:vertex cover}.
\end{proof}

\subsection{Proof of Theorem~\ref{th:MinUR hard to approximate}}\label{app_sec:proof of Theorem MiUR hard to approximate}

\noindent\textbf{Theorem~\ref{th:MinUR hard to approximate}. }
\textit{$\mathbf{MinUR}$ in a game $(\mathcal{A},\Delta,\mathbb{P})$ is NP-hard to approximate within factor $|\mathcal{A}|-\epsilon$ for any $\epsilon\!>\!0$ when $|\mathcal{A}|\!\geq\! 2$.}
\begin{proof}
Suppose the opposite.
Then, there is an $\epsilon>0$ and an algorithm $\mathtt{Alg}$ that approximates $\mathbf{MinUR}$ with factor $|\mathcal{A}|-\epsilon$ for any game $(\mathcal{A},\Delta,\mathbb{P})$ where $|\mathcal{A}|\geq 2$.

Consider an arbitrary $k$-graph $(V,E)$ where $k\geq 2$. We consider the next two steps:
\begin{enumerate}
    \item compute the game $\mathcal{G}_{(k,V,E)}$ corresponding to the graph $(V,E)$ by Definition~\ref{df:graph to game};
    \item use the algorithm $\mathtt{Alg}$ to get an approximated useful reduction $L$ of law $V$ in the game $\mathcal{G}_{(k,V,E)}$.
\end{enumerate}
By step~2 above and Corollary~\ref{cr:vertex cover=useful reduction}, set $L$ is a vertex cover of the graph $(V,E)$.

Note that the minimum useful reduction of law $V$ in the game $\mathcal{G}_{(k,V,E)}$ is the minimum vertex cover of the graph $(V,E)$ by Corollary~\ref{cr:vertex cover=useful reduction}.
Meanwhile, the size of set $L$ is at most $(k-\epsilon)$ times the size of the minimum useful reduction of law $V$ by the assumption about the approximation factor of $\mathtt{Alg}$.
Thus, the size of set $L$ is at most $(k-\epsilon)$ times the size of the minimum vertex cover of the graph $(V,E)$.
Hence, by the arbitrariness of the $k$-graph $(V,E)$, the above two steps form an algorithm that approximates $\mathbf{MinVC}$ with factor $k-\epsilon$, which contradicts Theorem~\ref{th:vertex cover hard to approximate}.
\end{proof}

\subsection{the $|\mathcal{A}|$-graph $(\bigcup\Delta,\mathcal{S}(\delta))$}\label{app_sec:game to |A|-graph}

First, note that, as a standard notation in mathematics,
\begin{equation}\label{eq:game prohibition to graph edges}
 \mathcal{S}(\mathbb{P})=\{\mathcal{S}(\delta)\mid \delta\in\mathbb{P}\}.    
\end{equation}

We use the introductory example to illustrate an induced graph $(\bigcup\Delta,\mathcal{S}(\delta))$.
Note that the game in each three-day cycle of this example can be captured in Figure~\ref{fig:ND intro game matrix}.
In this game, there are three agents $a$, $b$, and $c$; each agent can choose to dump on the first, the second, or the third day in a three-day cycle.
The bio-hazard symbol \Biohazard\, captures the outcomes where the fish is killed, which happens when all three factories dump on the same day.
This setting contains $3\times 3$ relavant actions: $d_a^1$, $d_a^2$, $d_a^3$, $d_b^1$, $d_b^2$, $d_b^3$, $d_c^1$, $d_c^2$, $d_c^3$.
\begin{figure}[htb]
    \centering
    \scalebox{0.5}{\includegraphics{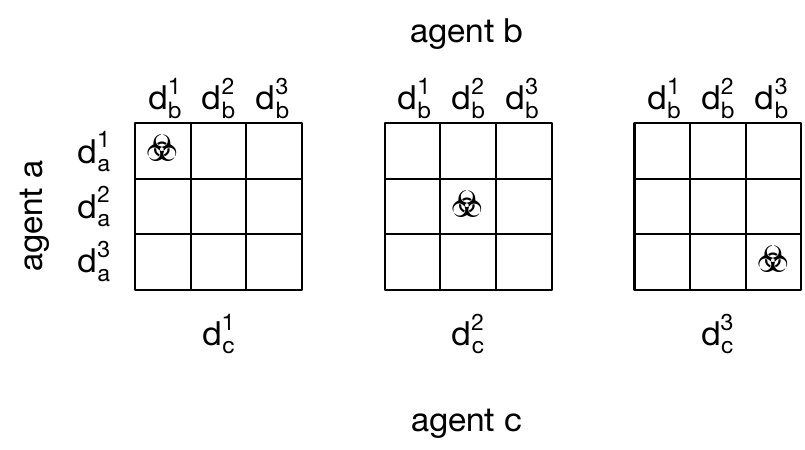}}
    \caption{Matrix game in accordance with the introductory example.}
    \label{fig:ND intro game matrix}
\end{figure}
The induced graph consists of nine vertices: $d_a^1$, $d_a^2$, $d_a^3$, $d_b^1$, $d_b^2$, $d_b^3$, $d_c^1$, $d_c^2$, $d_c^3$, as well as three edges: $e_1=\{d_a^1,d_b^1,d_c^1\}$, $e_2=\{d_a^2,d_b^2,d_c^2\}$, and $e_3=\{d_a^3,d_b^3,d_c^3\}$.
The induced graph is illustrated in Figure~\ref{fig:Reduction_UL2VC_originalgraph}.
\begin{figure}[htb]
    \centering
    \scalebox{0.5}{\includegraphics{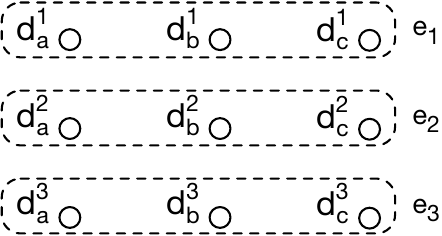}}
    \caption{The graph induced from the game in Figure~\ref{fig:ND intro game matrix}.}
    \label{fig:Reduction_UL2VC_originalgraph}
\end{figure}

Now, we formally prove that the tuple $(\bigcup\Delta,\mathcal{S}(\mathbb{P}))$ is an $|\mathcal{A}|$-graph.

\begin{lemma}\label{lm:game to graph}
For a game $(\mathcal{A},\Delta,\mathbb{P})$, the tuple $(\bigcup\Delta,\mathcal{S}(\mathbb{P}))$ is an $|\mathcal{A}|$-graph.
\end{lemma}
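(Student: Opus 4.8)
The plan is to verify the two defining conditions of a rank-$|\mathcal{A}|$ hypergraph (Definition~\ref{df:k-rank hypergraph}) for the tuple $(\bigcup\Delta,\mathcal{S}(\mathbb{P}))$: namely that $\bigcup\Delta$ is a finite vertex set, that every element of $\mathcal{S}(\mathbb{P})$ is a nonempty subset of $\bigcup\Delta$, and that every element of $\mathcal{S}(\mathbb{P})$ has cardinality between $1$ and $|\mathcal{A}|$. Since a game is required to have a nonempty \emph{finite} agent set $\mathcal{A}$ and each $\Delta_a$ finite (items~\ref{dfitem:game agents} and~\ref{dfitem:game actions} of Definition~\ref{df:game}), the union $\bigcup\Delta$ is a finite set, so the vertex-set requirement is immediate.

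Next I would fix an arbitrary edge $E'\in\mathcal{S}(\mathbb{P})$. By statement~\eqref{eq:game prohibition to graph edges}, $E'=\mathcal{S}(\delta)$ for some profile $\delta\in\mathbb{P}$, and by statement~\eqref{eq:profile support set}, $\mathcal{S}(\delta)=\{\delta_a\mid a\in\mathcal{A}\}$. Since $\delta\in\prod\Delta$, each coordinate $\delta_a$ lies in $\Delta_a\subseteq\bigcup\Delta$, so $\mathcal{S}(\delta)\subseteq\bigcup\Delta$, giving $E'\subseteq\bigcup\Delta$. For the size bounds: $\mathcal{S}(\delta)$ is the image of the index set $\mathcal{A}$ under the map $a\mapsto\delta_a$, hence $|\mathcal{S}(\delta)|\leq|\mathcal{A}|$; and since $\mathcal{A}$ is nonempty, this image is nonempty, so $|\mathcal{S}(\delta)|\geq 1$. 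Therefore $1\leq|E'|\leq|\mathcal{A}|$, which is exactly the rank-$|\mathcal{A}|$ edge condition. As $E'$ was arbitrary, all conditions of Definition~\ref{df:k-rank hypergraph} hold with $k=|\mathcal{A}|$.

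There is no real obstacle here — the statement is essentially unpacking definitions. The only point requiring a moment's care is the lower bound $|\mathcal{S}(\delta)|\geq 1$, which relies on $\mathcal{A}\neq\varnothing$ (guaranteed by item~\ref{dfitem:game agents} of Definition~\ref{df:game}); without nonemptiness of the agent set, $\prod\Delta$ would contain the empty profile whose support is empty, and the edge-size condition of Definition~\ref{df:k-rank hypergraph} would fail. Everything else is a direct substitution of the relevant definitions, so the proof will be short.
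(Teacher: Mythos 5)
Your proof is correct and follows essentially the same route as the paper's: finiteness of $\bigcup\Delta$ from items~\ref{dfitem:game agents} and~\ref{dfitem:game actions} of Definition~\ref{df:game}, and the bound $1\leq|\mathcal{S}(\delta)|\leq|\mathcal{A}|$ from the nonemptiness and finiteness of $\mathcal{A}$ together with statement~\eqref{eq:profile support set}. Your explicit check that each edge is a subset of $\bigcup\Delta$ is a detail the paper leaves implicit, but the argument is the same.
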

\begin{proof}
Note that, by items~\ref{dfitem:game agents} and \ref{dfitem:game actions} of Definition~\ref{df:game}, the sets $\mathcal{A}$ and $\Delta_a$ for each $a\in\mathcal{A}$ are all finite.
Then,
\begin{equation}\textstyle\label{eq:7-30-1}
    \text{the set $\bigcup\Delta$ is finite.}
\end{equation}

At the same time, $\mathcal{A}\neq\varnothing$ by item~\ref{dfitem:game agents} of Definition~\ref{df:game}.
Then, $1\leq|\mathcal{S}(\delta)|\leq|\mathcal{A}|$ for each profile $\delta\in\mathbb{P}$ by item~\ref{dfitem:game prohibitions} of Definition~\ref{df:game} and statement~\eqref{eq:profile support set}.
Thus, the statement of the lemma is true by Definition~\ref{df:k-rank hypergraph} and statement~\eqref{eq:7-30-1}.
\end{proof}

\subsection{Proof of Theorem~\ref{th:useful=vertex cover}}\label{app_sec:proof of Theorem useful=vertex cover}

Observe Figure~\ref{fig:Reduction_UL2VC_graphcoverL0} that the law $L_0$ in the introductory example is a vertex cover in the graph in Figure~\ref{fig:Reduction_UL2VC_originalgraph}.

\begin{figure}[htb]
    \centering
    \scalebox{0.5}{\includegraphics{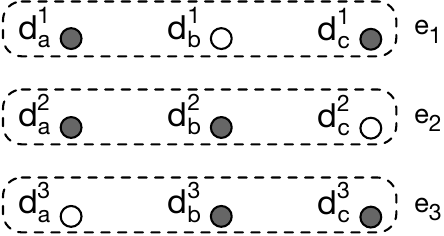}}
    \caption{$L_0$ is a useful law in the game in Figure~\ref{fig:ND intro game matrix} and a vertex cover in the graph in Figure~\ref{fig:Reduction_UL2VC_originalgraph}.}
    \label{fig:Reduction_UL2VC_graphcoverL0}
\end{figure}

More generally, we can prove the next theorem.

\noindent\textbf{Theorem~\ref{th:useful=vertex cover}. }
\textit{A set $L$ is a useful law in a game $(\mathcal{A},\Delta,\mathbb{P})$ if and only if $L$ is a vertex cover of the graph $(\bigcup\Delta,\mathcal{S}(\mathbb{P}))$.}
\begin{proof}
($\Rightarrow$)
Suppose that the set $L$ is a useful law in the game $(\mathcal{A},\Delta,\mathbb{P})$.
Then, $L\subseteq\bigcup\Delta$ and $\mathcal{S}(\delta)\cap L\neq\varnothing$ for each profile $\delta\in\mathbb{P}$ by Definition~\ref{df:law in game} and Lemma~\ref{lm:useful law}.
Thus, $L$ is a vertex cover of the graph $(\bigcup\Delta,\mathcal{S}(\mathbb{P}))$ by item~\ref{dfitem:vertex cover cover} of Definition~\ref{df:vertex cover} and statement~\eqref{eq:game prohibition to graph edges}. 

($\Leftarrow$) Suppose that the set $L$ is a vertex cover of the graph $(\bigcup\Delta,\mathcal{S}(\mathbb{P}))$.
Then, $L\subseteq\bigcup\Delta$ and $\mathcal{S}(\delta)\cap L\neq\varnothing$ for each profile $\delta\in\mathbb{P}$ by item~\ref{dfitem:vertex cover cover} of Definition~\ref{df:vertex cover} and statement~\eqref{eq:game prohibition to graph edges}.
Thus, $L$ is a useful law in the game $(\mathcal{A},\Delta,\mathbb{P})$ by Definition~\ref{df:law in game} and Lemma~\ref{lm:useful law}.
\end{proof}

\subsection{Proof of Theorem~\ref{th:useful reduction=vertex cover}}\label{app_sec:proof of Theorem useful reduction=vertex cover}

\begin{figure}[htb]
    \centering
    \scalebox{0.5}{\includegraphics{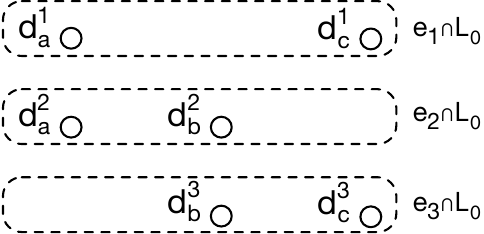}}
    \caption{An illustration of Definition~\ref{df:vertex set reduced graph} that uses the graph in Figure~\ref{fig:Reduction_UL2VC_originalgraph}.}
    \label{fig:Reduction_UL2VC_inducedgraph}
\end{figure}

Note that, by Theorem~\ref{th:useful=vertex cover}, reducing a useful law corresponds to finding a smaller vertex cover within an existing one.
For instance, the law $L_0$ forms a vertex cover in the graph in Figure~\ref{fig:Reduction_UL2VC_originalgraph}, as illustrated in Figure~\ref{fig:Reduction_UL2VC_graphcoverL0}.
To reduce law $L_0$ into a useful law is to select a subset of $L_0$ that can cover the graph in Figure~\ref{fig:Reduction_UL2VC_originalgraph}.
Observe that, this is equivalent to obtain a vertex cover in the graph shown in Figure~\ref{fig:Reduction_UL2VC_inducedgraph}.
In other words, the smaller vertex cover can be regarded as a vertex cover in a subgraph induced by the original cover, which is formalised as the theorem below.

\noindent\textbf{Theorem~\ref{th:useful reduction=vertex cover}. }
\textit{For a useful law $L$ in a game $(\mathcal{A},\Delta,\mathbb{P})$, law $L'$ is a useful reduction of $L$ if and only if $L'$ is a vertex cover of the induce subgraph $(L,\mathcal{S}(\mathbb{P})^L)$.}
\begin{proof}
($\Rightarrow$)
Suppose $L'$ is not a vertex cover of the graph $(L,\mathcal{S}(\mathbb{P})^L)$. Then, by items~\ref{dfitem:vertex cover cover} of Definition~\ref{df:vertex cover}, one of the next cases must be true:

Case 1: $L'\nsubseteq L$. Then, $L'$ is not a useful reduction of $L$ by item~\ref{dfitem:useful reduction} of Definition~\ref{df:minimum useful reduction}.

Case 2: $L'\subseteq L$ and there is an edge $e\in\mathcal{S}(\mathbb{P})^L$ such that $L'\cap e=\varnothing$.
Then, by Definition~\ref{df:vertex set reduced graph} and statement~\eqref{eq:game prohibition to graph edges}, there is a profile $\delta\in\mathbb{P}$ such that $e=L\cap\mathcal{S}(\delta)$.
Thus, $L'\cap(L\cap\mathcal{S}(\delta))=\varnothing$ by the case assumption $L'\cap e=\varnothing$.
Then, $L'\cap\mathcal{S}(\delta)=L'\cap L\cap\mathcal{S}(\delta)=\varnothing$ by the case assumption $L'\subseteq L$.
Thus, $L'$ is not a useful law in the game $(\mathcal{A},\Delta,\mathbb{P})$ by Lemma~\ref{lm:useful law}.
Hence, $L'$ is not a useful reduction of $L$ by item~\ref{dfitem:useful reduction} of Definition~\ref{df:minimum useful reduction}.

($\Leftarrow$)
Suppose $L'$ is not a useful reduction of $L$. Then, by item~\ref{dfitem:useful reduction} of  Definition~\ref{df:minimum useful reduction}, one of the next cases must be true: 

Case 1: $L'\nsubseteq L$. Then, $L'$ is not a vertex cover of the graph $(L,\mathcal{S}(\mathbb{P})^L)$ by item~\ref{dfitem:vertex cover cover} of Definition~\ref{df:vertex cover}.

Case 2: $L'$ is not useful. Then, by Lemma~\ref{lm:useful law}, there is a profile $\delta\in\mathbb{P}$ such that $L'\cap\mathcal{S}(\delta)=\varnothing$.
Thus,
\begin{equation}\label{eq:7-9-6}
    L'\cap(L\cap\mathcal{S}(\delta))=L\cap(L'\cap\mathcal{S}(\delta))=L\cap\varnothing=\varnothing.
\end{equation}
Note that $L\cap\mathcal{S}(\delta)\in\mathcal{S}(\mathbb{P})^L$ by statement~\eqref{eq:game prohibition to graph edges} and Definition~\ref{df:vertex set reduced graph}.
Hence, $L'$ is not a vertex cover of the graph $(L,\mathcal{S}(\mathbb{P})^L)$ by statement~\eqref{eq:7-9-6} and item~\ref{dfitem:vertex cover cover} of Definition~\ref{df:vertex cover}.
\end{proof}

\subsection{Algorithms for the $\mathbf{UL}$ Problems}\label{app_sec:algorithms for the UL problems}

\begin{algorithm}[bth]
\caption{Algorithms for the $\mathbf{UL}$ Problems}
\label{alg:UL design}
\SetKwFunction{FMain}{$\mathtt{IsUL}$}
\SetKwProg{Fn}{}{:}{}
\Fn{\FMain{game $(\mathcal{A},\Delta,\mathbb{P})$, set $L$}}{
    \Return $\mathtt{IsVC}$ (graph $(\bigcup\Delta,\mathcal{S}(\mathbb{P}))$, set $L$) \;
}

\SetKwFunction{FMain}{$\mathtt{IsMiniUL}$}
\SetKwProg{Fn}{}{:}{}
\Fn{\FMain{game $(\mathcal{A},\Delta,\mathbb{P})$, set $L$}}{
    \Return $\mathtt{IsMiniVC}$ (graph $(\bigcup\Delta,\mathcal{S}(\mathbb{P}))$, set $L$) \;
}

\SetKwFunction{FMain}{$\mathtt{AppMinUR}$}
\SetKwProg{Fn}{}{:}{}
\Fn{\FMain{game $(\mathcal{A},\Delta,\mathbb{P})$, set $L$}}{
    \If{$\mathtt{IsUL}$($(\mathcal{A},\Delta,\mathbb{P})$, $L$)}{
        \Return $\mathtt{AppMinVC}$ (graph $(L,\mathcal{S}(\mathbb{P})^L)$) \;
    }
    \Return \textit{false}\;
}
\end{algorithm}

\subsection{Approximation Factor of $\mathtt{AppMinUR}$}\label{app_sec:MinUR approximation factor}

We first prove the next lemma.

\begin{lemma}\label{lm:game and useful law to reduced graph}
For a useful law $L$ in a game $(\mathcal{A},\Delta,\mathbb{P})$, the tuple $(L,\mathcal{S}(\mathbb{P})^{L})$ is an $|\mathcal{A}|$-graph.
\end{lemma}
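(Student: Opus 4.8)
The plan is to verify the two defining properties of an $|\mathcal{A}|$-graph from Definition~\ref{df:k-rank hypergraph}: that the vertex set $L$ is finite, and that every edge in $\mathcal{S}(\mathbb{P})^{L}$ is a nonempty subset of $L$ of size at most $|\mathcal{A}|$. First I would observe that $L\subseteq\bigcup\Delta$ by Definition~\ref{df:law in game}, and that $\bigcup\Delta$ is finite since $\mathcal{A}$ and each $\Delta_a$ are finite by items~\ref{dfitem:game agents} and~\ref{dfitem:game actions} of Definition~\ref{df:game} (this is exactly statement~\eqref{eq:7-30-1} in the proof of Lemma~\ref{lm:game to graph}); hence $L$ is finite.

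Next I would unpack the edge set. By Definition~\ref{df:vertex set reduced graph} applied to the graph $(\bigcup\Delta,\mathcal{S}(\mathbb{P}))$ and the vertex cover $L$, together with statement~\eqref{eq:game prohibition to graph edges}, each edge of $(L,\mathcal{S}(\mathbb{P})^{L})$ has the form $L\cap\mathcal{S}(\delta)$ for some $\delta\in\mathbb{P}$. Since $L$ is a useful law, Lemma~\ref{lm:useful law} gives $L\cap\mathcal{S}(\delta)\neq\varnothing$, so every edge is nonempty. For the upper bound, $|L\cap\mathcal{S}(\delta)|\leq|\mathcal{S}(\delta)|\leq|\mathcal{A}|$ by statement~\eqref{eq:profile support set}, and clearly $L\cap\mathcal{S}(\delta)\subseteq L$. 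Thus every edge $e$ satisfies $e\subseteq L$ and $1\leq|e|\leq|\mathcal{A}|$, which is precisely what Definition~\ref{df:k-rank hypergraph} requires for a rank-$|\mathcal{A}|$ hypergraph.

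One small point worth noting explicitly: a useful law must be applied to a game whose prohibition may be empty, in which case $\mathcal{S}(\mathbb{P})^{L}=\varnothing$ and the edge condition holds vacuously; and $|\mathcal{A}|\geq 1$ by item~\ref{dfitem:game agents} of Definition~\ref{df:game}, so the rank parameter is a legitimate positive integer. I do not expect any real obstacle here — the statement is essentially a bookkeeping consequence of Lemma~\ref{lm:game to graph} (which already establishes $(\bigcup\Delta,\mathcal{S}(\mathbb{P}))$ is an $|\mathcal{A}|$-graph) combined with the observation, already recorded right after Definition~\ref{df:vertex set reduced graph}, that intersecting each edge with a vertex cover $C$ preserves nonemptiness and cannot increase edge size. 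The only thing to be careful about is citing Lemma~\ref{lm:useful law} to guarantee the edges stay nonempty, since that is where the usefulness hypothesis on $L$ is actually used.
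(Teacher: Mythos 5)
Your proof is correct and takes essentially the same route as the paper: the paper notes that $(\bigcup\Delta,\mathcal{S}(\mathbb{P}))$ is an $|\mathcal{A}|$-graph (Lemma~\ref{lm:game to graph}), that $L$ is a vertex cover of it (Theorem~\ref{th:useful=vertex cover}, which rests on Lemma~\ref{lm:useful law}), and then invokes the well-definedness of the induced subgraph from Definition~\ref{df:vertex set reduced graph}. You simply unpack those same steps explicitly, with the usefulness hypothesis entering exactly where the paper uses it, to guarantee nonempty edges.
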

\begin{proof}
Note that $(\bigcup\Delta,\mathcal{S}(\mathbb{P}))$ is a $|\mathcal{A}|$-graph by Lemma~\ref{lm:game to graph} and $L$ is a vertex cover in the graph $(\bigcup\Delta,\mathcal{S}(\mathbb{P}))$ by Theorem~\ref{th:useful=vertex cover} and the assumption of the lemma that $L$ is a useful law in the game $(\mathcal{A},\Delta,\mathbb{P})$.
Then, the statement of the lemma follows from Definition~\ref{df:vertex set reduced graph}.
\end{proof}

Note that, by our assumption, algorithm $\mathtt{AppMinVC}$ in $k$-graphs has an approximation factor of $k$.
Meanwhile, by Lemma~\ref{lm:game and useful law to reduced graph}, the subgraph $(L,\mathcal{S}(\mathbb{P})^L)$ is an $|\mathcal{A}|$-graph.
Then, the vertex cover returned in line~7 of Algorithm~\ref{alg:UL design} (say $C$) is at most $|\mathcal{A}|$ times the minimum vertex cover of the input graph $(L,\mathcal{S}(\mathbb{P})^L)$ (say $\tilde{C}$):
\begin{equation}\label{eq:7-29-4}
    |C|\leq |\mathcal{A}|\times |\tilde{C}|.
\end{equation}
Meanwhile, by Theorem~\ref{th:useful reduction=vertex cover}, set $C$ is a useful reduction of $L$ and set $\tilde{C}$ is the minimum useful reduction of $L$.
Thus, statement~\eqref{eq:7-29-4} can be interpreted as: the size of the returned set in line~7 of Algorithm~\ref{alg:UL design} is at most $|\mathcal{A}|$ times the size of the minimum useful reduction.
Hence, algorithm $\mathtt{AppMinUR}$ in Algorithm~\ref{alg:UL design} is an $|\mathcal{A}|$-approximation of the problem $\mathbf{MinUR}$.

\section{Supplementary to Section~\ref{sec:gap-free law}}\label{app_sec:gap-free law}

\subsection{Proof of Theorem~\ref{th:reduce useful to gap-free}} \label{app_sec:proof of Theorem reduce useful to gap-free}

Definition~\ref{df:useful game to gap free game} indeed characterises a \textit{polynomial algorithm} to construct the game $(\bar{\mathcal{A}},\bar{\Delta},\bar{\mathbb{P}})$ for any given game $(\mathcal{A},\Delta,\mathbb{P})$.
To illustrate this definition, let us use the game in Figure~\ref{fig:ND intro game matrix} as the \textit{input} game $(\mathcal{A},\Delta,\mathbb{P})$, and construct the \textit{output} game $(\bar{\mathcal{A}},\bar{\Delta},\bar{\mathbb{P}})$, as illustrated in Figure~\ref{fig:Reduction_UL2GFL_game}. 

\begin{figure*}[htb]
    \centering
    \scalebox{0.5}{\includegraphics{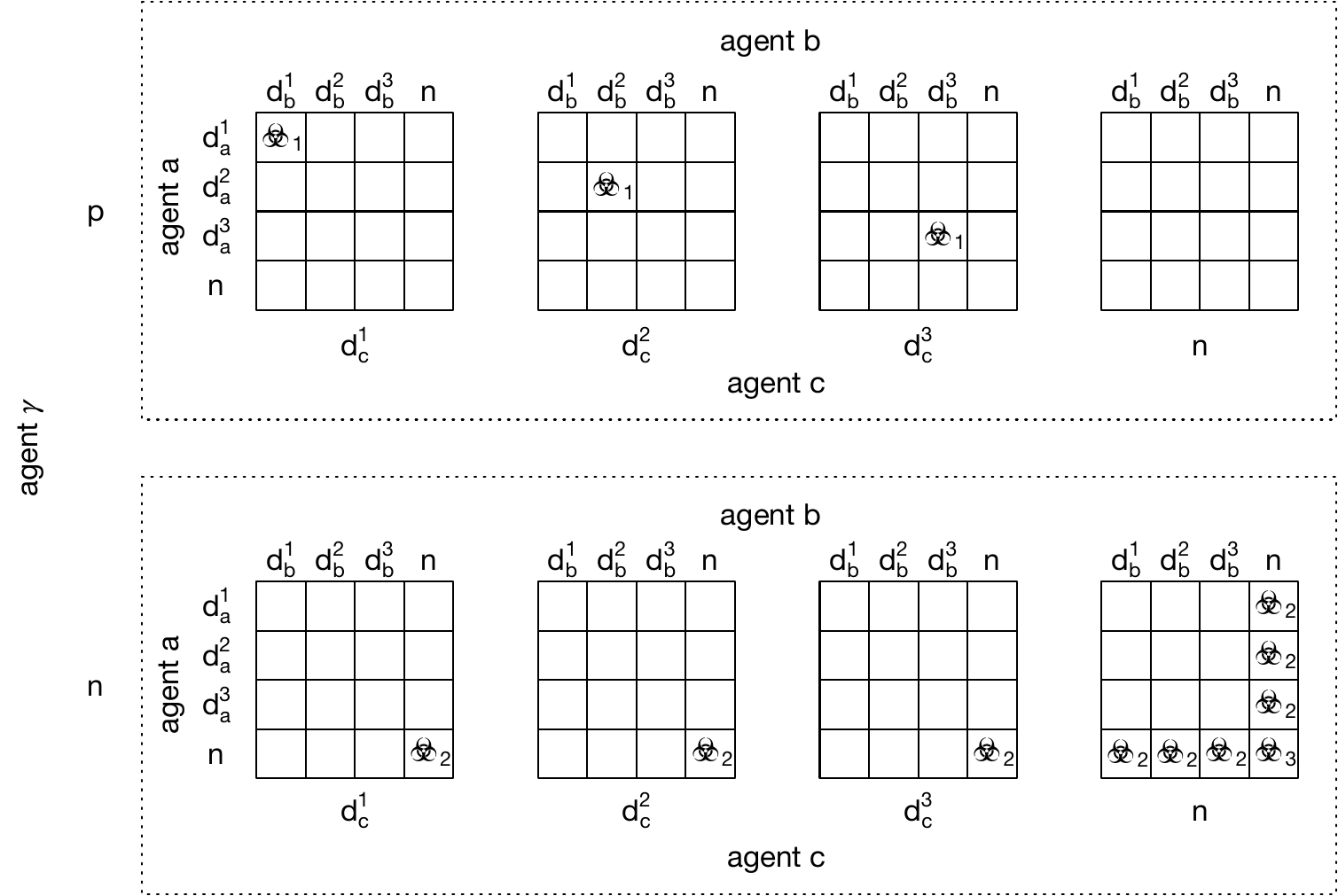}}
    \caption{An illustration of Definition~\ref{df:useful game to gap free game} using the game in Figure~\ref{fig:ND intro game matrix}.}
    \label{fig:Reduction_UL2GFL_game}
\end{figure*}

In this output game, an additional agent $\gamma$ is introduced alongside the original three agents $a,b,c$ (item~\ref{dfitem:useful game to gap free game agent} of Definition~\ref{df:useful game to gap free game}).
Each of the original agents $a, b, c$ has one more available action $n$, in addition to their original actions, while the newly introduced agent $\gamma$ has two available actions $p$ and $n$ (item~\ref{dfitem:useful game to gap free game action} of Definition~\ref{df:useful game to gap free game}).
If agent $\gamma$ chooses action $p$ (see the above part of Figure~\ref{fig:Reduction_UL2GFL_game}), then outcomes are prohibited only when the behaviours of the original agents constitute a prohibited outcome in the input game, as labelled by \Biohazard$_1$ in Figure~\ref{fig:Reduction_UL2GFL_game} (the first bullet in item~\ref{dfitem:useful game to gap free game prohibition} of Definition~\ref{df:useful game to gap free game}).
If agent $\gamma$ chooses action $n$ (see the below part of Figure~\ref{fig:Reduction_UL2GFL_game}), then outcomes are prohibited when exactly one of the original agents chooses their original actions, as labelled by \Biohazard$_2$ in Figure~\ref{fig:Reduction_UL2GFL_game} (the second bullet in item~\ref{dfitem:useful game to gap free game prohibition} of Definition~\ref{df:useful game to gap free game}), or when every original agent chooses action $n$, as labelled by \Biohazard$_3$ in Figure~\ref{fig:Reduction_UL2GFL_game} (the third bullet in item~\ref{dfitem:useful game to gap free game prohibition} of Definition~\ref{df:useful game to gap free game}).

To support the proof of Theorem~\ref{th:reduce useful to gap-free}, we first show the next lemma, which directly follows from Definition~\ref{df:useful game to gap free game}.

\begin{lemma}\label{lm:useful game to gap-free game profile support}
For any profile $\bar{\delta}\in\bar{\mathbb{P}}$ in a game $(\bar{\mathcal{A}},\bar{\Delta},\bar{\mathbb{P}})$,
\begin{enumerate}
    \item if $\bar{\delta}\in\bar{\mathbb{P}}_1$, then $\mathcal{S}(\bar{\delta})=\mathcal{S}(\delta)\cup\{p\}$ for a profile $\delta\in\mathbb{P}$;\label{lmitem:useful game to gap-free game profile support 1}
    \item if $\bar{\delta}\in\bar{\mathbb{P}}_2$, then $\mathcal{S}(\bar{\delta})=\{d,n\}$ for an action $d\in\bigcup\Delta$;\label{lmitem:useful game to gap-free game profile support 2}
    \item if $\bar{\delta}\in\bar{\mathbb{P}}_3$, then $\mathcal{S}(\bar{\delta})=\{n\}$.\label{lmitem:useful game to gap-free game profile support 3}
\end{enumerate}
\end{lemma}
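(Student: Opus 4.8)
The plan is to simply unfold the definition of the support set $\mathcal{S}(\bar{\delta})$ from statement~\eqref{eq:profile support set} and match it against the three disjuncts defining $\bar{\mathbb{P}}$ in Definition~\ref{df:useful game to gap free game}, using throughout that $\bar{\mathcal{A}}=\mathcal{A}\cup\{\gamma\}$ with $\gamma\notin\mathcal{A}$, so that $\mathcal{S}(\bar{\delta})=\{\bar{\delta}_a\mid a\in\mathcal{A}\}\cup\{\bar{\delta}_\gamma\}$ for every $\bar{\delta}\in\prod\bar{\Delta}$.

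For item~1, I would take $\bar{\delta}\in\bar{\mathbb{P}}_1$ and extract from the definition of $\bar{\mathbb{P}}_1$ a profile $\delta\in\mathbb{P}$ with $\bar{\delta}_a=\delta_a$ for every $a\in\mathcal{A}$, together with $\bar{\delta}_\gamma=p$. Substituting into the split above gives $\mathcal{S}(\bar{\delta})=\{\delta_a\mid a\in\mathcal{A}\}\cup\{p\}=\mathcal{S}(\delta)\cup\{p\}$, as claimed. For item~2, given $\bar{\delta}\in\bar{\mathbb{P}}_2$ I would fix the witnessing agent $a\in\mathcal{A}$, set $d=\bar{\delta}_a$, observe that $d\in\Delta_a\subseteq\bigcup\Delta$, and note $\bar{\delta}_b=n$ for every $b\in\bar{\mathcal{A}}\setminus\{a\}$; since $\gamma\neq a$ this index set is nonempty, so $\mathcal{S}(\bar{\delta})=\{d\}\cup\{n\}=\{d,n\}$. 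Item~3 is immediate: for $\bar{\delta}\in\bar{\mathbb{P}}_3$ every coordinate equals $n$ and $\bar{\mathcal{A}}\neq\varnothing$, hence $\mathcal{S}(\bar{\delta})=\{n\}$.

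I do not expect any genuine obstacle, as the statement is pure bookkeeping with the support-set operator and the three-way case split. The only point requiring brief care is that $\mathcal{S}(\cdot)$ forgets multiplicities, so before concluding that the support set has exactly the stated form in items~2 and~3 one must confirm the relevant index sets are nonempty — which follows from $\gamma\notin\mathcal{A}$ and from $\mathcal{A}$ being nonempty by item~\ref{dfitem:game agents} of Definition~\ref{df:game}.
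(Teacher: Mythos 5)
Your proof is correct and matches the paper's treatment: the paper states this lemma without proof, noting it ``directly follows from Definition~\ref{df:useful game to gap free game},'' and your unfolding of $\mathcal{S}(\bar{\delta})=\{\bar{\delta}_a\mid a\in\mathcal{A}\}\cup\{\bar{\delta}_\gamma\}$ against the three cases is exactly that direct verification. The care you take with nonemptiness of the index sets (via $\gamma\notin\mathcal{A}$ and $\mathcal{A}\neq\varnothing$) is appropriate and introduces no gap.
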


Recall Lemma~\ref{lm:useful law} that a useful law $L$ in the game $(\mathcal{A},\Delta,\mathbb{P})$ intersects with $\mathcal{S}(\delta)$ for each profile $\delta\in\mathbb{P}$.
Then, by item~\ref{lmitem:useful game to gap-free game profile support 1} of the above lemma, the law $L$ intersects with $\mathcal{S}(\bar{\delta})$ for each profile $\bar{\delta}\in\bar{\mathbb{P}}_1$.
Also observe that $p,n\notin L$ and agent $\gamma$ takes action $n$ in each profile in the set $\bar{\mathbb{P}}_2\cup\bar{\mathbb{P}}_3$.
Then, by Lemma~\ref{lm:safe action in law imposed game}, the law $L$ makes $p$ a safe action of agent $\gamma$ in the game $(\bar{\mathcal{A}},\bar{\Delta}^L,\bar{\mathbb{P}}^L)$, which means that the law $L$ is gap-free in the game $(\bar{\mathcal{A}},\bar{\Delta},\bar{\mathbb{P}})$ by Lemma~\ref{lm:gap free law}.
Theorem~\ref{th:reduce useful to gap-free} below formalises the above observation and establishes its converse as well.

\noindent\textbf{Theorem~\ref{th:reduce useful to gap-free}. }
\textit{A set $L\subseteq\bigcup\Delta$ is a useful law in a game $(\mathcal{A},\Delta,\mathbb{P})$ if and only if $L$ is a gap-free law in the game $(\bar{\mathcal{A}},\bar{\Delta},\bar{\mathbb{P}})$.}
\begin{proof}
Note that $L\subseteq\bigcup\Delta\subsetneq\bigcup\bar{\Delta}$ by Definition~\ref{df:useful game to gap free game} and the assumption $L\subseteq\bigcup\Delta$ of the theorem.
Then, by Definition~\ref{df:law in game},
\begin{equation}\label{eq:7-4-3}
    \text{$L$ is a law in both games } (\mathcal{A},\Delta,\mathbb{P})\text{ and }(\bar{\mathcal{A}},\bar{\Delta},\bar{\mathbb{P}}),
\end{equation}
by Definition~\ref{df:useful game to gap free game},
\begin{equation}\label{eq:7-4-4}
    p,n\notin L,
\end{equation}
and, by item~\ref{dfitem:law-imposed game 1} of Definition~\ref{df:law-imposed game} and item~\ref{dfitem:useful game to gap free game action} of Definition~\ref{df:useful game to gap free game},
\begin{equation}\label{eq:7-4-2}
    \bar{\Delta}_a^L=\begin{cases}
            \Delta_a^L\cup\{n\}, &\!\text{if } a\in\mathcal{A};\\
            \{p, n\}, &\!\text{if } a=\gamma.
        \end{cases}
\end{equation}

($\Rightarrow$)
Suppose law $L$ is useful in the game $(\mathcal{A},\Delta,\mathbb{P})$. Then, 
\begin{equation}\label{eq:7-4-1}
    L\cap\mathcal{S}(\delta)\neq\varnothing \text{ for each profile } \delta\in\mathbb{P}
\end{equation}
by Lemma~\ref{lm:useful law}.
Consider an arbitrary profile $\bar{\delta}\in\bar{\mathbb{P}}$ such that
\begin{equation}\label{eq:7-6-1}
    \bar{\delta}_{\gamma}=p.
\end{equation}
Then, $\bar{\delta}\in\bar{\mathbb{P}}_1$ by item~\ref{dfitem:useful game to gap free game prohibition} of Definition~\ref{df:useful game to gap free game}.
Thus, $\mathcal{S}(\bar{\delta})=\mathcal{S}(\delta)\cup\{p\}$ for a profile $\delta\in\mathbb{P}$ by item~\ref{lmitem:useful game to gap-free game profile support 1} of Lemma~\ref{lm:useful game to gap-free game profile support}.
Then, $\mathcal{S}(\bar{\delta})\cap L\neq\varnothing$ by statements~\eqref{eq:7-4-1}.
Hence, $p$ is a safe action of agent $\gamma$ in the law-imposed game $(\bar{\mathcal{A}},\bar{\Delta}^L,\bar{\mathbb{P}}^L)$ by statements~\eqref{eq:7-4-2}, \eqref{eq:7-6-1}, Lemma~\ref{lm:safe action in law imposed game}, and the arbitrariness of $\bar{\delta}$.
Therefore, law $L$ is gap-free in the game $(\bar{\mathcal{A}},\bar{\Delta},\bar{\mathbb{P}})$ by Lemma~\ref{lm:gap free law} and statement~\eqref{eq:7-4-3}.

($\Leftarrow$)
Suppose $L$ is not useful in the game $(\mathcal{A},\Delta,\mathbb{P})$. Then, there is a profile $\delta\in\mathbb{P}$ such that $L\cap\mathcal{S}(\delta)=\varnothing$ by Lemma~\ref{lm:useful law}.
Thus, there is a profile $\bar{\delta}\in\bar{\mathbb{P}}_1$ such that $\bar{\delta}_{\gamma}=p$ and $L\cap\mathcal{S}(\bar{\delta})=\varnothing$ by item~\ref{dfitem:useful game to gap free game prohibition} of Definition~\ref{df:useful game to gap free game}, item~\ref{lmitem:useful game to gap-free game profile support 1} of Lemma~\ref{lm:useful game to gap-free game profile support}, and statement~\eqref{eq:7-4-4}.
Hence, by Lemma~\ref{lm:safe action in law imposed game},
\begin{equation}\label{eq:7-4-5}
    \text{$p$ is not a safe action of agent $\gamma$}
\end{equation}
in the law-imposed game $(\bar{\mathcal{A}},\bar{\Delta}^L,\bar{\mathbb{P}}^L)$.

Consider an arbitrary agent $a\in\mathcal{A}$ and an arbitrary action $d\in\Delta_a^L$.
Then, $d\notin L$ by item~\ref{dfitem:law-imposed game 1} of Definition~\ref{df:law-imposed game} and there is a profile $\bar{\delta}'\in\bar{\mathbb{P}}_2$ such that $\bar{\delta}'_a=d$ and $\mathcal{S}(\bar{\delta}')=\{d,n\}$ by item~\ref{dfitem:useful game to gap free game prohibition} of Definition~\ref{df:useful game to gap free game} and item~\ref{lmitem:useful game to gap-free game profile support 2} of Lemma~\ref{lm:useful game to gap-free game profile support}.
Thus, $\mathcal{S}(\bar{\delta}')\cap L=\varnothing$ by statement~\eqref{eq:7-4-4}.
Hence, by Lemma~\ref{lm:safe action in law imposed game}, for any agent $a\in\mathcal{A}$ and any action $d\in\Delta_a^L$,
\begin{equation}\label{eq:7-4-6}
    \text{$d$ is not a safe action of agent $a$}
\end{equation}
in the law-imposed game $(\bar{\mathcal{A}},\bar{\Delta}^L,\bar{\mathbb{P}}^L)$.

Moreover, there is a profile $\bar{\delta}''\in\bar{\mathbb{P}}_3$ such that $\bar{\delta}''_a=n$ for each agent $a\in\bar{\mathcal{A}}$ and $\mathcal{S}(\bar{\delta}'')=\{n\}$ by item~\ref{dfitem:useful game to gap free game prohibition} of Definition~\ref{df:useful game to gap free game} and item~\ref{lmitem:useful game to gap-free game profile support 3} of Lemma~\ref{lm:useful game to gap-free game profile support}.
Thus, $\mathcal{S}(\bar{\delta}'')\cap L=\varnothing$ by statement~\eqref{eq:7-4-4}.
Hence, by Lemma~\ref{lm:useful law} and that $\bar{\delta}''\in\bar{\mathbb{P}}_3\subseteq \bar{\mathbb{P}}$,
\begin{equation}\label{eq:7-4-8}
    \text{$L$ is not a useful law in the game $(\bar{\mathcal{A}},\bar{\Delta},\bar{\mathbb{P}})$}
\end{equation}
and, by Lemma~\ref{lm:safe action in law imposed game}, for any agent $a\in\bar{\mathcal{A}}$,
\begin{equation}\label{eq:7-4-7}
    \text{$n$ is not a safe action of agent $a$}
\end{equation}
in the law-imposed game $(\bar{\mathcal{A}},\bar{\Delta}^L,\bar{\mathbb{P}}^L)$.

In conclusion, by statements~\eqref{eq:7-4-2}, \eqref{eq:7-4-5}, \eqref{eq:7-4-6}, and \eqref{eq:7-4-7}, for every agent $a\in\bar{\mathcal{A}}$, every action $d\in\bar{\Delta}_a^L$ is not a safe action of agent $a$ in the law-imposed game $(\bar{\mathcal{A}},\bar{\Delta}^L,\bar{\mathbb{P}}^L)$.
Therefore, law $L$ is not gap-free in the game $(\bar{\mathcal{A}},\bar{\Delta},\bar{\mathbb{P}})$ by Lemma~\ref{lm:gap free law} and statement~\eqref{eq:7-4-8}.
\end{proof}

\subsection{Proof of Theorem~\ref{th:MinGFR hard to approximate}}\label{app_sec:proof of Theorem MinGFR hard to approximate}

\noindent\textbf{Theorem~\ref{th:MinGFR hard to approximate}. }
\textit{$\mathbf{MinGFR}$ in a game $(\mathcal{A},\Delta,\mathbb{P})$ is NP-hard to approximate within factor $|\mathcal{A}|-1-\epsilon$ for any $\epsilon\!>\!0$ when $|\mathcal{A}|\!\geq\! 3$.}
\begin{proof}
Suppose the opposite.
Then, there is an $\epsilon>0$ and an algorithm $\mathtt{Alg}$ that approximates $\mathbf{MinGFR}$ with factor $|\mathcal{A}|-1-\epsilon$ for any game $(\mathcal{A},\Delta,\mathbb{P})$ where $|\mathcal{A}|\geq 3$.

Consider an arbitrary useful law $L$ in an arbitrary game $(\mathcal{A},\Delta,\mathbb{P})$ where $|\mathcal{A}|\geq 2$. We consider the next two steps:
\begin{enumerate}
    \item compute the game $(\bar{\mathcal{A}},\bar{\Delta},\bar{\mathbb{P}})$ corresponding to the game $(\mathcal{A},\Delta,\mathbb{P})$ by Definition~\ref{df:useful game to gap free game};
    \item use the algorithm $\mathtt{Alg}$ to get an approximated gap-free reduction $L'$ of law $L$ in the game $(\bar{\mathcal{A}},\bar{\Delta},\bar{\mathbb{P}})$.
\end{enumerate}

Note that, by the assumption $|\mathcal{A}|\geq 2$ and Definition~\ref{dfitem:useful game to gap free game agent} of Definition~\ref{df:useful game to gap free game},
\begin{equation}\label{eq:7-28-1}
    |\bar{\mathcal{A}}|=|\mathcal{A}|+1\geq 3.
\end{equation}

By step~2 above, the assumption that $L$ is a useful law in the game $(\mathcal{A},\Delta,\mathbb{P})$, and Corollary~\ref{cr:useful reduction to gap-free reduction},
\begin{equation}\label{eq:7-30-2}
    \text{$L'$ is a useful reduction of $L$ in the game $(\mathcal{A},\Delta,\mathbb{P})$.}
\end{equation}

On the other hand, step~2 above implies that $|L'|$ is at most $(|\bar{\mathcal{A}}|-1-\epsilon)$ times the size of the minimum gap-free reduction of law $L$ in the game $(\bar{\mathcal{A}},\bar{\Delta},\bar{\mathbb{P}})$ by the assumption about the approximation factor of $\mathtt{Alg}$ and statement~\eqref{eq:7-28-1}.
Meanwhile, the minimum gap-free reduction of law $L$ in the game $(\bar{\mathcal{A}},\bar{\Delta},\bar{\mathbb{P}})$ is the minimum useful reduction of law $L$ in the game $(\mathcal{A},\Delta,\mathbb{P})$ by Corollary~\ref{cr:useful reduction to gap-free reduction}.
Then, by statement~\eqref{eq:7-28-1}, the former two facts imply that $|L'|$ is at most $(|\mathcal{A}|-\epsilon)$ times the size of the minimum useful reduction of $L$ in the game $(\mathcal{A},\Delta,\mathbb{P})$.
Hence, by statement~\eqref{eq:7-30-2} and the arbitrariness of the useful law $L$ and the game $(\mathcal{A},\Delta,\mathbb{P})$, the above two steps form an algorithm that approximates $\mathbf{MinUR}$ with factor $|\mathcal{A}|-\epsilon$, which contradicts Theorem~\ref{th:MinUR hard to approximate}.
\end{proof}

\subsection{Proof of Lemma~\ref{lm:safable action}}\label{app_sec:proof of Lemma safable action}

\noindent\textbf{Lemma~\ref{lm:safable action}. }
\textit{For a game $(\mathcal{A},\Delta,\mathbb{P})$, an action $d\in\bigcup\Delta$ is safable if and only if $\mathcal{S}(\delta)\neq\{d\}$ for each profile $\delta\in\mathbb{P}$.}
\begin{proof}
($\Rightarrow$)
Suppose there is a profile $\delta$ such that
\begin{equation}\label{eq:7-9-3}
    \delta\in\mathbb{P} \text{ and } \mathcal{S}(\delta)=\{d\}.
\end{equation}
Then, by statement~\eqref{eq:profile support set},
\begin{equation}\label{eq:7-9-4}
    \delta_a=d \text{ for each agent } a\in\mathcal{A}.
\end{equation}

Consider an arbitrary agent $a\in\mathcal{A}$ and an arbitrary law $L\subseteq\bigcup\Delta$. 
Then, one of the next cases must be true:

Case 1: $d\in L$. Then, $d\notin\Delta_a^L$ by item~\ref{dfitem:law-imposed game 1} of Definition~\ref{df:law-imposed game}.
Thus, $d$ is not a safe action of agent $a$ in the law-imposed game $(\mathcal{A},\Delta^L,\mathbb{P}^L)$ by Lemma~\ref{lm:safe action in law imposed game}.

Case 2: $d\notin L$. Then, $L\cap\mathcal{S}(\delta)=\varnothing$ by the right part of statement~\eqref{eq:7-9-3}.
Thus, $d$ is not a safe action of agent $a$ in the law-imposed game $(\mathcal{A},\Delta^L,\mathbb{P}^L)$ by the left part of statement~\eqref{eq:7-9-3}, statement~\eqref{eq:7-9-4}, and Lemma~\ref{lm:safe action in law imposed game}.

In conclusion, action $d$ is not safable by Definition~\ref{df:safable action} and the arbitrariness of $a$ and $L$.

($\Leftarrow$)
Suppose $\mathcal{S}(\delta)\neq\{d\}$ for each profile $\delta\in\mathbb{P}$. 
Let $L$ be the law such that 
\begin{equation}\textstyle\label{eq:7-2-1}
    L=\bigcup_{\delta\in\mathbb{P},\delta_a=d}(\mathcal{S}(\delta)\setminus\{d\}).
\end{equation}
Then,
\begin{equation}\label{eq:7-25-1}
    d\notin L.
\end{equation}
Note that, by the assumption $d\in\bigcup\Delta$ of the lemma and item~\ref{dfitem:game actions} of Definition~\ref{df:game}, there is an agent $a\in\mathcal{A}$ such that $d\in\Delta_a$.
Thus, by statement~\eqref{eq:7-25-1} and item~\ref{dfitem:law-imposed game 1} of Definition~\ref{df:law-imposed game},
\begin{equation}\label{eq:7-9-1}
    d\in\Delta_a^L.
\end{equation}

If there is no profile $\delta\in\mathbb{P}$ such that $\delta_a=d$, then it is vacuously true that $d$ is a safe action of agent $a$ in the game $(\mathcal{A},\Delta^L,\mathbb{P}^L)$ by statement~\eqref{eq:7-9-1} and Lemma~\ref{lm:safe action in law imposed game}.

Otherwise, consider an arbitrary profile $\delta'\in\mathbb{P}$ such that $\delta'_a=d$.
Then, 
\begin{equation}\label{eq:7-9-2}
     (\mathcal{S}(\delta')\setminus\{d\})\cap L=\mathcal{S}(\delta')\setminus\{d\}
\end{equation}
by statement~\eqref{eq:7-2-1}, and $\mathcal{S}(\delta')\neq\{d\}$ by the ($\Leftarrow$) part assumption.
Thus, $\mathcal{S}(\delta')\cap L\supseteq (\mathcal{S}(\delta')\setminus\{d\})\cap L\neq\varnothing$ by statement~\eqref{eq:7-9-2}.
Hence, $d$ is a safe action of agent $a$ in the law-imposed game $(\mathcal{A},\Delta^L,\mathbb{P}^L)$ by statement~\eqref{eq:7-9-1}, Lemma~\ref{lm:safe action in law imposed game}, and the arbitrariness of $\delta'$.

In conclusion, action $d$ is safable by Definition~\ref{df:safable action}.
\end{proof}

\subsection{Proof of Lemma~\ref{lm:safe action=vertex cover in reduced graph}}\label{app_sec:proof of Lemma safe action=vertex cover in reduced graph}

To illustrate Definition~\ref{df:safe action verify graph}, let us use the game in Figure~\ref{fig:Reduction_UL2GFL_game}.
Consider the action $p$ of agent $\gamma$, where $p$ is safable by Lemma~\ref{lm:safable action}.
The induced graph by Definition~\ref{df:safe action verify graph} is visualised in Figure~\ref{fig:Reduction_GFL2VC_graph}.
In particular, by item~\ref{dfitem:safe action verify graph vertex} of Definition~\ref{df:safe action verify graph}, the graph consists of ten vertices corresponding to the ten (out of eleven, except for action $p$) actions in Figure~\ref{fig:Reduction_UL2GFL_game}.
Meanwhile, there are three prohibited outcomes $\delta$ such that $\delta_\gamma=p$ in Figure~\ref{fig:Reduction_UL2GFL_game}, as labelled by the sign \Biohazard$_1$. 
Accordingly, by item~\ref{dfitem:safe action verify graph edge} of Definition~\ref{df:safe action verify graph}, there are three edges in Figure~\ref{fig:Reduction_GFL2VC_graph}, each corresponding to a \Biohazard$_1$-labelled outcome in Figure~\ref{fig:Reduction_UL2GFL_game}. For instance, edge $e_1=\{d_a^1,d_b^1,d_c^1\}$ in Figure~\ref{fig:Reduction_GFL2VC_graph} corresponds to the left-most \Biohazard$_1$-labelled outcome in Figure~\ref{fig:Reduction_UL2GFL_game}, which consists of the actions $p,d_a^1,d_b^1$, and $d_c^1$. The same is true for edge $e_2$ in  Figure~\ref{fig:Reduction_GFL2VC_graph} and the middle \Biohazard$_1$-labelled outcome in Figure~\ref{fig:Reduction_UL2GFL_game}, as well as edge $e_3$ in  Figure~\ref{fig:Reduction_GFL2VC_graph} and the right-most \Biohazard$_1$-labelled outcome in Figure~\ref{fig:Reduction_UL2GFL_game}.

\begin{figure}[htb]
    \centering
    \scalebox{0.5}{\includegraphics{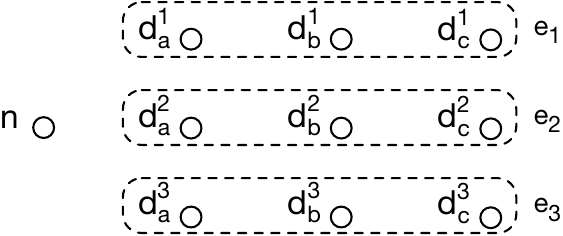}}
    \caption{An illustration of Definition~\ref{df:safe action verify graph} using the the game in Figure~\ref{fig:Reduction_UL2GFL_game}, agent $\gamma$, and action $p$.}
    \label{fig:Reduction_GFL2VC_graph}
\end{figure}

Observe that Definition~\ref{df:safe action verify graph} indeed implies a \textit{polynomial algorithm} to convert a game $(\mathcal{A},\Delta,\mathbb{P})$, an agent $a\in\mathcal{A}$, and an action $d\in\Delta_a$ into the graph $\mathcal{H}^{a,d}_{(\mathcal{A},\Delta,\mathbb{P})}$.
The next lemma formally captures the aforementioned observation about their relation and thereby bridges ``a safe action in a law-imposed game'' with a vertex cover in the induced game.

\noindent\textbf{Lemma~\ref{lm:safe action=vertex cover in reduced graph}. }
\textit{For a law $L$ in a game $(\mathcal{A},\Delta,\mathbb{P})$ and an agent $a\in\mathcal{A}$, an action $d\in\bigcup\Delta$ is a safe action of agent $a$ in the law-imposed game $(\mathcal{A},\Delta^L,\mathbb{P}^L)$ if and only if $d\in\Delta_a^L$, $d$ is safable in the game $(\mathcal{A},\Delta,\mathbb{P})$, and $L$ is a vertex cover in the graph $\mathcal{H}^{a,d}_{(\mathcal{A},\Delta,\mathbb{P})}$.}
\begin{proof}
($\Rightarrow$)
Suppose $d$ is a safe action of agent $a$ in the law-imposed game $(\mathcal{A},\Delta^L,\mathbb{P}^L)$.
Then, $d$ is safable in the game $(\mathcal{A},\Delta,\mathbb{P})$ by Definition~\ref{df:safable action} and $d\in\Delta_a^L$ by Definition~\ref{df:safe action}.
Thus, by item~\ref{dfitem:law-imposed game 1} of Definition~\ref{df:law-imposed game},
\begin{equation}\label{eq:7-2-5}
    d\in\Delta_a \text{ and } d\notin L.
\end{equation}

Toward proving that $L$ is a vertex cover in the graph $\mathcal{H}^{a,d}_{(\mathcal{A},\Delta,\mathbb{P})}$,
consider an arbitrary profile $\delta\in\mathbb{P}$ such that $\delta_a=d$.
Then, $L\cap\mathcal{S}(\delta)\neq\varnothing$ by Lemma~\ref{lm:safe action in law imposed game} and the ($\Rightarrow$) part assumption.
Thus, $L\cap(\mathcal{S}(\delta)\setminus\{d\})\neq\varnothing$ by the right part of statement~\eqref{eq:7-2-5}.
Hence, by item~\ref{dfitem:safe action verify graph edge} of Definition~\ref{df:safe action verify graph} and the arbitrariness of $\delta$,
\begin{equation}\label{eq:7-9-5}
    L\cap e\neq\varnothing \text{ for each edge $e\in\mathcal{E}^{a,d}_{(\mathcal{A},\Delta,\mathbb{P})}$.}
\end{equation}
Meanwhile, $L\subseteq\bigcup\Delta$ by Definition~\ref{df:law in game} and the assumption that $L$ is a law of the game $(\mathcal{A},\Delta,\mathbb{P})$ of the lemma.
Then, $L\subseteq\mathcal{V}^{a,d}_{(\mathcal{A},\Delta,\mathbb{P})}$ by the right part of statement~\eqref{eq:7-2-5} and item~\ref{dfitem:safe action verify graph vertex} of Definition~\ref{df:safe action verify graph}.
Hence, $L$ is a vertex cover of the graph $\mathcal{H}^{a,d}_{(\mathcal{A},\Delta,\mathbb{P})}$ by Definition~\ref{df:safe action verify graph}, statement~\eqref{eq:7-9-5}, and item~\ref{dfitem:vertex cover cover} of Definition~\ref{df:vertex cover}.

$(\Leftarrow)$
Suppose $d\in\Delta_a^L$ and $L$ is a vertex cover in the graph $\mathcal{H}^{a,d}_{(\mathcal{A},\Delta,\mathbb{P})}$.
Then, $L\cap(\mathcal{S}(\delta)\setminus\{d\})\neq\varnothing$ for each profile $\delta\in\mathbb{P}$ such that $\delta_a=d$ by item~\ref{dfitem:safe action verify graph edge} of Definition~\ref{df:safe action verify graph} and item~\ref{dfitem:vertex cover cover} of Definition~\ref{df:vertex cover}.
Thus, $L\cap\mathcal{S}(\delta)\neq\varnothing$ for each profile $\delta\in\mathbb{P}$ such that $\delta_a=d$.
Hence, $d$ is a safe action of agent $a$ in the law-imposed game $(\mathcal{A},\Delta^L,\mathbb{P}^L)$ by Lemma~\ref{lm:safe action in law imposed game} and the $(\Leftarrow)$ part assumption $d\in\Delta_a^L$.
\end{proof}

By the above lemma, for a law $L$ in the game in Figure~\ref{fig:Reduction_UL2GFL_game}, action $p$ is a safe action of agent $\lambda$ under law $L$ if and only if $L$ is a vertex cover in the graph in Figure~\ref{fig:Reduction_GFL2VC_graph}.
This property, however, is easily observable through another chain of equivalence.
Observe that a vertex cover in the graph in Figure~\ref{fig:Reduction_GFL2VC_graph} is exactly a vertex cover in the graph in Figure~\ref{fig:Reduction_UL2VC_originalgraph}, the latter of which, by Theorem~\ref{th:useful=vertex cover}, is a useful law in the game in Figure~\ref{fig:ND intro game matrix}.
Such a useful law, by Theorem~\ref{th:reduce useful to gap-free} and its proof, is gap-free in the game in Figure~\ref{fig:Reduction_UL2GFL_game} by rendering action $p$ a safe action of agent $\gamma$ in the law-imposed game. 
This observation provides insight into the construction presented in Definition~\ref{df:useful game to gap free game}.

\subsection{Proof of Theorem~\ref{th:reduce minimal gap-free to multiple minimal vertex cover}}\label{app_sec:proof Theorem reduce minimal gap-free to multiple minimal vertex cover}

The next lemma will be used in the later proofs. It follows directly from item~\ref{dfitem:vertex cover cover} of Definition~\ref{df:vertex cover}.
 
\begin{lemma}\label{lm:vertex cover monotonicity}
For a vertex cover $C$ in a graph $(V,E)$, every set $C'$ such that $C\subseteq C'\subseteq V$ is a vertex cover of the graph. 
\end{lemma}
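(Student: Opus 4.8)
The plan is to unwind Definition~\ref{df:vertex cover} directly and exploit the monotonicity of intersection under set inclusion. First I would fix a set $C'$ with $C\subseteq C'\subseteq V$ and recall that, by item~\ref{dfitem:vertex cover cover} of Definition~\ref{df:vertex cover}, establishing that $C'$ is a vertex cover of $(V,E)$ requires exactly two things: that $C'\subseteq V$, and that $C'\cap e\neq\varnothing$ for every edge $e\in E$.

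The first requirement holds by hypothesis, since $C'\subseteq V$ is assumed. For the second, I would take an arbitrary edge $e\in E$. Because $C$ is a vertex cover of $(V,E)$, item~\ref{dfitem:vertex cover cover} of Definition~\ref{df:vertex cover} gives $C\cap e\neq\varnothing$. Since $C\subseteq C'$, we have $C\cap e\subseteq C'\cap e$, so $C'\cap e\neq\varnothing$ as well. By the arbitrariness of $e$, this holds for every edge in $E$, and combining with $C'\subseteq V$ we conclude, again by item~\ref{dfitem:vertex cover cover} of Definition~\ref{df:vertex cover}, that $C'$ is a vertex cover of $(V,E)$.

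There is no real obstacle here: the statement is just the observation that the covering condition is preserved when one enlarges a cover within the vertex set, and the entire argument is a one-line consequence of the definition together with $A\subseteq B\implies A\cap e\subseteq B\cap e$. The only thing to be careful about is keeping the ambient constraint $C'\subseteq V$ explicit, which is why it appears in the hypothesis $C\subseteq C'\subseteq V$ rather than merely $C\subseteq C'$.
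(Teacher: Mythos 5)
Your proof is correct and matches the paper's treatment: the paper states only that the lemma ``follows directly from item~1 of Definition~10,'' and your argument is precisely that unwinding — $C'\subseteq V$ by hypothesis, and $C\cap e\subseteq C'\cap e$ gives nonempty intersection with every edge. Nothing further is needed.
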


To resolve the complexity in the proof of Theorem~\ref{th:reduce minimal gap-free to multiple minimal vertex cover}, we establish Lemma~\ref{lm:minimal gap free} below. 
It shows that, to verify whether a gap-free law $L$ is minimal, it suffices to check, for each action $d\in L$, whether removing $d$ from law $L$ preserves gap-freeness.

\begin{lemma}\label{lm:minimal gap free}
A gap-free law $L$ in a game $(\mathcal{A},\Delta,\mathbb{P})$ is minimal if and only if $L\setminus\{d\}$ is not gap-free for each action $d\in L$.
\end{lemma}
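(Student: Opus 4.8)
The plan is to prove both directions directly from Definition~\ref{df:minimal gap free}, with the ($\Leftarrow$) direction being the substantive one. The forward direction is immediate: if $L$ is minimal-gap-free, then by Definition~\ref{df:minimal gap free} no law $L'\subsetneq L$ is gap-free; in particular, for each action $d\in L$, the set $L\setminus\{d\}$ is a strict subset of $L$, hence it is not gap-free. So the only real work is the converse.

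For the ($\Leftarrow$) direction, I would argue by contraposition: assume $L$ is gap-free but \emph{not} minimal, and exhibit some $d\in L$ with $L\setminus\{d\}$ gap-free. Since $L$ is not minimal-gap-free, there is a gap-free law $L'\subsetneq L$. Pick any action $d\in L\setminus L'$ (which is nonempty because $L'\subsetneq L$). Then $L'\subseteq L\setminus\{d\}\subsetneq L$. The goal is to conclude that $L\setminus\{d\}$ is gap-free, knowing that its subset $L'$ is gap-free. This is a \emph{monotonicity of gap-freeness under adding banned actions} claim: enlarging a gap-free law (while keeping it a law, i.e.\ a subset of $\bigcup\Delta$) preserves gap-freeness.

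The main obstacle is establishing that monotonicity. I would prove it via Lemma~\ref{lm:gap free law}: $L'$ gap-free means either $L'$ is useful, or some agent $a$ has a safe action in the law-imposed game $(\mathcal{A},\Delta^{L'},\mathbb{P}^{L'})$. In the first case, by Lemma~\ref{lm:useful law}, $L'\cap\mathcal{S}(\delta)\neq\varnothing$ for every $\delta\in\mathbb{P}$; since $L'\subseteq L\setminus\{d\}$, also $(L\setminus\{d\})\cap\mathcal{S}(\delta)\neq\varnothing$, so $L\setminus\{d\}$ is useful, hence gap-free. In the second case, Lemma~\ref{lm:safe action in law imposed game} says there is an agent $a$ and an action $e$ with $e\in\Delta_a^{L'}$ and $L'\cap\mathcal{S}(\delta)\neq\varnothing$ for every $\delta\in\mathbb{P}$ with $\delta_a=e$. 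I need the same to hold with $L\setminus\{d\}$ in place of $L'$. The hitting condition is monotone as before. The delicate point is $e\in\Delta_a^{L\setminus\{d\}}$, i.e.\ $e\notin L\setminus\{d\}$; this could fail if $e\in L$. To handle this I would choose $e$ more carefully: note that $e$ being a safe action of $a$ under $L'$ forces $e\notin L'$, but nothing stops $e\in L$. The cleanest fix is to show that, under the hypothesis ``$L\setminus\{d\}$ not gap-free for all $d\in L$'' combined with ``$L$ gap-free'', we already get a contradiction — actually I suspect the intended argument just invokes the monotonicity in the form: \emph{any law $L''$ with $L'\subseteq L''\subseteq\bigcup\Delta$ is gap-free whenever $L'$ is}, which is exactly analogous to Lemma~\ref{lm:vertex cover monotonicity} for vertex covers. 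Proving this monotonicity lemma for gap-freeness (call it the key sublemma) and then applying it with $L''=L\setminus\{d\}$ finishes the proof; the safe-action case of that sublemma is where one must be careful, picking the witness action $e$ to be the \emph{same} safe action and checking $e\notin L''$ using that a safe action under $L'$ need not lie in $L$ — indeed if it did, one could instead use usefulness-style reasoning, or observe that safability (Lemma~\ref{lm:safable action}) is a property of the game alone, so a fresh safe witness exists in the larger law-imposed game. I would spell out that case carefully, as it is the one genuine subtlety.
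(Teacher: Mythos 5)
Your forward direction and the overall contrapositive setup for the converse are fine, and you correctly located the one genuine subtlety: when the gap-freeness of the smaller law $L'$ is witnessed by a safe action $e$ of some agent $a$, that action need not remain lawful under the larger set $L\setminus\{d\}$, because nothing prevents $e\in L$. However, none of your proposed resolutions actually closes this gap. The ``key sublemma'' you want to invoke --- that any law $L''$ with $L'\subseteq L''\subseteq\bigcup\Delta$ is gap-free whenever $L'$ is --- is \emph{false}. Counterexample: take $\mathcal{A}=\{a,b\}$, $\Delta_a=\{e,x\}$, $\Delta_b=\{y\}$, and $\mathbb{P}=\{\delta\}$ with $\delta_a=x$, $\delta_b=y$. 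Then $L'=\varnothing$ is gap-free ($e$ is a safe action of $a$ by Definition~\ref{df:safe action} and Lemma~\ref{lm:gap free law}), but $L''=\{e\}$ is not: it is not useful, and in the law-imposed game agent $a$ is left with only $x$ and agent $b$ with only $y$, so no safe action exists. Your two fallbacks also fail: ``usefulness-style reasoning'' has no traction in the safe-action case, and safability (Lemma~\ref{lm:safable action}) being a game-level property does not produce a safe action \emph{in the particular law-imposed game} $(\mathcal{A},\Delta^{L\setminus\{d\}},\mathbb{P}^{L\setminus\{d\}})$ --- that is exactly what the counterexample above refutes.

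The missing idea is that you are not forced to delete an arbitrary element of $L\setminus L'$; you get to \emph{choose which single action to delete}, and the right choice depends on the witness. This is how the paper's proof proceeds. If $L'$ is useful, or if its safe action $d_1$ satisfies $d_1\notin L$, then deleting any $d_0\in L$ with $L'\subseteq L\setminus\{d_0\}$ works, since the hitting condition of Lemma~\ref{lm:safe action in law imposed game} is monotone and $d_1$ stays lawful. But if $d_1\in L$, you should delete $d_1$ \emph{itself}: a safe action is necessarily lawful under $L'$, so $d_1\notin L'$ and hence $L'\subseteq L\setminus\{d_1\}$; moreover $d_1\in\Delta_a^{L\setminus\{d_1\}}$ by construction, and $(L\setminus\{d_1\})\cap\mathcal{S}(\delta)\supseteq L'\cap\mathcal{S}(\delta)\neq\varnothing$ for every $\delta\in\mathbb{P}$ with $\delta_a=d_1$, so $d_1$ is a safe action under $L\setminus\{d_1\}$ and $L\setminus\{d_1\}$ is gap-free by Lemma~\ref{lm:gap free law}. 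Without this case split on whether $d_1\in L$ (and the corresponding switch of which element to remove), the converse direction does not go through.
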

\begin{proof}
($\Rightarrow$)
This part of the statement follows from Definition~\ref{df:minimal gap free} because $L\setminus\{d\}\subsetneq L$ for each action $d\in L$.

($\Leftarrow$)
Suppose a gap-free law $L$ is not minimal.
Then, 
\begin{equation}\label{eq:7-8-3}
    \text{there is a gap-free law $L'\subsetneq L$} 
\end{equation}
by Definition~\ref{df:minimal gap free}. Thus,
\begin{equation}\label{eq:7-8-1}
    \text{there is an action $d_0\in L$ such that $L'\subseteq L\setminus\{d_0\}$}
\end{equation}
and by Lemma~\ref{lm:gap free law}, one of the next cases must be true:

Case 1: $L'$ is useful. Then, $L'\cap\mathcal{S}(\delta)\neq\varnothing$ for each profile $\delta\in\mathbb{P}$ by Lemma~\ref{lm:useful law}. 
Thus, $(L\setminus\{d_0\})\cap\mathcal{S}(\delta)\neq\varnothing$ for each profile $\delta\in\mathbb{P}$ by statement~\eqref{eq:7-8-1}.
Then, $L\setminus\{d_0\}$ is a useful law in the game $(\mathcal{A},\Delta,\mathbb{P})$ by Lemma~\ref{lm:useful law}. 
Hence, $L\setminus\{d_0\}$ is a gap-free law in the game $(\mathcal{A},\Delta,\mathbb{P})$ by Lemma~\ref{lm:gap free law}, where $d_0\in L$ by statement~\eqref{eq:7-8-1}.

Case 2: there is an agent $a\in\mathcal{A}$ and a safe action $d_1$ of agent $a$ in the law-imposed game $(\mathcal{A},\Delta^{L'},\mathbb{P}^{L'})$.
Then, 
\begin{equation}\label{eq:7-8-2}
    d_1\in\Delta_a^{L'} \text{ and } L'\cap\mathcal{S}(\delta)\neq\varnothing
\end{equation}
for each profile $\delta\in\mathbb{P}$ such that $\delta_a=d_1$ by Lemma~\ref{lm:safe action in law imposed game}.
Thus,
\begin{equation}\label{eq:7-8-5}
    d_1\in\Delta_a \text{ and } d_1\notin L'
\end{equation}
by  item~\ref{dfitem:law-imposed game 1} of Definition~\ref{df:law-imposed game}.

Subcase 2.1: $d_1\in L$. 
Then,
\begin{equation}\label{eq:7-8-4}
    d_1\in\Delta_a\setminus(L\setminus\{d_1\})=\Delta_a^{L\setminus\{d_1\}}
\end{equation}
and $L'\subseteq L\setminus\{d_1\}$ by statements~\eqref{eq:7-8-5}, \eqref{eq:7-8-3}, and item~\ref{dfitem:law-imposed game 1} of Definition~\ref{df:law-imposed game}.
Thus, $(L\setminus\{d_1\})\cap\mathcal{S}(\delta)\neq\varnothing$ for each profile $\delta\in\mathbb{P}$ such that $\delta_a=d_1$ by the right part of statement~\eqref{eq:7-8-2}.
Then, $d_1$ is a safe action of agent $a$ in the law-imposed game $(\mathcal{A},\Delta^{L\setminus\{d_1\}},\mathbb{P}^{L\setminus\{d_1\}})$ by Lemma~\ref{lm:safe action in law imposed game} and statement~\eqref{eq:7-8-4}.
Hence, law $L\setminus\{d_1\}$ is a gap-free by Lemma~\ref{lm:gap free law}, where $d_1\in L$ by the subcase assumption.

Subcase 2.2: $d_1\notin L$.
Then, $d_1\notin L\setminus\{d_0\}$.
Thus, by the left part of statement~\eqref{eq:7-8-5} and item~\ref{dfitem:law-imposed game 1} of Definition~\ref{df:law-imposed game},
\begin{equation}\label{eq:7-8-6}
    d_1\in\Delta_a\setminus(L\setminus\{d_0\})=\Delta_a^{L\setminus\{d_0\}}.
\end{equation}
Note that $(L\setminus\{d_0\})\cap\mathcal{S}(\delta)\neq\varnothing$ for each profile $\delta\in\mathbb{P}$ such that $\delta_a=d_1$ by statements~\eqref{eq:7-8-2} and \eqref{eq:7-8-1}.
Then, $d_1$ is a safe action of agent $a$ in the law-imposed game $(\mathcal{A},\Delta^{L\setminus\{d_0\}},\mathbb{P}^{L\setminus\{d_0\}})$ by Lemma~\ref{lm:safe action in law imposed game} and statement~\eqref{eq:7-8-6}.
Hence, law $L\setminus\{d_0\}$ is a gap-free by Lemma~\ref{lm:gap free law}, where $d_0\in L$ by statement~\eqref{eq:7-8-1}.
\end{proof}

\noindent\textbf{Theorem~\ref{th:reduce minimal gap-free to multiple minimal vertex cover}. }
\textit{A gap-free law $L$ in a game $(\mathcal{A},\Delta,\mathbb{P})$ is minimal if and only if all of the following statements are true:
\begin{enumerate}
    \item if $L$ is a vertex cover of the graph $(\bigcup\Delta,\mathcal{S}(\mathbb{P}))$, then $L$ is a minimal vertex cover in this graph;\label{ppitem:reduce minimal gap-free to multiple minimal vertex cover 1}
    \item for each agent $a\in\mathcal{A}$ and each safable action $d\in\Delta_a^L$, if $L$ is a vertex cover in graph $\mathcal{H}^{a,d}_{(\mathcal{A},\Delta,\mathbb{P})}$, then $L$ is a minimal vertex cover in this graph;\label{ppitem:reduce minimal gap-free to multiple minimal vertex cover 2}
    \item for each agent $a\in\mathcal{A}$ and each safable action $d\!\in\!\Delta_a\!\cap\! L$, the set $L\!\setminus\!\{d\}$ is not a vertex cover in graph $\mathcal{H}^{a,d}_{(\mathcal{A},\Delta,\mathbb{P})}$.\label{ppitem:reduce minimal gap-free to multiple minimal vertex cover 3}
\end{enumerate}}
\begin{proof}
($\Rightarrow$)
If item~\ref{ppitem:reduce minimal gap-free to multiple minimal vertex cover 1} is false,
then, by item~\ref{dfitem:vertex cover minimal} of Definition~\ref{df:vertex cover}, there is a set $L'$ such that 
\begin{equation}\label{eq:7-11-1}
    L'\subsetneq L
\end{equation}
being a vertex cover of the graph $(\bigcup\Delta,\mathcal{S}(\mathbb{P}))$.
Thus, $L'$ is a gap-free law in the game $(\mathcal{A},\Delta,\mathbb{P})$ by Theorem~\ref{th:reduce gap-free to multiple vertex cover}.
Hence, $L$ is not minimal-gap-free by Definition~\ref{df:minimal gap free} and statement~\eqref{eq:7-11-1}.

If item~\ref{ppitem:reduce minimal gap-free to multiple minimal vertex cover 2} is false,
then, by item~\ref{dfitem:vertex cover minimal} of Definition~\ref{df:vertex cover}, there is an agent $a$, a safable action $d\in\Delta_a^L$, and a vertex cover $L'$ in the graph $\mathcal{H}^{a,d}_{(\mathcal{A},\Delta,\mathbb{P})}$ such that
\begin{equation}\label{eq:7-11-2}
    L'\subsetneq L.
\end{equation}
Thus, $d\in\Delta_a^L=\Delta_a\setminus L\subseteq\Delta_a\setminus L'=\Delta_a^{L'}$ by item~\ref{dfitem:law-imposed game 1} of Definition~\ref{df:law-imposed game}.
Then, action $d$ is a safe action of agent $a$ in the law-imposed game $(\mathcal{A},\Delta^{L'},\mathbb{P}^{L'})$ by Lemma~\ref{lm:safe action=vertex cover in reduced graph}.
Hence, law $L'$ is gap-free in the game $(\mathcal{A},\Delta,\mathbb{P})$ by Lemma~\ref{lm:gap free law}.
Then, $L$ is not minimal-gap-free by statement~\eqref{eq:7-11-2} and Definition~\ref{df:minimal gap free}.

If item~\ref{ppitem:reduce minimal gap-free to multiple minimal vertex cover 3} is false,
then there is an agent $a$ and a safable action $d\in\Delta_a\cap L$ such that
\begin{equation}\label{eq:7-11-3}
    \text{$L\setminus\{d\}$ is a vertex cover in the graph $\mathcal{H}^{a,d}_{(\mathcal{A},\Delta,\mathbb{P})}$.}
\end{equation}
Thus, $d\in\Delta_a$ and $d\in L$.
Then,
\begin{equation}\label{eq:7-11-4}
    L\setminus\{d\}\subsetneq L
\end{equation}
and $d\in\Delta_a\setminus(L\setminus\{d\})=\Delta_a^{L\setminus\{d\}}$ by item~\ref{dfitem:law-imposed game 1} of Definition~\ref{df:law-imposed game}.
Hence, action $d$ is a safe action of agent $a$ in the law-imposed game $(\mathcal{A},\Delta^{L\setminus\{d\}},\mathbb{P}^{L\setminus\{d\}})$ by Lemma~\ref{lm:safe action=vertex cover in reduced graph}, statement~\eqref{eq:7-11-3}, and that $d$ is safable.
Then, law $L\setminus\{d\}$ is gap-free in the game $(\mathcal{A},\Delta,\mathbb{P})$ by Lemma~\ref{lm:gap free law}.
Therefore, $L$ is not minimal-gap-free by Definition~\ref{df:minimal gap free} and statement~\eqref{eq:7-11-4}.

($\Leftarrow$)
Suppose gap-free law $L$ is not minimal.
Then, by Lemma~\ref{lm:minimal gap free}, there is an action $d$ such that
\begin{equation}\label{eq:7-11-6}
    d \in L
\end{equation}
and $L\setminus\{d\}$ is gap-free.
Thus,
\begin{equation}\textstyle\label{eq:7-11-5}
    L\setminus\{d\}\subsetneq L\subseteq\bigcup\Delta
\end{equation}
and, by Theorem~\ref{th:reduce gap-free to multiple vertex cover}, one of the next cases must be true:

Case 1: $L\setminus\{d\}$ is a vertex cover of the graph $(\bigcup\Delta,\mathcal{S}(\mathbb{P}))$.
Then, $L$ is a vertex cover of the graph $(\bigcup\Delta,\mathcal{S}(\mathbb{P}))$ by Lemma~\ref{lm:vertex cover monotonicity} and statement~\eqref{eq:7-11-5}.
Meanwhile, $L$ is not a minimal vertex cover of the graph $(\bigcup\Delta,\mathcal{S}(\mathbb{P}))$ by the case assumption, statement~\eqref{eq:7-11-5}, and item~\ref{dfitem:vertex cover minimal} of Definition~\ref{df:vertex cover}.
Therefore, item~\ref{ppitem:reduce minimal gap-free to multiple minimal vertex cover 1} of the theorem is false.

Case 2: there is an agent $a\in\mathcal{A}$ and a safable action $d'\in\Delta_a^{L\setminus\{d\}}$ such that
\begin{equation}\label{eq:7-11-7}
    \text{$L\setminus\{d\}$ is a vertex cover of the graph $\mathcal{H}^{a,d'}_{(\mathcal{A},\Delta,\mathbb{P})}$.}
\end{equation}
Note that $\Delta_a^{L}\subseteq\Delta_a^{L\setminus\{d\}}$ by by item~\ref{dfitem:law-imposed game 1} of Definition~\ref{df:law-imposed game}.
Then, by the case assumption $d'\in\Delta_a^{L\setminus\{d\}}$, one of the next subcases must be true:

Subcase 2.1: $d'\in\Delta_a^{L}$.
Then, $d'\notin L$ by item~\ref{dfitem:law-imposed game 1} of Definition~\ref{df:law-imposed game}. 
Thus, $L\subseteq\bigcup\Delta\setminus\{d'\}=\mathcal{V}^{a,d'}_{(\mathcal{A},\Delta,\mathbb{P})}$ by statement~\eqref{eq:7-11-5} and item~\ref{dfitem:safe action verify graph vertex} of Definition~\ref{df:safe action verify graph}.
Hence, $L$ is a vertex cover of the graph $\mathcal{H}^{a,d'}_{(\mathcal{A},\Delta,\mathbb{P})}$ by statement~\eqref{eq:7-11-7} and Lemma~\ref{lm:vertex cover monotonicity}.
Meanwhile, $L$ is not a minimal vertex cover of the graph $\mathcal{H}^{a,d'}_{(\mathcal{A},\Delta,\mathbb{P})}$ by statements~\eqref{eq:7-11-7}, \eqref{eq:7-11-5}, and item~\ref{dfitem:vertex cover minimal} of Definition~\ref{df:vertex cover}.
Therefore, item~\ref{ppitem:reduce minimal gap-free to multiple minimal vertex cover 2} of the theorem is false by the subcase assumption $d'\in\Delta_a^L$ and the case assumption $d'$ is safable.

Subcase 2.2: $d'\in\Delta_a^{L\setminus\{d\}}\setminus\Delta_a^{L}$.
It is worth noting that $\Delta_a^{L\setminus\{d\}}\setminus\Delta_a^L=\Delta_a\cap\{d\}$ by item~\ref{dfitem:law-imposed game 1} of Definition~\ref{df:law-imposed game}. Then, $d'=d\in\Delta_a\cap L$ by the subcase assumption and statement~\eqref{eq:7-11-6}.
Thus, item~\ref{ppitem:reduce minimal gap-free to multiple minimal vertex cover 3} of the theorem is false by statement~\eqref{eq:7-11-7} and the case assumption $d'$ is safable.
\end{proof}

\subsection{Proof of Theorem~\ref{th:reduce gap-free reduction to vertex cover in reduced graph}}\label{app_sec:proof of Theorem reduce gap-free reduction to vertex cover in reduced graph}

To improve the readability, we first show several lemmas that will support the proof of Theorem~\ref{th:reduce gap-free reduction to vertex cover in reduced graph}.

\begin{lemma}\label{lm:reduced vertex cover in reduced graph}
For a graph $(V,E)$ and two sets $C', C\subseteq V$, the next two statements are equivalent:
\begin{enumerate}
    \item $C'\subseteq C$ and set $C'$ is a vertex cover of the graph $(V,E)$.
    \item $C$ is a vertex cover of the graph $(V,E)$ and $C'$ is a vertex cover of the subgraph $(C,E^C)$.
\end{enumerate}
\end{lemma}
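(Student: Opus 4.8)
The plan is to prove the two implications of the equivalence separately, in each case unfolding Definition~\ref{df:vertex cover} (vertex cover) and Definition~\ref{df:vertex set reduced graph} (the induced subgraph $(C,E^C)$ with $E^C=\{C\cap e\mid e\in E\}$), and exploiting the elementary set identity $C'\cap(C\cap e)=C'\cap e$ whenever $C'\subseteq C$. A point to keep in mind throughout is that $(C,E^C)$ is only defined once $C$ is known to be a vertex cover of $(V,E)$ (this is exactly what guarantees $\varnothing\neq C\cap e$ and $|C\cap e|\leq k$); so in the forward direction the first task is to establish precisely that, whereas in the backward direction it is supplied by hypothesis.

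For the direction $(1)\Rightarrow(2)$, I would assume $C'\subseteq C$ and that $C'$ is a vertex cover of $(V,E)$. First I would show $C$ is a vertex cover of $(V,E)$: for every edge $e\in E$ we have $\varnothing\neq C'\cap e\subseteq C\cap e$, and $C\subseteq V$ by hypothesis, so Definition~\ref{df:vertex cover} is met and the subgraph $(C,E^C)$ is well-defined. Then I would verify that $C'$ is a vertex cover of $(C,E^C)$: the containment $C'\subseteq C$ is given, and for an arbitrary edge $f\in E^C$ there is, by Definition~\ref{df:vertex set reduced graph}, an edge $e\in E$ with $f=C\cap e$, whence $C'\cap f=C'\cap C\cap e=C'\cap e\neq\varnothing$, using $C'\subseteq C$ and the fact that $C'$ covers $(V,E)$. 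Hence item~(2) holds.

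For the direction $(2)\Rightarrow(1)$, I would assume $C$ is a vertex cover of $(V,E)$ and $C'$ is a vertex cover of $(C,E^C)$. From the latter, item~\ref{dfitem:vertex cover cover} of Definition~\ref{df:vertex cover} gives $C'\subseteq C$, and therefore $C'\subseteq C\subseteq V$. For an arbitrary edge $e\in E$, the set $C\cap e$ is nonempty (as $C$ covers $(V,E)$) and hence is an edge of $(C,E^C)$ by Definition~\ref{df:vertex set reduced graph}, so $C'\cap(C\cap e)\neq\varnothing$; using $C'\subseteq C$ this intersection equals $C'\cap e$, which is therefore nonempty. Thus $C'$ is a vertex cover of $(V,E)$, and together with $C'\subseteq C$ this is item~(1).

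I do not expect any genuine obstacle here: the argument is pure bookkeeping with intersections, and the only thing that requires care — rather than difficulty — is to invoke well-definedness of $(C,E^C)$ only after $C$ has been shown (or assumed) to cover $(V,E)$, and to apply the identity $C'\cap C=C'$ at the right moments.
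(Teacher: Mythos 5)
Your proof is correct and follows essentially the same route as the paper's: both directions unfold Definitions~\ref{df:vertex cover} and \ref{df:vertex set reduced graph} and hinge on the identity $C'\cap(C\cap e)=C'\cap e$ for $C'\subseteq C$. The only cosmetic difference is that the paper delegates the step "$C'$ a cover and $C'\subseteq C\subseteq V$ implies $C$ a cover" to its Lemma~\ref{lm:vertex cover monotonicity}, which you prove inline.
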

\begin{proof}
($1\Rightarrow 2$)
Note that set $C$ is a vertex cover of the graph $(V,E)$ by the assumption $C\subseteq V$ of the lemma, item~1 above, and Lemma~\ref{lm:vertex cover monotonicity}.

Moreover, $C'\cap e\neq\varnothing$ for each edge $e\in E$ by item~\ref{dfitem:vertex cover cover} of Definition~\ref{df:vertex cover} and item~1 above.
Then, $C'\cap(C\cap e)=C'\cap C\cap e=C'\cap e\neq\varnothing$ for each egde $e\in E$ by the assumption $C'\subseteq C$ in item~1 above.
Hence, set $C'$ is a vertex cover of the graph $(C,E^C)$ by Definition~\ref{df:vertex set reduced graph}, item~\ref{dfitem:vertex cover cover} of Definition~\ref{df:vertex cover}, and the assumption $C'\subseteq C$ in item~1 above.

($1\Leftarrow 2$)
Note that, by item~2 above, item~\ref{dfitem:vertex cover cover} of Definition~\ref{df:vertex cover}, and Definition~\ref{df:vertex set reduced graph},
\begin{equation}\label{eq:7-11-8}
    C'\subseteq C\subseteq V
\end{equation}
and $C'\cap(C\cap e)\neq\varnothing$ for each egde $e\in E$.
Thus, $C'\cap e\supseteq C'\cap C\cap e\neq\varnothing$ for each egde $e\in E$.
Hence, $C'$ is a vertex cover of the graph $(V,E)$ by item~\ref{dfitem:vertex cover cover} and Definition~\ref{df:vertex cover} and statement~\eqref{eq:7-11-8}.
\end{proof}

\begin{lemma}\label{lm:reduce gap-free reduction to vertex cover in reduced graph 1}
For any two laws $L', L$ in a game $(\mathcal{A},\Delta,\mathbb{P})$, the next two statements are equivalent:
\begin{enumerate}
    \item $L'\subseteq L$ and law $L'$ is useful in the game $(\mathcal{A},\Delta,\mathbb{P})$.
    \item $L$ is a vertex cover in the graph $(\bigcup\Delta,\mathcal{S}(\mathbb{P}))$ and $L'$ is a vertex cover in the subgraph $(L,\mathcal{S}(\mathbb{P})^L)$.
\end{enumerate}
\end{lemma}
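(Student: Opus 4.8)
The plan is to obtain this lemma by composing two results already in hand: Theorem~\ref{th:useful=vertex cover}, which identifies useful laws with vertex covers of the graph $(\bigcup\Delta,\mathcal{S}(\mathbb{P}))$, and Lemma~\ref{lm:reduced vertex cover in reduced graph}, the purely graph-theoretic fact that a subset of a vertex cover is again a vertex cover exactly when it covers the induced subgraph. So essentially no new combinatorics is needed; everything is bookkeeping about which hypergraph we are working in.

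First I would record the setup. Since $L'$ and $L$ are laws in $(\mathcal{A},\Delta,\mathbb{P})$, Definition~\ref{df:law in game} gives $L'\subseteq\bigcup\Delta$ and $L\subseteq\bigcup\Delta$, and $(\bigcup\Delta,\mathcal{S}(\mathbb{P}))$ is an $|\mathcal{A}|$-graph by Lemma~\ref{lm:game to graph}. This licenses an application of Lemma~\ref{lm:reduced vertex cover in reduced graph} with $V=\bigcup\Delta$, $E=\mathcal{S}(\mathbb{P})$, $C=L$, and $C'=L'$; here the induced edge set $E^{C}$ of Definition~\ref{df:vertex set reduced graph} is precisely $\mathcal{S}(\mathbb{P})^{L}$ by statement~\eqref{eq:game prohibition to graph edges}. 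Lemma~\ref{lm:reduced vertex cover in reduced graph} then tells me that ``$L'\subseteq L$ and $L'$ is a vertex cover of $(\bigcup\Delta,\mathcal{S}(\mathbb{P}))$'' is equivalent to ``$L$ is a vertex cover of $(\bigcup\Delta,\mathcal{S}(\mathbb{P}))$ and $L'$ is a vertex cover of the subgraph $(L,\mathcal{S}(\mathbb{P})^{L})$'', which is exactly item~2 of the present lemma.

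It then remains only to rewrite ``$L'$ is a vertex cover of $(\bigcup\Delta,\mathcal{S}(\mathbb{P}))$'' as ``$L'$ is useful in $(\mathcal{A},\Delta,\mathbb{P})$'' in the left-hand side of that equivalence, which is immediate from Theorem~\ref{th:useful=vertex cover} applied to $L'$. Chaining the two equivalences yields that item~1 and item~2 of the lemma are equivalent, finishing the proof.

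I do not anticipate any genuine obstacle: the mathematical content is carried entirely by Theorem~\ref{th:useful=vertex cover} and Lemma~\ref{lm:reduced vertex cover in reduced graph}. The only points needing a line of care are verifying the hypotheses of Lemma~\ref{lm:reduced vertex cover in reduced graph} (that $L$ and $L'$ both lie inside the vertex set $\bigcup\Delta$, and that $(\bigcup\Delta,\mathcal{S}(\mathbb{P}))$ is a legitimate hypergraph) and confirming that the edge set $E^{C}$ produced by Definition~\ref{df:vertex set reduced graph} coincides with the set $\mathcal{S}(\mathbb{P})^{L}$ named in the statement.
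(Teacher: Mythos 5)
Your proposal is correct and matches the paper's proof exactly: the paper likewise derives this lemma as an immediate consequence of Definition~\ref{df:law in game}, Theorem~\ref{th:useful=vertex cover}, and Lemma~\ref{lm:reduced vertex cover in reduced graph}. Your writeup just spells out the instantiation ($V=\bigcup\Delta$, $E=\mathcal{S}(\mathbb{P})$, $C=L$, $C'=L'$) in more detail than the paper's one-line proof.
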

\begin{proof}
The statement of the lemma follows from Definition~\ref{df:law in game}, Theorem~\ref{th:useful=vertex cover}, and Lemma~\ref{lm:reduced vertex cover in reduced graph}.
\end{proof}

\begin{lemma}\label{lm:reduce gap-free reduction to vertex cover in reduced graph 2}
For any two laws $L', L$ in a game $(\mathcal{A},\Delta,\mathbb{P})$, an agent $a\in\mathcal{A}$, and a safable action $d\in\Delta_a^L$, the next two statements are equivalent:
\begin{enumerate}
    \item $L'\subseteq L$ and $d$ is a safe action of agent $a$ in the law-imposed game $(\mathcal{A},\Delta^{L'},\mathbb{P}^{L'})$.
    \item $L$ is a vertex cover in the graph $\mathcal{H}^{a,d}_{(\mathcal{A},\Delta,\mathbb{P})}$ and $L'$ is a vertex cover in the subgraph $(L,(\mathcal{E}^{a,d}_{(\mathcal{A},\Delta,\mathbb{P})})^L)$.
\end{enumerate}
\end{lemma}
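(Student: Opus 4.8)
The plan is to derive the equivalence by chaining two results already available: Lemma~\ref{lm:safe action=vertex cover in reduced graph}, which rephrases ``$d$ is a safe action of agent $a$ in a law-imposed game'' as a vertex-cover statement about the $|\mathcal{A}|$-graph $\mathcal{H}^{a,d}_{(\mathcal{A},\Delta,\mathbb{P})}$, and Lemma~\ref{lm:reduced vertex cover in reduced graph}, which equates ``a smaller set that is a vertex cover of the whole graph'' with ``a vertex cover of the subgraph induced by an enclosing vertex cover''. The remaining work is purely bookkeeping: checking that the side conditions of those two lemmas are automatically met under the hypotheses.

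First I would record two preliminary observations. Since $L$ is a law, $L\subseteq\bigcup\Delta$ by Definition~\ref{df:law in game}; and since $d\in\Delta_a^L=\Delta_a\setminus L$, we have $d\notin L$, so $L\subseteq(\bigcup\Delta)\setminus\{d\}=\mathcal{V}^{a,d}_{(\mathcal{A},\Delta,\mathbb{P})}$ by item~\ref{dfitem:safe action verify graph vertex} of Definition~\ref{df:safe action verify graph}. Consequently, whenever $L'\subseteq L$ we also get $d\notin L'$, hence $d\in\Delta_a\setminus L'=\Delta_a^{L'}$ by item~\ref{dfitem:law-imposed game 1} of Definition~\ref{df:law-imposed game}, and $L'\subseteq L\subseteq\mathcal{V}^{a,d}_{(\mathcal{A},\Delta,\mathbb{P})}$.

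For the implication $1\Rightarrow 2$, I would start from $L'\subseteq L$ together with the assumption that $d$ is a safe action of agent $a$ in $(\mathcal{A},\Delta^{L'},\mathbb{P}^{L'})$. By the forward direction of Lemma~\ref{lm:safe action=vertex cover in reduced graph}, $L'$ is a vertex cover in $\mathcal{H}^{a,d}_{(\mathcal{A},\Delta,\mathbb{P})}$. With the preliminary observations giving $L',L\subseteq\mathcal{V}^{a,d}_{(\mathcal{A},\Delta,\mathbb{P})}$, I would then apply Lemma~\ref{lm:reduced vertex cover in reduced graph} with $(V,E)=\mathcal{H}^{a,d}_{(\mathcal{A},\Delta,\mathbb{P})}$, $C=L$, and $C'=L'$, obtaining that $L$ is a vertex cover in $\mathcal{H}^{a,d}_{(\mathcal{A},\Delta,\mathbb{P})}$ and $L'$ is a vertex cover in the subgraph $(L,(\mathcal{E}^{a,d}_{(\mathcal{A},\Delta,\mathbb{P})})^L)$ --- that is statement~2. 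For $2\Rightarrow 1$, the hypothesis that $L'$ is a vertex cover in $(L,(\mathcal{E}^{a,d}_{(\mathcal{A},\Delta,\mathbb{P})})^L)$ already forces $L'\subseteq L$ by item~\ref{dfitem:vertex cover cover} of Definition~\ref{df:vertex cover}, so the preliminary observations apply; Lemma~\ref{lm:reduced vertex cover in reduced graph} then yields that $L'$ is a vertex cover in $\mathcal{H}^{a,d}_{(\mathcal{A},\Delta,\mathbb{P})}$, and the backward direction of Lemma~\ref{lm:safe action=vertex cover in reduced graph} --- invoked with the hypothesis that $d$ is safable and the derived fact $d\in\Delta_a^{L'}$ --- gives that $d$ is a safe action of agent $a$ in $(\mathcal{A},\Delta^{L'},\mathbb{P}^{L'})$, which together with $L'\subseteq L$ completes statement~1.

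I do not expect a genuine obstacle here; the argument is essentially a composition of Lemma~\ref{lm:safe action=vertex cover in reduced graph} and Lemma~\ref{lm:reduced vertex cover in reduced graph}, parallel to how Lemma~\ref{lm:reduce gap-free reduction to vertex cover in reduced graph 1} composes Theorem~\ref{th:useful=vertex cover} with Lemma~\ref{lm:reduced vertex cover in reduced graph}. The one point that deserves to be written out carefully is the automatic transfer, under $L'\subseteq L$ and $d\in\Delta_a^L$, of the side conditions ``$d\in\Delta_a^{L'}$'' and ``$L'\subseteq\mathcal{V}^{a,d}_{(\mathcal{A},\Delta,\mathbb{P})}$'' that Lemma~\ref{lm:safe action=vertex cover in reduced graph} and Lemma~\ref{lm:reduced vertex cover in reduced graph} respectively require.
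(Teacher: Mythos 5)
Your proposal is correct and follows essentially the same route as the paper's proof: both directions are obtained by composing Lemma~\ref{lm:safe action=vertex cover in reduced graph} (applied to the law $L'$) with Lemma~\ref{lm:reduced vertex cover in reduced graph}, using $d\in\Delta_a^L$ and $L'\subseteq L$ to transfer the side conditions $d\notin L$, $L'\subseteq L\subseteq\mathcal{V}^{a,d}_{(\mathcal{A},\Delta,\mathbb{P})}$, and $d\in\Delta_a^{L'}$. Your explicit check that $L,L'\subseteq\mathcal{V}^{a,d}_{(\mathcal{A},\Delta,\mathbb{P})}$ (rather than merely $\subseteq\bigcup\Delta$) is if anything slightly more careful than the paper's write-up.
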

\begin{proof}
Note that, by the assumption $L', L$ are two laws of the game $(\mathcal{A},\Delta,\mathbb{P})$ and Definition~\ref{df:law in game},
\begin{equation}\textstyle\label{eq:7-11-9}
    L',L\subseteq\bigcup\Delta.
\end{equation}

($1\Rightarrow 2$)
By Lemma~\ref{lm:safe action=vertex cover in reduced graph} and item~1 above, $L'$ is a vertex cover in the graph $\mathcal{H}^{a,d}_{(\mathcal{A},\Delta,\mathbb{P})}$.
Then, item~2 above is true by Lemma~\ref{lm:reduced vertex cover in reduced graph}, statement~\eqref{eq:7-11-9}, and the assumption $L'\subseteq L$ in item~1 above.

($1\Leftarrow 2$)
By Lemma~\ref{lm:reduced vertex cover in reduced graph} and item~2 above, $L'\subseteq L$ and
\begin{equation}\label{eq:7-10-3}
    \text{$L'$ is a vertex cover in the graph $\mathcal{H}^{a,d}_{(\mathcal{A},\Delta,\mathbb{P})}$.}
\end{equation}
Then, $d\in\Delta_a^L\subseteq\Delta_a^{L'}$ by item~\ref{dfitem:law-imposed game 1} of Definition~\ref{df:law-imposed game} and the assumption $d\in\Delta_a^L$ of the lemma.
Thus, $d$ is a safe action of agent $a$ in the law-imposed game $(\mathcal{A},\Delta^{L'},\mathbb{P}^{L'})$ by Lemma~\ref{lm:safe action=vertex cover in reduced graph}, statement~\eqref{eq:7-10-3}, and the assumption of the lemma that action $d$ is safable.
\end{proof}

\begin{lemma}\label{lm:reduce gap-free reduction to vertex cover in reduced graph 3}
For any two laws $L',L$ in a game $(\mathcal{A},\Delta,\mathbb{P})$, an agent $a\in\mathcal{A}$, and a safable action $d\in\Delta_a\cap L$, the next two statements are equivalent:
\begin{enumerate}
    \item $L'\subseteq L$ and $d$ is a safe action of agent $a$ in the law-imposed game $(\mathcal{A},\Delta^{L'},\mathbb{P}^{L'})$.
    \item $L\setminus\{d\}$ is a vertex cover in the graph $\mathcal{H}^{a,d}_{(\mathcal{A},\Delta,\mathbb{P})}$ and $L'$ is a vertex cover in the subgraph $(L\setminus\{d\},(\mathcal{E}^{a,d}_{(\mathcal{A},\Delta,\mathbb{P})})^{L\setminus\{d\}})$.
\end{enumerate}
\end{lemma}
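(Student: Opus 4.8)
The plan is to prove this the same way Lemma~\ref{lm:reduce gap-free reduction to vertex cover in reduced graph 2} is set up, but with the banned action $d$ \emph{absorbed}: since here $d\in L$ whereas requiring $d$ to be lawful under $L'$ forces $d\notin L'$, the role played by $L$ in Lemma~\ref{lm:reduce gap-free reduction to vertex cover in reduced graph 2} is now taken over by $L\setminus\{d\}$. First I would record the basic facts: $L',L\subseteq\bigcup\Delta$ by Definition~\ref{df:law in game}, and $d\in\Delta_a$ by the hypothesis $d\in\Delta_a\cap L$, so that the graph $\mathcal{H}^{a,d}_{(\mathcal{A},\Delta,\mathbb{P})}$ is well-defined with $\mathcal{V}^{a,d}_{(\mathcal{A},\Delta,\mathbb{P})}=(\bigcup\Delta)\setminus\{d\}$ by Definition~\ref{df:safe action verify graph}.

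Next, using that $d$ is safable, I would apply Lemma~\ref{lm:safe action=vertex cover in reduced graph} with $L'$ in place of $L$ to rewrite ``$d$ is a safe action of agent $a$ in the law-imposed game $(\mathcal{A},\Delta^{L'},\mathbb{P}^{L'})$'' as the conjunction ``$d\in\Delta_a^{L'}$ and $L'$ is a vertex cover in $\mathcal{H}^{a,d}_{(\mathcal{A},\Delta,\mathbb{P})}$''. Since $d\in\Delta_a$, the condition $d\in\Delta_a^{L'}$ is just $d\notin L'$ by item~\ref{dfitem:law-imposed game 1} of Definition~\ref{df:law-imposed game}. Hence statement~1 becomes: $L'\subseteq L$, $d\notin L'$, and $L'$ is a vertex cover in $\mathcal{H}^{a,d}_{(\mathcal{A},\Delta,\mathbb{P})}$; and $L'\subseteq L$ together with $d\notin L'$ collapses to $L'\subseteq L\setminus\{d\}$.

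It then remains to apply Lemma~\ref{lm:reduced vertex cover in reduced graph} to the graph $(V,E)=\mathcal{H}^{a,d}_{(\mathcal{A},\Delta,\mathbb{P})}$ with $C=L\setminus\{d\}$ and $C'=L'$; the induced subgraph of Definition~\ref{df:vertex set reduced graph} is then exactly $(L\setminus\{d\},(\mathcal{E}^{a,d}_{(\mathcal{A},\Delta,\mathbb{P})})^{L\setminus\{d\}})$, so Lemma~\ref{lm:reduced vertex cover in reduced graph} gives the equivalence between ``$L'\subseteq L\setminus\{d\}$ and $L'$ is a vertex cover of $\mathcal{H}^{a,d}_{(\mathcal{A},\Delta,\mathbb{P})}$'' and statement~2, which finishes the argument. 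The only point needing care is verifying the standing hypothesis $C,C'\subseteq V$ of Lemma~\ref{lm:reduced vertex cover in reduced graph}: $L\setminus\{d\}\subseteq(\bigcup\Delta)\setminus\{d\}=V$ is immediate, while $L'\subseteq V$ holds because $d\notin L'$ in both directions --- from the explicit conjunct in statement~1, and because $L'$ being a vertex cover of the subgraph forces $L'\subseteq L\setminus\{d\}$ in statement~2. Beyond this $d$-membership bookkeeping, the proof is routine and I foresee no real obstacle.
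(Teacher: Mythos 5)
Your proposal is correct and follows essentially the same route as the paper's proof: both reduce statement~1 to ``$L'\subseteq L\setminus\{d\}$ and $L'$ is a vertex cover of $\mathcal{H}^{a,d}_{(\mathcal{A},\Delta,\mathbb{P})}$'' via Lemma~\ref{lm:safe action=vertex cover in reduced graph} (using safability and $d\in\Delta_a$ to turn $d\in\Delta_a^{L'}$ into $d\notin L'$), and then invoke Lemma~\ref{lm:reduced vertex cover in reduced graph} with $C=L\setminus\{d\}$, $C'=L'$. Your care in checking $C',C\subseteq\mathcal{V}^{a,d}_{(\mathcal{A},\Delta,\mathbb{P})}$ on both sides of the equivalence matches the bookkeeping the paper does explicitly.
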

\begin{proof}
Note that, by the assumption that $L', L$ are two laws of the game $(\mathcal{A},\Delta,\mathbb{P})$ and Definition~\ref{df:law in game},
\begin{equation}\textstyle\label{eq:7-11-10}
    L',L\subseteq\bigcup\Delta.
\end{equation}

($1\Rightarrow 2$)
Note that $d\in\Delta_a^{L'}$ by the second part of item~1 above and Lemma~\ref{lm:safe action in law imposed game}. Then, $d\notin L'$ by item~\ref{dfitem:law-imposed game 1} of Definition~\ref{df:law-imposed game}.
Thus, $d\in L\setminus L'$ by the assumption $d\in\Delta_a\cap L$ of the lemma.
Hence, by the first part of item~1 above,
\begin{equation}\label{eq:7-30-3}
    L'\subseteq L\setminus\{d\}.
\end{equation}
Moreover, $L\setminus\{d\}\subseteq (\bigcup\Delta)\setminus\{d\}$ by statement~\eqref{eq:7-11-10}.
Then, by statement~\eqref{eq:7-30-3} and item~\ref{dfitem:safe action verify graph vertex} of Definition~\ref{df:safe action verify graph},
\begin{equation}\textstyle\label{eq:7-11-11}
    L'\subseteq L\setminus\{d\}\subseteq \mathcal{V}^{a,d}_{(\mathcal{A},\Delta,\mathbb{P})}.
\end{equation}
Besides, by Lemma~\ref{lm:safe action=vertex cover in reduced graph} and the second part of item~1 above,
\begin{equation}\label{eq:7-18-2}
    \text{$L'$ is a vertex cover in the graph $\mathcal{H}^{a,d}_{(\mathcal{A},\Delta,\mathbb{P})}$}.
\end{equation}
Then, item~2 above follows from statements~\eqref{eq:7-11-11} and \eqref{eq:7-18-2} by Lemma~\ref{lm:reduced vertex cover in reduced graph} and Definition~\ref{df:safe action verify graph}.

($1\Leftarrow 2$)
By item~\ref{dfitem:vertex cover cover} of Definition~\ref{df:vertex cover} and item~2 above,
\begin{equation}\textstyle\label{eq:7-18-3}
    L'\subseteq L\setminus\{d\}\subseteq \mathcal{V}^{a,d}_{(\mathcal{A},\Delta,\mathbb{P})}.
\end{equation}
Then, $L'\subseteq L$ and $d\notin L'$.
Thus, by item~\ref{dfitem:law-imposed game 1} of Definition~\ref{df:law-imposed game} and the assumption $d\in\Delta_a\cap L$ of the lemma,
\begin{equation}\label{eq:7-30-4}
    d\in\Delta_a\setminus L'=\Delta_a^{L'}.
\end{equation}
Moreover, by Lemma~\ref{lm:reduced vertex cover in reduced graph} and statement~\eqref{eq:7-18-3}, item~2 above implies that $L'$ is a vertex cover in the graph $\mathcal{H}^{a,d}_{(\mathcal{A},\Delta,\mathbb{P})}$.
This, together with statement~\eqref{eq:7-30-4} and the assumption of the lemma that action $d$ is safable, further implies that $d$ is a safe action of agent $a$ in the law-imposed game $(\mathcal{A},\Delta^{L'},\mathbb{P}^{L'})$ by Lemma~\ref{lm:safe action=vertex cover in reduced graph}.
\end{proof}

\noindent\textbf{Theorem~\ref{th:reduce gap-free reduction to vertex cover in reduced graph}. }
\textit{For a gap-free law $L$ in a game $(\mathcal{A},\Delta,\mathbb{P})$, a law $L'$ is a gap-free reduction of $L$ if and only if at least one of the following statements is true:
\begin{enumerate}
    \item $L$ is a vertex cover of the graph $(\bigcup\Delta,\mathcal{S}(\mathbb{P}))$ and $L'$ is a vertex cover in the subgraph $(L,\mathcal{S}(\mathbb{P})^L)$;\label{ppitem:reduce gap-free reduction to vertex cover in reduced graph 1}
    \item there is an agent $a\in\mathcal{A}$ and a safable action $d\in\Delta_a^L$ such that $L$ is a vertex cover in the graph $\mathcal{H}^{a,d}_{(\mathcal{A},\Delta,\mathbb{P})}$ and $L'$ is a vertex cover in the subgraph $(L,(\mathcal{E}^{a,d}_{(\mathcal{A},\Delta,\mathbb{P})})^L)$;\label{ppitem:reduce gap-free reduction to vertex cover in reduced graph 2}
    \item there is an $a\!\in\!\mathcal{A}$ and a safable action $d\!\in\!\Delta_a\!\cap\! L$ such that $L\!\setminus\!\{d\}$ is a vertex cover in graph $\mathcal{H}^{a,d}_{(\mathcal{A},\Delta,\mathbb{P})}$ and $L'$ is a vertex cover in the subgraph $(L\!\setminus\!\{d\},(\mathcal{E}^{a,d}_{(\mathcal{A},\Delta,\mathbb{P})})^{L\setminus\{d\}})$.\label{ppitem:reduce gap-free reduction to vertex cover in reduced graph 3}
\end{enumerate}}
\begin{proof}
($\Rightarrow$)
Suppose $L'$ is a gap-free reduction of $L$.
Then, 
\begin{equation}\textstyle\label{eq:7-11-12}
    L'\subseteq L\subseteq\bigcup\Delta
\end{equation}
and $L'$ is gap-free in the game $(\mathcal{A},\Delta,\mathbb{P})$ by Definition~\ref{df:law in game} and item~\ref{dfitem:gap free reduction} of Definition~\ref{df:minimum gap free reduction}.
Thus, by Lemma~\ref{lm:gap free law}, one of the next cases must be true:

Case 1: $L'$ is useful in the game $(\mathcal{A},\Delta,\mathbb{P})$.
Then, item~\ref{ppitem:reduce gap-free reduction to vertex cover in reduced graph 1} of the theorem is true by statement~\eqref{eq:7-11-12} and Lemma~\ref{lm:reduce gap-free reduction to vertex cover in reduced graph 1}.

Case 2: there is an agent $a\in\mathcal{A}$ and a safe action $d$ of agent $a$ in the law-imposed game $(\mathcal{A},\Delta^{L'},\mathbb{P}^{L'})$.
Then, 
\begin{equation}\label{eq:7-11-13}
    d\in\Delta_a^{L'}
\end{equation}
by Lemma~\ref{lm:safe action in law imposed game}, and 
\begin{equation}\label{eq:7-18-1}
    \text{action $d$ is safable}
\end{equation}
by Definition~\ref{df:safable action}.
Note that $\Delta_a^{L'}\supseteq\Delta_a^{L}$ by statement~\eqref{eq:7-11-12} and item~\ref{dfitem:law-imposed game 1} of Definition~\ref{df:law-imposed game}.
Then, by statement~\eqref{eq:7-11-13}, one of the next two subcases must be true:

Subcase 2.1: $d\in\Delta_a^{L}$.
Then, by Lemma~\ref{lm:reduce gap-free reduction to vertex cover in reduced graph 2}, item~\ref{ppitem:reduce gap-free reduction to vertex cover in reduced graph 2} of the theorem follows from statements~\eqref{eq:7-11-12}, \eqref{eq:7-18-1}, and the Case~2 assumption.

Subcase 2.2: $d\in\Delta_a^{L'}\setminus\Delta_a^{L}$.
Note that $$\Delta_a^{L'}\setminus\Delta_a^{L}=\Delta_a\cap(L\setminus L')\subseteq\Delta_a\cap L.$$
Then, $d\in\Delta_a\cap L$ by the subcase assumption.
Thus, by Lemma~\ref{lm:reduce gap-free reduction to vertex cover in reduced graph 3}, item~\ref{ppitem:reduce gap-free reduction to vertex cover in reduced graph 3} of the theorem follows from statement~\eqref{eq:7-11-12}, \eqref{eq:7-18-1}, and the Case~2 assumption.

($\Leftarrow$)
Note that, by item~\ref{dfitem:gap free reduction} of Definition~\ref{df:minimum gap free reduction}, it suffices to show that, if any of the three items of the theorem is true, then $L'\subseteq L$ and $L'$ is a gap-free law in the game $(\mathcal{A},\Delta,\mathbb{P})$.

If item~\ref{ppitem:reduce gap-free reduction to vertex cover in reduced graph 1} is true, then $L'\subseteq L$ and $L'$ is useful in the game $(\mathcal{A},\Delta,\mathbb{P})$ by Lemma~\ref{lm:reduce gap-free reduction to vertex cover in reduced graph 1} and the assumption that $L,L'$ are both laws in the game.
Thus, $L'$ is a gap-free law in the game $(\mathcal{A},\Delta,\mathbb{P})$ by Lemma~\ref{lm:gap free law}.

If item~\ref{ppitem:reduce gap-free reduction to vertex cover in reduced graph 2} is true, then $L'\subseteq L$ and $d$ is a safe action of agent $a$ in the game $(\mathcal{A},\Delta^{L'},\mathbb{P}^{L'})$ by Lemma~\ref{lm:reduce gap-free reduction to vertex cover in reduced graph 2} and the assumption that $L,L'$ are both laws in the game.
Thus, $L'$ is a gap-free law in the game $(\mathcal{A},\Delta,\mathbb{P})$ by Lemma~\ref{lm:gap free law}.

If item~\ref{ppitem:reduce gap-free reduction to vertex cover in reduced graph 3} is true, then $L'\subseteq L$ and $d$ is a safe action of agent $a$ in the game $(\mathcal{A},\Delta^{L'},\mathbb{P}^{L'})$ by Lemma~\ref{lm:reduce gap-free reduction to vertex cover in reduced graph 3} and the assumption that $L,L'$ are both laws in the game.
Thus, $L'$ is a gap-free law in the game $(\mathcal{A},\Delta,\mathbb{P})$ by Lemma~\ref{lm:gap free law}.
\end{proof}

\subsection{Approximation Factor of $\mathtt{AppMinGFR}$}\label{app_sec:MinGFR approximation factor}

First, observe that all graphs in the statement of Theorem~\ref{th:reduce gap-free reduction to vertex cover in reduced graph} are $|\mathcal{A}|$-graphs by Lemma~\ref{lm:game to graph}, Definition~\ref{df:vertex set reduced graph}, and Definition~\ref{df:safe action verify graph}.
This means, all the $\mathtt{AppMinVC}$ input graphs in lines~35, 40, and 43 of Algorithm~\ref{alg:GFL design} are $|\mathcal{A}|$-graphs.

Moreover, by Theorem~\ref{th:reduce gap-free reduction to vertex cover in reduced graph}, the minimum gap-free reduction (say $R$) must be a minimum vertex cover in at least one of the $\mathtt{AppMinVC}$ input graphs (say the minimum vertex cover $C$ of graph $G$) in lines~35, 40, and 43 of Algorithm~\ref{alg:GFL design}:
\begin{equation}\label{eq:7-29-1}
    |R|=|C|.
\end{equation}
Then, by the assumption about the approximation factor of $\mathtt{AppMinVC}$,  the execution of $\mathtt{AppMinVC}$ on graph $G$ returns a vertex cover (say $C'$) whose size is at most $|\mathcal{A}|$ times the size of the minimum vertex cover $C$ in graph $G$:
\begin{equation}\label{eq:7-29-2}
    |C'|\leq |\mathcal{A}|\times |C|.
\end{equation}
Moreover, by lines~33, 35, 40, and 43 of Algorithm~\ref{alg:GFL design}, the size of the final output in line~44 of Algorithm~\ref{alg:GFL design} is no bigger than $|C'|$:
\begin{equation}\label{eq:7-29-3}
    |\mathit{output}|\leq |C'|.
\end{equation}
Hence, by statements~\eqref{eq:7-29-1}, \eqref{eq:7-29-2}, and \eqref{eq:7-29-3},
\begin{equation}\notag
    |output|\leq |\mathcal{A}|\times |R|,
\end{equation}
which means the size of the final output of $\mathtt{AppMinGFR}$ in Algorithm~\ref{alg:GFL design} is at most $|\mathcal{A}|$ times the size of the minimum gap-free reduction.
Therefore, $\mathtt{AppMinGFR}$ in Algorithm~\ref{alg:GFL design} is an $|\mathcal{A}|$-approximation of the problem $\mathbf{MinGFR}$.

\subsection{Algrithms for the $\mathbf{GFL}$ Problems}\label{app_sec:algorithms for GFL problems}

(See the next page.)

\SetKwComment{Comment}{//}{}
\begin{algorithm*}[bth]
\caption{Algorithms for the $\mathbf{GFL}$ Problems}
\label{alg:GFL design}
\SetKwFunction{FMain}{$\mathtt{IsSafable}$}
\SetKwProg{Fn}{}{:}{}
\Fn{\FMain{game $(\mathcal{A},\Delta,\mathbb{P})$, action $d$}}{
    \If{$d\notin\bigcup\Delta$}{
        \Return \textit{false}\;
    }
    \For{each profile $\delta\in\mathbb{P}$}{
        \If{$\mathcal{S}(\delta)=\{d\}$}{
            \Return \textit{false}\;
        }
    }
    \Return \textit{true}\;
}

\SetKwFunction{FMain}{$\mathtt{IsGFL}$}
\SetKwProg{Fn}{}{:}{}
\Fn{\FMain{game $(\mathcal{A},\Delta,\mathbb{P})$, set $L$}}{
    \If({\small\Comment*[f]{item~\ref{thitem:reduce gap-free to multiple vertex cover 1} of Theorem~\ref{th:reduce gap-free to multiple vertex cover}}}\normalsize){$\mathtt{IsVC}$ ($(\bigcup\Delta,\mathcal{S}(\mathbb{P}))$, $L$)}{
        \Return \textit{true}\;
    }
    \For({\small\Comment*[f]{item~\ref{thitem:reduce gap-free to multiple vertex cover 2} of Theorem~\ref{th:reduce gap-free to multiple vertex cover}}}\normalsize){each agent $a\in\mathcal{A}$ and each action $d\in\Delta_a$}{
        \If{$d\notin L$ and $\mathtt{IsSafable}$ ($(\mathcal{A},\Delta,\mathbb{P})$,$d$)}{
            \If{$\mathtt{IsVC}$ (graph $\mathcal{H}^{a,d}_{(\mathcal{A},\Delta,\mathbb{P})}$, set $L$)}{
                \Return \textit{true}\;
            }
        }
    }
    \Return \textit{false} \;
}

\SetKwFunction{FMain}{$\mathtt{IsMiniGFL}$}
\SetKwProg{Fn}{}{:}{}
\Fn{\FMain{game $(\mathcal{A},\Delta,\mathbb{P})$, set $L$}}{
    \If({\small\Comment*[f]{Check if set $L$ is a gap-free law}\normalsize}){not $\mathtt{IsGFL}$ ($(\mathcal{A},\Delta,\mathbb{P})$, $L$)}{
        \Return \textit{false}\;
    }
    \If({\small\Comment*[f]{item~\ref{thitem:reduce minimal gap-free to multiple minimal vertex cover 1} of Theorem~\ref{th:reduce minimal gap-free to multiple minimal vertex cover}}\normalsize}){$\mathtt{IsVC}$ ($(\bigcup\Delta,\mathcal{S}(\mathbb{P}))$, $L$) and not $\mathtt{IsMiniVC}$ ($(\bigcup\Delta,\mathcal{S}(\mathbb{P}))$, $L$)}{
        \Return \textit{false}\;
    }
    \For{each agent $a\in\mathcal{A}$ and each action $d\in\Delta_a$}{
        \If{$\mathtt{IsSafable}$ ($(\mathcal{A},\Delta,\mathbb{P})$,$d$)}{
            \eIf({\small\Comment*[f]{item~\ref{thitem:reduce minimal gap-free to multiple minimal vertex cover 2} of Theorem~\ref{th:reduce minimal gap-free to multiple minimal vertex cover}}\normalsize}){$d\notin L$}{
                \If{$\mathtt{IsVC}$ ($\mathcal{H}^{a,d}_{(\mathcal{A},\Delta,\mathbb{P})}$, $L$) and not $\mathtt{IsMiniVC}$ ($\mathcal{H}^{a,d}_{(\mathcal{A},\Delta,\mathbb{P})}$, $L$)}{
                    \Return \textit{false}\;
                }
            }({\small\Comment*[f]{item~\ref{thitem:reduce minimal gap-free to multiple minimal vertex cover 3} of Theorem~\ref{th:reduce minimal gap-free to multiple minimal vertex cover}}\normalsize}){
                \If{$\mathtt{IsVC}$ ($\mathcal{H}^{a,d}_{(\mathcal{A},\Delta,\mathbb{P})}$, $L\setminus\{d\}$)}{
                    \Return \textit{false}\;
                }
            }
            
        }
    }
    \Return \textit{true} \;
}

\SetKwFunction{FMain}{$\mathtt{AppMinGFR}$}
\SetKwProg{Fn}{}{:}{}
\Fn{\FMain{game $(\mathcal{A},\Delta,\mathbb{P})$, set $L$}}{
    \If({\small\Comment*[f]{Check if set $L$ is a gap-free law}\normalsize}){not $\mathtt{IsGFL}$ ($(\mathcal{A},\Delta,\mathbb{P})$, $L$)}{
        \Return \textit{false}\;
    }
    $\mathit{output}\leftarrow L$\;
    \If({\small\Comment*[f]{item~\ref{thitem:reduce gap-free reduction to vertex cover in reduced graph 1} of Theorem~\ref{th:reduce gap-free reduction to vertex cover in reduced graph}}\normalsize}){$\mathtt{IsVC}$ ($(\bigcup\Delta,\mathcal{S}(\mathbb{P}))$, $L$)}{
        $\mathit{output}\leftarrow \min\{\mathit{output},\mathtt{AppMinVC} (L,\mathcal{S}(\mathbb{P})^L)\}$\;
    }
    \For{each agent $a\in\mathcal{A}$ and each action $d\in\Delta_a$}{
        \If{$\mathtt{IsSafable}$ ($(\mathcal{A},\Delta,\mathbb{P})$,$d$)}{
            \eIf({\small\Comment*[f]{item~\ref{thitem:reduce gap-free reduction to vertex cover in reduced graph 2} of Theorem~\ref{th:reduce gap-free reduction to vertex cover in reduced graph}}\normalsize}){$d\notin L$}{
                \If{$\mathtt{IsVC}$ ($\mathcal{H}^{a,d}_{(\mathcal{A},\Delta,\mathbb{P})}$, $L$)}{
                    $\mathit{output}\leftarrow \min\{\mathit{output},\mathtt{AppMinVC} (L,(\mathcal{E}^{a,d}_{(\mathcal{A},\Delta,\mathbb{P})})^L)\}$\;
                }
            }({\small\Comment*[f]{item~\ref{thitem:reduce gap-free reduction to vertex cover in reduced graph 3} of Theorem~\ref{th:reduce gap-free reduction to vertex cover in reduced graph}}\normalsize}){
                \If{$\mathtt{IsVC}$ ($\mathcal{H}^{a,d}_{(\mathcal{A},\Delta,\mathbb{P})}$, $L\setminus\{d\}$)}{
                    $\mathit{output}\leftarrow \min\{\mathit{output},\mathtt{AppMinVC} (L\setminus\{d\},(\mathcal{E}^{a,d}_{(\mathcal{A},\Delta,\mathbb{P})})^{L\setminus\{d\}})\}$\;
                }
            }
        }
    }
    \Return $\mathit{output}$ \;
}
\end{algorithm*}

\end{document}